\title{Explicit no arbitrage domain for sub-SVIs via reparametrization}
\author[1]{Claude Martini\thanks{cmartini@zeliade.com}}
\author[1, 2]{Arianna Mingone\thanks{arianna.mingone@polytechnique.edu}}
\affil[1]{Zeliade Systems, 56 rue Jean-Jacques Rousseau, Paris, France}
\affil[2]{Centre de Math\'ematiques Appliqu\'ees (CMAP), CNRS, Ecole Polytechnique, Institut Polytechnique de Paris}
\newtheorem{theorem}{Theorem}[section]
\newtheorem{lemma}[theorem]{Lemma}
\newtheorem{proposition}[theorem]{Proposition}
\newtheorem{remark}[theorem]{Remark}
\begin{document}
    
    \maketitle

	\begin{abstract}

The no Butterfly arbitrage domain of Gatheral SVI 5-parameters formula for the volatility smile has been recently described. It requires in general a numerical minimization of 2 functions altogether with a few root finding procedures. We study here the case of some sub-SVIs (all with 3 parameters): the Symmetric SVI, the Vanishing Upward/Downward SVI, and SSVI, for which we provide an explicit domain, with no numerical procedure required. 

\end{abstract}

\section{Introduction}\label{introduction}

	Gatheral SVI parametrization of a volatility smile reads:
\[w(k) = a+ b \bigl( \rho(k-m)+ \sqrt{(k-m)^2+\sigma^2} \; \bigl)\]
where \(w(k)\) is the so-called \emph{total variance}: the squared
implied volatility times the remaining time to maturity at the
log-forward moneyness \(k\).

	SVI is known to fit very well a large set of market data, and recently
the challenging question of characterizing no Butterfly arbitrage in SVI
has been solved with a handy parametrization of the no arbitrage
domain, leading to an efficient implementation of a calibration
algorithm ensuring no Butterfly arbitrage (\(\!\)\cite{martini2020no}).
So why would one care about \emph{sub}-SVIs, meaning SVI with some
parameters frozen, or re-parametrizations of SVI with less than 5
parameters, like the volatility slices of SSVI surfaces of Gatheral and
Jacquier (\(\!\)\cite{gatheral2014arbitrage})?

	In fact, for several reasons. The 1st one is that SVI might be \emph{too
rich} in the sense that an excellent fit could also be achieved in most
cases by sub-SVIs, with the additional benefit to stabilize the
variation of the calibrated optimal parameter from one day to another,
or between different maturity slices. A good theoretical reason to
suspect this follows from considering SVI smiles with \(\rho=0\) and
\(m \neq0\): indeed the correspondence with stochastic volatility models
dictates that models with a zero correlation should yield symmetric
smiles, which implies \(m=0\); in this sense SVI smiles with \(\rho=0\)
and \(m \neq 0\) should correspond to smiles which are not associated to
stochastic volatility models, and also probably not very likely to be met in
real-life market data. In this direction, one could also note that the
result by Gatheral and Jacquier (\(\!\)\cite{gatheral2011convergence})
that the Long Term Heston smile goes to SVI shows in fact that it goes
to a sub-SVI; indeed the SVI parameters are given by
\begin{align*}
	a = \frac{\theta}2 (1-\rho^2), & & b=\frac{\theta\varphi}{2}, & & \rho, & & m= -\frac{\rho}{\varphi}, & & \sigma=\frac{\sqrt{1-\rho^2}}{\varphi},
\end{align*}
so that the long-term smile depends eventually only on the 3 parameters $(\theta, \varphi, \rho)$, with the constraint $\rho=0 \implies m=0$ enforced.

	The 2nd reason is that it is difficult to obtain no Calendar Spread
arbitrage conditions on two SVI smiles attached to two different
maturities, as discussed in \cite{gatheral2014arbitrage}. This has been
achieved for smiles corresponding to SSVI parameters, which are sub-SVI
ones with 3 parameters instead of 5, as shown in
\cite{hendriks2017extended}. So in order to obtain tractable no
arbitrage SVI \emph{surfaces}, it may be required in practice to
restrict the set of SVI parameters. Note that, in relation to the 1st
point above, given the poor ability of SSVI to fit especially on the
short term, the right balance between fitting ability and tractability
might lie in-between SVI and SSVI, in some sub-SVI with 4 parameters.

	Another hope is that there might be simplifications, due to the special
structure of the different sub-SVIs under study, both in the
\emph{Fukasawa domain} which characterizes a weak no Butterfly arbitrage
property and constitutes a key step in the full characterization of no
Butterfly arbitrage, and in the last stage of the algorithm in
\cite{martini2020no}, which requires 2 minimizations to get the lower
bound of the no arbitrage domain for the SVI parameter \(\sigma\). We
recall that the computation of the boundary of the \emph{Fukasawa
domain}, which corresponds to the weak \emph{necessary condition} of no
arbitrage obtained by\ldots{} Fukasawa
(\(\!\)\cite{fukasawa2012normalizing}), requires also some root-finding
algorithm in general, much quicker than the last stage minimization
algorithms for \(\sigma\) though.

	In this work we study in detail the following \emph{sub-SVIs}:

\begin{enumerate}
\def\labelenumi{\arabic{enumi}.}
\item
  The \emph{Vanishing SVI} where \(a=0\) and \(\rho=\pm1\), with its 2
  flavors \emph{Vanishing upward} (\(\rho=1\)), and \emph{Vanishing
  downward} (\(\rho=-1\)); the second family may correspond to real-life
  smiles and is given by
  \[w(k) = b\bigl(-(k-m) + \sqrt{ (k-m)^2 + \sigma^2}\; \bigr)\]
\item
  The \emph{Symmetric SVI} where \(\rho=m=0\), so that:
  \[w(k) = a+ b\sqrt{ k^2 + \sigma^2}\]
\item
  SSVI (slices) given by:
  \[w(k) = \frac{\theta}{2} \bigl( 1 + \rho \varphi k +\sqrt{(\varphi k + \rho)^2+1-\rho^2} \bigr)\]
\end{enumerate}

	For all those sub-SVIs, our program is clear: trying to find out an
\emph{explicit} parametrization for the Fukasawa domain and, when
possible, of the full no arbitrage domain. Besides brute-force calculus
(that we do use in many circumstances), we heavily use a
\emph{re-parametrization} paradigm, that we illustrate below in the
context of the bound on \(\sigma\) discussed above.

	This stage requires indeed to compute the minimum of a function
\(\tilde f(l; \gamma,b,\rho,\mu)\), on an interval which depends on the
parameters \((\gamma,\rho)\); we know that \(\tilde f\) goes to infinity
at the bounds of this interval. Our strategy is then to study the
\emph{critical points} of \(\tilde f\). It turns out that in full
generality, the equation characterizing those critical points \(\bar l\)
reads: \[p(\bar l;\gamma,\rho,\mu) = b^2 q(\bar l;\gamma,\rho,\mu).\] In
some circumstances, we can then \emph{use \(\bar l\)} as a parameter,
and obtain \(b^2\) as
\(\frac{p(\bar l;\gamma,\rho,\mu)}{q(\bar l;\gamma,\rho,\mu)}\). In
other circumstances, we use the fact that the equation \(p-b^2 q=0\) is,
in full generality, \emph{quadratic} in \(\mu\), and we use the same
trick to back up \(\mu\) once \(\bar l\) is promoted to the status of a
parameter.

Now this is only the easy part of the story, since the critical points
of \(\tilde f\) may correspond to other local minima than the absolute
one, or, even worse, to local maxima. The hard part is to show that the
chosen critical point \(\bar l\), in a given domain, corresponds indeed
to the global minimum of \(\tilde f\); one way to prove this is that
there is a \emph{unique} solution to the critical points equation above.
We manage to prove this unicity for the Symmetric and Vanishing case,
and resort to a \emph{numerical proof} for SSVI.

	This leads us to the following results:

\begin{itemize}
\item
  in \cref{vanishing-svi}, we obtain a fully explicit parametrization of
  the no arbitrage domain for the \emph{Vanishing Upward} and the
  \emph{Vanishing Downward} SVI
  (\Cref{PropVanishingUp,PropVanishingDown});
\item
  in \cref{extremal-decorrelated-svi}, we find a parametrization of the no arbitrage domain for the \emph{Extremal Decorrelated} SVI (\Cref{PropExtremal});
\item
  in \cref{symmetric-svi}, we get the no arbitrage domain for the \emph{Symmetric} SVI (\Cref{PropSymmetric});
\item
  in \cref{ssvi}, we derive a quasi-explicit parametrization (modulo a
  zero of a one-dimensional function to be computed numerically) of the
  no arbitrage domain for SSVI (\Cref{PropSSVI}). We also re-visit the
  Long Term Heston SVI appoximation and show it is in fact of SSVI type;
  we prove in \Cref{propLTH} that it is indeed free of Butterfly
  arbitrage as soon as \(T\) is larger than some fully explicit
  threshold, which completes Gatheral and Jacquier result.
\end{itemize}

	Up to now and to the best of our knowledge, the only known
\emph{volatility model} (meaning: a formula for the implied volatility)
with an explicit no arbitrage domain was the SSVI slice, with the
(restrictive) conditions obtained by Gatheral and Jacquier
(\(\!\)\cite{gatheral2014arbitrage}), and extended to the
characterization of the full no arbitrage domain in the decorrelated
case in \cite{guo2016generalized}. To this extent, the present work is a
big leap forward, since we obtain 3 new families: the Vanishing
Downward/Upward one, the Symmetric one, and the \emph{correlated}
SSVI, with explicit no arbitrage domains (a single-variable boundary
function has to be computed numerically for SSVI).

	All our sub-SVIs are \emph{3 parameters} SVI; as discussed above, 3
parameters may be too little to produce a good fit on market data, so we
would say that the interest here is essentially of academic nature: we
hope that those families can help in the investigation of the
theoretical properties of volatility smiles, or come as handy
illustrations of those properties. Note though that the Vanishing
Downward SVI could have a practical application to the case of
decreasing market smiles which are often encountered for not-too-short
maturities. SSVI is used in practice; our investigation yields a
parametrization of the SSVI slices satisfying the no Butterfly arbitrage
property which is much more effective than the generic one presented in
\cite{martini2020no}.

We warmly thank Stefano De Marco and Antoine Jacquier for their remarks and comments on a preliminary version of this work.

\section{Notations and
preliminaries}\label{notations-and-preliminaries}

\subsection{Necessary and sufficient no Butterfly arbitrage conditions
for
SVI}\label{necessary-and-sufficient-no-butterfly-arbitrage-conditions-for-svi}

	In the whole article, we will refer to the results in
\cite{martini2020no}, that we quickly summarize here, with some changes
in notations.

	The general form for SVI is
\[SVI(k) = a + b\bigl(\rho(k-m) + \sqrt{(k-m)^2+\sigma^2}\bigr)\] where
\(a,\ m\in\mathbb{R}\), \(b\geq 0\), \(\rho\in[-1,1]\), \(\sigma\geq0\).
If \(b=0\) and $a>0$, SVI reduces to the Black-Scholes case, which is free of
arbitrage, while if $a=b=0$, we recover the trivial case. In the following we will consider \(b>0\).

We set \(\gamma=\frac{a}{b\sigma}\) and \(\mu=\frac{m}{\sigma}\). Let us
redefine the quantity
\[N(l;\gamma,\rho) := \gamma + \rho l+\sqrt{l^2+1}\] such that
\(SVI(k) = b\sigma N\bigl(\frac{k-m}{\sigma}\bigr)\). The derivatives of
\(N\) are \begin{align*}
N'(l;\rho) &= \rho + \frac{l}{\sqrt{l^2+1}}, & N''(l) &= \frac{1}{(l^2+1)^{\frac{3}{2}}}
\end{align*} so that \(N\) has a unique critical point which is a point
of minimum equal to \(l^*=-\frac{\rho}{\sqrt{1-\rho^2}}\). Since \(N\)
is a rescaled total variance, it must be positive, so the constraint on
the new variables are \(\gamma\geq-\sqrt{1-\rho^2}\), \(b\geq 0\),
\(\rho\in[-1,1]\), \(\mu\in\mathbb{R}\), \(\sigma\geq0\). When \(b=0\),
we recover the Black-Scholes case, which is free of Butterfly arbitrage
under \(\sigma>0\). Then, when formulating theorems of non-arbitrage, we
consider only the more difficult cases \(b>0\).

We define \begin{equation}\label{eqhgg2}
\begin{aligned}
h(l;\gamma,\rho,\mu) &:= 1-N'(l;\rho)\frac{l+\mu}{2N(l;\gamma,\rho)},\\
g(l;\rho) &:= \frac{N'(l;\rho)}{4},\\
g_2(l;\gamma,\rho) &:= N''(l)-\frac{N'(l;\rho)^2}{2N(l;\gamma,\rho)},
\end{aligned}
\end{equation} and
\(G_1(l) := G_{1+}(l)G_{1-}(l) := (h(l)-bg(l))(h(l)+bg(l))\).

In section 4.1 of \cite{martini2020no}, it is shown that the requirement that an SVI is (Butterfly) arbitrage-free is equivalent
to the requirement that the function \[G_1 + \frac{1}{2\sigma}bg_2\] is
non-negative. Fukasawa proved in \cite{fukasawa2012normalizing} that the
condition \(G_{1+}>0\) and \(G_{1-}>0\) are also necessary. Theorem 5.10
of \cite{martini2020no}, that we rewrite here, characterizes these
conditions in the case of SVI. The statement requires the definition of
the functions: \begin{align*}
L_-(l;\gamma,b,\rho) &:= 2N(l;\gamma,\rho)\bigl(\frac{1}{N'(l;\rho)}+\frac{b}{4}\bigr)-l,\\
g_{-(b,\rho)} &:= \bigl(\rho\sqrt{l^2+1} + l\bigr)^2\bigl(\sqrt{l^2+1}\bigl(\frac{1}{2}+\frac{b \rho}{4}\bigr) + \frac{bl}{4}\bigr)-\bigl(\rho l + \sqrt{l^2+1}\bigr).
\end{align*} When \(\gamma+\sqrt{1-\rho^2}>0\) and \(|\rho|<1\),
preliminary propositions show that under the case \(b(1\pm\rho)<2\),
there exist a unique \(l_-(\gamma,b,\rho)< l^*\) and a unique
\(l_-(\gamma,b,-\rho)< l^*\) such that
\(g_{-(b,\rho)}(l_-(\gamma,b,\rho))=\gamma\) and
\(g_{-(b,-\rho)}(l_-(\gamma,b,-\rho))=\gamma\). In such way, the
quantity
\[\tilde F(b,\rho):= \inf\bigl\{\gamma|-L_-(l_-(\gamma,b,-\rho);\gamma,b,-\rho)>L_-(l_-(\gamma,b,\rho);\gamma,b,\rho)\bigr\}\land -\sqrt{1-\rho^2}\]
is well-defined and it is called the Fukasawa threshold. If instead
\(b(1-\rho)=2\) (or \(b(1+\rho)=2\)), there exists a unique
\(l_-(\gamma,b,-\rho)< l^*\) (resp. \(l_-(\gamma,b,\rho)\)) such that
\(g_{-(b,-\rho)}(l_-)=\gamma\) (resp. \(g_{-(b,\rho)}(l_-)=\gamma\)).
When the former case arises while the latter does not, the Fukasawa
threshold is defined as
\(\tilde F(b,\rho):= \inf\bigl\{\gamma|-L_-(l_-(\gamma,b,-\rho);\gamma,b,-\rho)>-\frac{b\gamma}2\bigr\}\land -\sqrt{1-\rho^2}\).
Vice versa, when it is the second case to be active while the first is
not, the quantity is defined as
\(\tilde F(b,\rho):= \inf\bigl\{\gamma|\frac{b\gamma}2>L_-(l_-(\gamma,b,\rho);\gamma,b,\rho)\bigr\}\land -\sqrt{1-\rho^2}\).
Finally, when both cases are valid, so \(b=2\) and \(\rho=0\), the
threshold is \(\tilde F(2,0):=0\). The aforementioned theorem is:

	\begin{theorem}[SVI parameters $(\gamma,b,\rho,\mu,\sigma)$ fulfilling Fukasawa necessary no arbitrage conditions]\label{TheoFuk}
Assume $\gamma+\sqrt{1-\rho^2}>0$ and $|\rho|<1$. Then:
\begin{itemize}
	\item if $b(1\pm\rho)< 2$, it holds that $\tilde F(b,\rho)< 0$ and the interval $I_{\gamma,b,\rho} = \bigl]L_-(l_-(\gamma,b,\rho);\gamma,b,\rho), \newline-L_-(l_-(\gamma,b,-\rho);\gamma,b,-\rho)\bigr[$ is non-empty iff $\gamma> \tilde F(b,\rho)$;
	\item if $b(1-\rho)=2$ (or $b(1+\rho)=2$) and $\rho\neq0$, it holds that $\tilde F(b,\rho)< 0$ and the interval $I_{\gamma,b,\rho} = \bigl]-\frac{b\gamma}{2}, -L_-(l_-(\gamma,b,-\rho);\gamma,b,-\rho)\bigr[$ (resp. $I_{\gamma,b,\rho} = \bigl]L_-(l_-(\gamma,b,\rho);\gamma,b,\rho),\frac{b\gamma}{2}\bigr[$) is non-empty iff $\gamma> \tilde F(b,\rho)$;
	\item if $b=2$ and $\rho=0$, the interval $I_{\gamma,2,0}=\bigl]-\gamma,\gamma\bigr[$ is non-empty iff $\gamma>\tilde F(2,0)=0$.
\end{itemize}
In every case, the Fukasawa conditions are satisfied iff $\mu\in I_{\gamma,b,\rho}$.
\end{theorem}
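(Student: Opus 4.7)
The plan is to reduce the Fukasawa conditions $G_{1+}(l)>0$ and $G_{1-}(l)>0$ (for all $l\in\mathbb{R}$) to explicit bounds on $\mu$, exploiting the $(l,\rho,\mu)\mapsto(-l,-\rho,-\mu)$ symmetry of the SVI parametrisation. First, since $h$ is affine in $\mu$ and $g$ is independent of $\mu$, each inequality $G_{1\pm}(l)=h(l)\mp bg(l)>0$ can be solved for $\mu$: with
\[L_\pm(l;\gamma,b,\rho):=\frac{2N(l)}{N'(l;\rho)}\pm\frac{bN(l)}{2}-l,\]
one gets $\mu<L_\pm$ when $N'(l;\rho)>0$ and $\mu>L_\pm$ when $N'(l;\rho)<0$ (the case $N'(l^*;\rho)=0$ is trivial as $G_{1\pm}(l^*)=1$). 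Since $G_{1+}-G_{1-}=-bN'/2$, on $\{l>l^*\}$ the binding constraint is $G_{1+}>0$ and on $\{l<l^*\}$ it is $G_{1-}>0$, so the two Fukasawa inequalities collapse to
\[\sup_{l<l^*}L_-(l;\gamma,b,\rho)\;<\;\mu\;<\;\inf_{l>l^*}L_+(l;\gamma,b,\rho).\]

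To eliminate $L_+$ I would use the identity $L_+(l;\gamma,b,\rho)=-L_-(-l;\gamma,b,-\rho)$, which is a direct consequence of $N(-l;\gamma,-\rho)=N(l;\gamma,\rho)$ and $N'(-l;-\rho)=-N'(l;\rho)$, together with $-l^*(\rho)=l^*(-\rho)$; the upper bound then reads $-\sup_{l<l^*(-\rho)}L_-(l;\gamma,b,-\rho)$. In the generic regime $b(1\pm\rho)<2$ an asymptotic expansion gives $L_-(l;\gamma,b,\rho)\sim l\cdot(2-b(1-\rho))/2\to-\infty$ as $l\to-\infty$, and directly $L_-\to-\infty$ as $l\to l^{*-}$ (since $N'\to0^-$ while $N>0$); by continuity the supremum on $(-\infty,l^*)$ is attained at an interior critical point, which the preliminary propositions cited just before the theorem identify as the unique solution $l_-(\gamma,b,\rho)$ of $g_{-(b,\rho)}(l)=\gamma$. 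The same reasoning with $-\rho$ gives the upper bound, producing the stated interval.

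The degenerate regimes follow from the same asymptotic: when $b(1-\rho)=2$ the leading coefficient in $l$ vanishes and $L_-(l;\gamma,b,\rho)\to-b\gamma/2$; in the absence of an interior critical point (ruled out here by the preliminary propositions) the supremum equals this finite limit, yielding the lower bound $-b\gamma/2$ stated in the theorem, while the upper bound is unchanged since $\rho\neq0$ forces $b(1+\rho)<2$. The case $b(1+\rho)=2$ is symmetric, and $b=2$, $\rho=0$ combines both limits to give $(-\gamma,\gamma)$. The strict negativity $\tilde F(b,\rho)<0$ is automatic whenever $|\rho|<1$, since the $\wedge(-\sqrt{1-\rho^2})$ clause in the definition forces $\tilde F\leq-\sqrt{1-\rho^2}<0$; the non-emptiness criterion $\gamma>\tilde F(b,\rho)$ then follows by showing, via the envelope theorem applied to $l\mapsto L_-(l;\gamma,b,\pm\rho)$ and using $\partial_\gamma N=1$, that the width of $I_{\gamma,b,\rho}$ is strictly increasing in $\gamma$.

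The main obstacle is the structural input supplied by the preliminary propositions: uniqueness of $l_-$ on $(-\infty,l^*(\rho))$ and monotone invertibility of $g_{-(b,\rho)}$. Granted those, the rest is a careful but essentially mechanical case analysis, the delicate point being to read off the correct boundary behaviour at $l=-\infty$ in each of the four regimes $b(1\pm\rho)<2$, $b(1-\rho)=2$, $b(1+\rho)=2$, and $b=2$, $\rho=0$.
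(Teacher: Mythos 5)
The paper does not prove this theorem: it restates Theorem 5.10 of \cite{martini2020no} and defers to that reference, so there is no in-paper proof to compare against. Judged on its own, your sketch does capture the right skeleton: reducing the two Fukasawa inequalities to $\sup_{l<l^*}L_-(l;\gamma,b,\rho)<\mu<\inf_{l>l^*}L_+(l;\gamma,b,\rho)$, and then collapsing the upper bound via the symmetry $L_+(l;\gamma,b,\rho)=-L_-(-l;\gamma,b,-\rho)$ is exactly the mechanism behind the stated form of $I_{\gamma,b,\rho}$. The asymptotics you invoke ($L_-\sim l\,(2-b(1-\rho))/2$ as $l\to-\infty$, and $L_-\to-b\gamma/2$ when $b(1-\rho)=2$) are correct, and your observation that $\rho\neq 0$ together with the Roger-Lee bound forces $b(1+\rho)<2$ in the degenerate regime is the right reason the opposite endpoint keeps the generic form.

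Two places are thinner than they should be. First, the monotonicity-in-$\gamma$ needed for the ``non-empty iff $\gamma>\tilde F$'' equivalence is asserted via the envelope theorem but not checked: you get $\partial_\gamma\bigl(\sup_{l<l^*}L_-\bigr)=\tfrac{2}{N'(l_-)}+\tfrac{b}{2}$, and you must argue this is strictly negative. It is, since $|N'(l_-)|<1-\rho$ on $(-\infty,l^*)$ and $b(1-\rho)\leq 2$ give $\tfrac{2}{N'(l_-)}+\tfrac{b}{2}<-\tfrac{2}{1-\rho}+\tfrac{1}{1-\rho}<0$; but without this bound the sign is genuinely unclear and the ``width increasing'' claim is unsupported. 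Second, your deduction that $\tilde F(b,\rho)<0$ is ``automatic'' because of the $\wedge(-\sqrt{1-\rho^2})$ clause reads the paper's definition too literally: that clause is almost certainly a misprint for $\vee$ (as is consistent with, e.g., the explicit formula $\tilde F(b)=-\tfrac{(b^2+32)\sqrt{4-b^2}}{(16-b^2)^{3/2}}$ in the Symmetric section, which does \emph{not} lie below $-\sqrt{1-\rho^2}=-1$). With the corrected reading the strict negativity of $\tilde F$ requires a real argument (e.g.\ showing the interval is non-empty at $\gamma=0$, which you can get from $L_-(l_-;\gamma=0,b,\rho)\leq 0$ as in Lemma 4.4 of \cite{martini2020no}), not the reflex you used.
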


	The final necessary and sufficient conditions for no Butterfly arbitrage
in the case \(\gamma+\sqrt{1-\rho^2}>0\) and \(|\rho|<1\) are summed up
in Theorem 6.2 of \cite{martini2020no}, in which \(\sigma^*\) is defined
as
\[\sigma^*(\gamma,b,\rho,\mu) := \sup_{l< l_1 \lor l>l_2}-\frac{bg_2(l)}{2G_1(l)}.\]

\begin{theorem}[Necessary and sufficient no Butterfly arbitrage conditions for SVI, $\gamma+\sqrt{1-\rho^2}>0$ and $|\rho|<1$]\label{FinalTheo}

No Butterfly arbitrage in SVI entails that $G_1$ is positive, which requires $b(1\pm\rho) \leq 2$. Under this condition:
\begin{itemize}
	\item each of the factors of the function $G_1$ is positive on $\mathbb R$  if and only if $\gamma> \tilde F(b,\rho)$ and $\mu\in I_{\gamma,b,\rho}$;
	\item for such $\mu$'s, calling $l_1< 0< l_2$ the only zeros of $g_2$, the function $G_1+\frac{1}{2\sigma}bg_2$ is positive in $]l_1,l_2[$ for every $\sigma\geq 0$ and it is non-negative on $\mathbb{R}$ if and only if $\sigma\geq\sigma^*(\gamma,b,\rho,\mu)$.
\end{itemize}

\end{theorem}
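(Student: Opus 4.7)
My plan is to unpack the two bullets using the decomposition recalled before the statement: no Butterfly arbitrage of an SVI is equivalent to $G_1 + \frac{bg_2}{2\sigma}$ being non-negative on $\mathbb{R}$, and Fukasawa's necessary condition $G_{1+}>0,\ G_{1-}>0$ immediately forces $G_1>0$. So I would first pin down exactly when $G_1>0$ on $\mathbb{R}$, and then study, assuming this, the extra $\sigma$-dependent ingredient.

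For the first bullet, I would start by examining the asymptotics of $G_{1\pm}$. A short expansion of $h$ and $g$ at $\pm\infty$ gives $h(l) \to \tfrac{1}{2}$ and $g(l) \to \tfrac{1\pm\rho}{4}$, hence $G_{1\pm}(l) \to \tfrac{1}{2} \pm \tfrac{b(1\pm\rho)}{4}$. Positivity of $G_1$ at infinity therefore forces $b(1\pm\rho)\le 2$, which proves the necessity statement. Under this condition I would directly invoke \Cref{TheoFuk}, whose content is precisely that $G_{1+}$ and $G_{1-}$ are strictly positive on $\mathbb{R}$ if and only if $\gamma>\tilde F(b,\rho)$ and $\mu \in I_{\gamma,b,\rho}$, which closes the first bullet.

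For the second bullet the key input is the sign structure of $g_2$. From $g_2 = N'' - (N')^2/(2N)$ with $N>0$, a direct study (e.g.\ via a concavity argument on $\sqrt{N}$) yields that $g_2$ has exactly two zeros $l_1<0<l_2$, with $g_2>0$ on $]l_1,l_2[$ and $g_2<0$ outside. On $]l_1,l_2[$ one then has $bg_2\ge 0$, so $G_1 + \frac{bg_2}{2\sigma} \ge G_1 > 0$ for every $\sigma \ge 0$. Outside, $-bg_2 \ge 0$ and the inequality $G_1 + \frac{bg_2}{2\sigma} \ge 0$ rewrites pointwise as $\sigma \ge -\frac{bg_2(l)}{2G_1(l)}$, so the global condition is equivalent to $\sigma \ge \sup_{l<l_1\, \lor\, l>l_2} -\frac{bg_2(l)}{2G_1(l)} = \sigma^*(\gamma,b,\rho,\mu)$.

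The main delicate point I anticipate is the finiteness of $\sigma^*$, since without it the statement $\sigma \ge \sigma^*$ would be vacuous. At $\pm\infty$ one has $g_2 = O(1/l)$, while $G_1$ is bounded below by a positive constant as soon as $b(1\pm\rho)<2$, which easily yields finiteness; the boundary cases $b(1\pm\rho)=2$ require a finer asymptotic inspection of both numerator and denominator, and should dovetail with the case distinctions already visible in \Cref{TheoFuk} and in the definition of $\tilde F$.
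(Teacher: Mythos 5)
This statement is quoted verbatim from Theorem~6.2 of \cite{martini2020no}; the present paper does not reprove it (it only recalls it as background), so there is no in-paper proof to compare your sketch against. Judged on its own terms, your outline matches the logical skeleton that the cited result rests on: Fukasawa's necessary conditions $G_{1\pm}>0$ together with the equivalence of no Butterfly arbitrage with $G_1+\frac{1}{2\sigma}bg_2\ge 0$, the asymptotics of $h$ and $g$ yielding the Roger--Lee bound $b(1\pm\rho)\le 2$, the invocation of \Cref{TheoFuk} to settle when $G_{1\pm}>0$, and the sign split of $g_2$ reducing the $\sigma$-condition to $\sigma\ge\sigma^*$.

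Two steps in your sketch are invoked rather than established. First, you assert via a ``concavity argument on $\sqrt{N}$'' that $g_2$ has \emph{exactly} two zeros $l_1<0<l_2$ with $g_2>0$ between them and $g_2<0$ outside. The heuristic is right — indeed $g_2 = 2\sqrt{N}\,(\sqrt{N})''$, and $\sqrt{N}$ is convex near $l^*$ (where $N'=0$) and concave at $\pm\infty$ (where $N$ grows linearly) — but that only gives \emph{at least} two sign changes; proving there are exactly two (and hence that $\sigma^*$ as defined is the correct threshold) is a separate piece of work that the theorem, as stated, already presupposes. Second, you correctly identify finiteness of $\sigma^*$ as the delicate point and correctly diagnose that the boundary cases $b(1\pm\rho)=2$ need a finer expansion; there the resolution is that, under the Fukasawa constraint $\mu\in I_{\gamma,b,\rho}$, the vanishing factor $G_{1\pm}$ decays only like $1/l$, matching the $O(1/l)$ decay of $g_2$, so the ratio stays bounded. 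You leave this as ``should dovetail with \Cref{TheoFuk}'', which is the right intuition but not yet an argument. Finally, a notational slip: the limits are $G_{1+}(+\infty)=\tfrac12-\tfrac{b(1+\rho)}4$, $G_{1-}(+\infty)=\tfrac12+\tfrac{b(1+\rho)}4$, $G_{1+}(-\infty)=\tfrac12+\tfrac{b(1-\rho)}4$, $G_{1-}(-\infty)=\tfrac12-\tfrac{b(1-\rho)}4$; your single formula $G_{1\pm}(l)\to\tfrac12\pm\tfrac{b(1\pm\rho)}4$ conflates the two sign choices, although the resulting condition $b(1\pm\rho)\le 2$ is unaffected.
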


In the case of \(|\rho|= 1\) and \(\gamma\geq0\), Theorem 6.3 of
\cite{martini2020no} holds:

\begin{theorem}[Necessary and sufficient no Butterfly arbitrage conditions for SVI, $\rho=-1$]
No Butterfly arbitrage in SVI entails that $G_1$ is positive, which requires $b\leq 1$ and $\gamma \geq 0$. Under these conditions:
\begin{itemize}
	\item each of the factors of the function $G_1$ is positive on $\mathbb R$ if and only if $\mu>L_-(l_-;\gamma,b,-1)$;
	\item for such $\mu$'s, calling $l_1< 0$ the only zero of $g_2$, the function $G_1+\frac{1}{2\sigma}bg_2$ is positive on $]l_1,\infty[$ for every $\sigma\geq0$ and it is non-negative on $\mathbb{R}$ if and only if $\sigma\geq\sigma^*(\gamma,b,-1,\mu)$ where $\sigma^*(\gamma,b,-1,\mu) := \sup_{l< l_1}-\frac{bg_2(l)}{2G_1(l)}$.
\end{itemize}

\end{theorem}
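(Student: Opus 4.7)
The plan is to adapt the argument of \Cref{FinalTheo} to the boundary case $\rho=-1$, tracking two degenerations. First, the critical point $l^*=-\rho/\sqrt{1-\rho^2}$ of $N(\cdot;\gamma,\rho)$ escapes to $+\infty$, so $N(\cdot;\gamma,-1)$ has no finite minimum and is strictly decreasing with $\lim_{l\to+\infty}N=\gamma$. Second, one of the two mirror equations $g_{-(b,\pm\rho)}(l)=\gamma$ used to build the general Fukasawa threshold loses its admissible root, since the companion critical point escapes to $-\infty$. The equivalence between no Butterfly arbitrage and the non-negativity of $G_1+\frac{b}{2\sigma}g_2$ on $\mathbb R$, established in Section 4 of \cite{martini2020no}, is independent of $\rho$ and remains available.

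\textbf{Necessary conditions $\gamma\geq 0$ and $b\leq 1$.} Since $N(l;\gamma,-1)\to\gamma$ at $+\infty$ while staying strictly above $\gamma$ at finite $l$, positivity of the total variance forces $\gamma\geq 0$. For $b\leq 1$, I would take the limit at $-\infty$: using $N'(l;-1)\to -2$, $N(l;\gamma,-1)\sim -2l$ and $(l+\mu)/(2N)\to -1/4$, one obtains $h(l)\to 1/2$ and $g(l;-1)\to -1/2$, hence $G_{1+}(l)=h(l)+bg(l)\to (1-b)/2$, which must be non-negative.

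\textbf{Characterization of $G_1>0$.} A crucial simplification is that $g(l;-1)=N'(l;-1)/4<0$ for every finite $l$; consequently $G_{1+}=h+bg\leq h$ and $G_{1-}=h-bg\geq h$, so $G_{1+}>0$ already forces $h>0$ and therefore $G_{1-}>0$. The joint positivity of both factors then reduces to $G_{1+}>0$ on $\mathbb R$. Following the reparametrization argument of \cite{martini2020no}, the infimum of $h+bg$ is attained at the unique root $l_-$ of $g_{-(b,-1)}(l)=\gamma$, whose existence and uniqueness (for $b\leq 1$ and $\gamma\geq 0$) follow from a monotonicity analysis of $g_{-(b,-1)}$; plugging $l_-$ into the positivity condition then yields $\mu>L_-(l_-;\gamma,b,-1)$. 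The companion equation $g_{-(b,1)}(l)=\gamma$ plays no role because its admissible interval $(-\infty,l^*)$ is empty in the limit $\rho\to -1$.

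\textbf{The $\sigma^*$ part and main obstacle.} A direct asymptotic study of $g_2(l;\gamma,-1)=N''(l)-N'(l;-1)^2/(2N(l;\gamma,-1))$ gives $g_2\to 0^-$ at $-\infty$, $g_2(0)=1-1/(2(\gamma+1))>0$ for $\gamma\geq 0$, and $g_2\to 0^+$ at $+\infty$; a sign-change argument then yields a unique zero $l_1<0$, with $g_2<0$ on $(-\infty,l_1)$ and $g_2\geq 0$ on $[l_1,+\infty)$. On $(l_1,+\infty)$ the inequality $G_1+\frac{b}{2\sigma}g_2\geq 0$ is immediate for every $\sigma\geq 0$, while on $(-\infty,l_1)$ it is equivalent, via $G_1>0$, to $\sigma\geq -bg_2(l)/(2G_1(l))$ for every $l<l_1$, hence to $\sigma\geq\sigma^*$. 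The main obstacle I expect is verifying finiteness of this supremum at the edge $l\to-\infty$, where $g_2=O(1/l)$ and $G_1\to(1-b^2)/4$; in particular the extremal case $b=1$ (where $G_{1+}$ itself vanishes at $-\infty$) will require a refined expansion to ensure the ratio stays bounded.
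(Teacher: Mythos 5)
The paper does not prove this theorem: it is quoted verbatim as Theorem~6.3 of \cite{martini2020no}, so there is no in-paper proof to compare against. Your proposal must therefore be judged on its own merits as a reconstruction of the cited result.

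The overall plan is sound and follows the natural route: treat $\rho=-1$ as the degenerate boundary of the generic $|\rho|<1$ case, track which quantities escape to infinity, and exploit the special sign structure. Your asymptotics at $l\to-\infty$ are correct ($h\to 1/2$, $g\to -1/2$, so the limit of one $G_1$-factor is $(1-b)/2$, yielding $b\le 1$), and the observation that $g(\cdot;-1)<0$ everywhere makes one factor of $G_1$ always dominate the other is exactly the structural simplification that collapses the two-sided $\mu$-interval into the one-sided condition $\mu>L_-$. (Note, though, that you have swapped the paper's labels: the paper sets $G_{1+}=h-bg$ and $G_{1-}=h+bg$, so with $g<0$ it is $G_{1-}$ that is the binding factor.) Your identification of the obstacle at $b=1$ is also correct and resolvable: one checks that $G_{1-}(l)\sim\frac{2\mu+\gamma}{4|l|}$ and $g_2(l)\sim -1/|l|$ as $l\to-\infty$, so the ratio stays bounded precisely when $2\mu+\gamma>0$, which is the $b=1$ instance of the Fukasawa condition (the interval's left endpoint being $-b\gamma/2=-\gamma/2$).

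The one genuine gap is in the claim "a sign-change argument then yields a unique zero $l_1<0$". The limits $g_2\to 0^-$ at $-\infty$, $g_2(0)>0$, $g_2\to 0^+$ at $+\infty$ give \emph{existence} of a zero in $(-\infty,0)$, but not \emph{uniqueness}, and the theorem needs uniqueness to assert positivity of $G_1+\frac{b}{2\sigma}g_2$ on all of $(l_1,\infty)$. To close this, write the sign of $g_2$ as the sign of
\begin{equation*}
\psi(l):=2\gamma+3\sqrt{l^2+1}-2(l^2+1)^{3/2}+2l^3,
\end{equation*}
obtained by clearing denominators in $2NN''-N'^2$. One computes
\begin{equation*}
\psi'(l)=3l\Bigl(2l-\frac{2l^2+1}{\sqrt{l^2+1}}\Bigr),
\end{equation*}
and since $2l\sqrt{l^2+1}<2l^2+1$ for all $l$, the bracket is negative; hence $\psi'>0$ on $l<0$ and $\psi'<0$ on $l>0$. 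With $\psi(-\infty)=-\infty$, $\psi(0)=2\gamma+1>0$, and $\psi(+\infty)=2\gamma\ge 0$, $\psi$ has a unique zero $l_1<0$ and is strictly positive on $(l_1,\infty)$. This monotonicity step is the piece your sketch is missing.
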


From now on, we denote \(f := -\frac{bg_2}{2G_1}\) and
\(\tilde f := -\frac{G_1}{g_2(l)}\). In this way, the value of the
supremum of \(f\) is equal to \(b\) times half the reciprocal of the
infimum of \(\tilde f\), and the point at which the supremum of \(f\) is
reached is exactly the point at which the infimum of \(\tilde f\) is
reached.

	We present here some general observations that will be used in the next
sections.

In order to calculate \(\sigma^*\), one should find the supremum of the
function \(f\) over a domain constituted of two open intervals: on the
left of \(l_1\), the first zero of \(g_2\), and on the right of \(l_2\),
the second zero of \(g_2\). From a computational point of view, one
needs to perform two maximum searches and to compare the two found
values to choose the highest. Sometimes this double search is not
necessary, indeed for the sub-SVIs studied in this article, the signs of
\(\rho\) and \(\mu\) determine the interval where the global supremum of
\(f\) lies. Since \(l_1<0<l_2\), the trick will be to compare the two
quantities \(f(l)\) and \(f(-l)\), but first of all, it is necessary to
study the two intervals of interest. In particular, the following Lemma
holds:

\begin{lemma}\label{Lemmag2rho}
If $\rho\geq 0$ and $l>0$, then $g_2(l)\leq g_2(-l)$ and $l_2(\gamma,\rho)\leq-l_1(\gamma,\rho)$.

If $\rho< 0$ and $l>0$, then $g_2(l)> g_2(-l)$ and $l_1(\gamma,\rho)>-l_2(\gamma,\rho)$.
\end{lemma}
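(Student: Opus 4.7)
The plan is to first establish the pointwise comparison between $g_2(l;\gamma,\rho)$ and $g_2(-l;\gamma,\rho)$, and then deduce the inequalities on the zeros $l_1,l_2$ of $g_2$ by a sign-chasing argument. Since $N''$ is an even function of $l$, one has
\[
g_2(l) - g_2(-l) = \frac{1}{2}\left[\frac{N'(-l;\rho)^2}{N(-l;\gamma,\rho)} - \frac{N'(l;\rho)^2}{N(l;\gamma,\rho)}\right].
\]
First I would cross-multiply and expand, using $N'(\pm l;\rho)=\rho\pm l/s$ and $N(\pm l;\gamma,\rho)=\gamma+s\pm\rho l$ with $s:=\sqrt{l^2+1}$. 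Writing $A:=\gamma+s$, $B:=\rho l$ and decomposing along the symmetric and antisymmetric parts of $\{N'(l)^2, N'(-l)^2\}$, a direct computation should yield the factorization
\[
N'(l;\rho)^2 N(-l;\gamma,\rho) - N'(-l;\rho)^2 N(l;\gamma,\rho) = \frac{2\rho l}{s^2}\bigl[\,2\gamma s + (1-\rho^2)s^2 + 1\,\bigr].
\]

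The key step is to observe that the bracket on the right is nonnegative under the standing constraint $\gamma \geq -\sqrt{1-\rho^2}$. Indeed, at the extremal value $\gamma=-\sqrt{1-\rho^2}$ the bracket simplifies exactly to the perfect square $\bigl(\sqrt{1-\rho^2}\,s - 1\bigr)^2$; since it is linear and strictly increasing in $\gamma$ (as $s>0$), it is nonnegative for every admissible $\gamma$, and is strictly positive as soon as $\gamma>-\sqrt{1-\rho^2}$, which is the regime in which $g_2$ admits two zeros. Hence the sign of $g_2(-l)-g_2(l)$ coincides with the sign of $\rho l$, yielding $g_2(l)\leq g_2(-l)$ when $\rho\geq 0,\, l>0$, with the strict reverse inequality when $\rho<0,\,l>0$.

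The conclusion on $l_1$ and $l_2$ then follows by sign-chasing: since $l^*=-\rho/\sqrt{1-\rho^2}$ is the unique minimizer of $N$ with $g_2(l^*)=N''(l^*)>0$, while $g_2(l)\to 0^-$ as $|l|\to\infty$, the uniqueness statement $l_1<0<l_2$ from the preceding theorems means $g_2>0$ on $(l_1,l_2)$ and $g_2<0$ outside. For $\rho\geq 0$, applying the first part at $l=l_2>0$ gives $g_2(-l_2)\geq g_2(l_2)=0$, hence $-l_2\in[l_1,l_2]$; since $-l_2<0$ this forces $-l_2\geq l_1$, i.e., $l_2\leq -l_1$. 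For $\rho<0$, the strict reverse inequality yields $g_2(-l_2)<0$, so $-l_2<l_1$, which rewrites as $l_1>-l_2$. The main obstacle is verifying the explicit identity above; once the perfect-square structure at the boundary $\gamma=-\sqrt{1-\rho^2}$ is spotted, everything else is routine sign inspection.
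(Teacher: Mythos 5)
Your proposal is correct and takes essentially the same route as the paper: you compute the same difference $g_2(-l)-g_2(l)$, arrive at the same factorization with the bracket $(1-\rho^2)(l^2+1)+2\gamma\sqrt{l^2+1}+1$, and lower-bound it by the perfect square $\bigl(\sqrt{1-\rho^2}\sqrt{l^2+1}-1\bigr)^2$ using $\gamma\geq-\sqrt{1-\rho^2}$, which is exactly the paper's displayed estimate (you merely make explicit that the bound comes from minimizing the bracket, which is affine increasing in $\gamma$). Your sign-chasing for the zeros evaluates the inequality at $l=l_2$ rather than at $l=-l_1$ as in the paper, but this is a trivially equivalent variant.
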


\begin{proof}

Fix $l>0$, then for $\rho\geq 0$, it holds
\begin{align*}
g_2(-l)-g_2(l) &= \frac{l\rho\Bigl((l^2+1)(1-\rho^2)+2\gamma\sqrt{l^2+1}+1\Bigr)}{(l^2+1)N(l)N(-l)}\\
&\geq \frac{l\rho\Bigl(\sqrt{(l^2+1)(1-\rho^2)}-1\Bigr)^2}{(l^2+1)N(l)N(-l)} \geq 0
\end{align*}
since $\gamma\geq-\sqrt{1-\rho^2}$. So $g_2(l)\leq g_2(-l)$ and in particular $g_2(-l_1)\leq g_2(l_1)=0$ so $l_2(\gamma,\rho)\leq-l_1(\gamma,\rho)$. These inequalities are strict for $\rho$ strictly positive. Similarly for $\rho< 0$, we find $g_2(l)> g_2(-l)$ and $l_1(\gamma,\rho)>-l_2(\gamma,\rho)$.
\end{proof}

This Lemma has a direct consequence, which is that for \(\rho\geq 0\),
the supremum of \(f\) on the left of \(l_2\) is higher than its supremum
on the left of \(-l_1\), since in the first case, \(f\) could attain it
between \(l_2\) and \(-l_1\):
\(\sup_{l>l_2} f(l) \geq \sup_{l>-l_1} f(l)\). If we can show that for
positive \(l\)s, \(f(l)\geq f(-l)\), then the last term is greater than
\(\sup_{l>-l_1} f(-l)\). Making a change of variable, this quantity is
equal to \(\sup_{l<l_1} f(l)\) so this means that \(\sigma^*\) can be
found as the supremum of \(f\) on the right of \(l_2\).

	The request \(f(l)\geq f(-l)\) is satisfied if, for example,
\(G_1(l)\leq G_1(-l)\) because in the above Lemma we showed
\(g_2(l)\leq g_2(-l)\). The difference between \(G_1(l)\) and
\(G_1(-l)\) can be written as
\((h(l)-h(-l))(h(l)+h(-l))-b^2(g(l)-g(-l))(g(l)+g(-l))\). The quantity
\(g(l)-g(-l)\) is equal to \(\frac{l}{2\sqrt{l^2+1}}\), which is
positive, while \(g(l)+g(-l)\) is \(\frac{\rho}{2}\), which again is
non-negative for \(\rho\geq0\). Showing that the product with the \(h\)
functions is non-positive would then be enough to reach the desired
inequality.

\begin{lemma}\label{Lemmafrho0}

Let $\rho=0$. If $\mu\geq 0$, then $\sigma^*(\gamma,b,0,\mu) = \sup_{l> l_2}f(l)$ while if $\mu< 0$, then $\sigma^*(\gamma,b,0,\mu) = \sup_{l< l_1}f(l)$.

\end{lemma}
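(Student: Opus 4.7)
The plan is to follow exactly the strategy sketched in the paragraphs preceding the Lemma, specialised to $\rho=0$: show that for $l>l_2$ we have $f(l)\geq f(-l)$ when $\mu\geq 0$ (respectively $f(-l)\geq f(l)$ when $\mu<0$), so that
$$\sup_{l>l_2}f(l)\geq \sup_{l>l_2}f(-l)=\sup_{l<-l_2}f(l)=\sup_{l<l_1}f(l)$$
and therefore the double supremum defining $\sigma^*$ reduces to a single one. Observe first that when $\rho=0$ the functions $N$ and $N''$ are even in $l$ and $(N')^2$ is even as well, so $g_2$ itself is even; in particular $l_1=-l_2$, which is what legitimates the chain of equalities above.

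Next, since $g_2(l)=g_2(-l)$, and since Fukasawa (which is in force as soon as $\sigma^*$ is relevant, i.e.\ $\mu\in I_{\gamma,b,0}$) ensures $G_1>0$ on $\mathbb R$ while $g_2<0$ outside $[l_1,l_2]$, the comparison $f(l)\gtreqless f(-l)$ is equivalent to $G_1(l)\lesseqgtr G_1(-l)$. I would then use the factorisation already written in the text,
$$G_1(l)-G_1(-l)=\bigl(h(l)-h(-l)\bigr)\bigl(h(l)+h(-l)\bigr)-b^2\bigl(g(l)-g(-l)\bigr)\bigl(g(l)+g(-l)\bigr),$$
and exploit the fact that $g(l)+g(-l)=\rho/2=0$ at $\rho=0$: the whole $b^2$ term drops out, leaving $G_1(l)-G_1(-l)=(h(l)-h(-l))(h(l)+h(-l))$.

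It then remains to pin down the signs of the two $h$-combinations at $\rho=0$. Using $N(-l)=N(l)$ and $N'(-l)=-N'(l)$, a direct computation gives
$$h(l)-h(-l)=-\frac{\mu N'(l)}{N(l)}, \qquad h(l)+h(-l)=2-\frac{lN'(l)}{N(l)}.$$
For $l>0$ we have $N'(l)=l/\sqrt{l^2+1}>0$, so the sign of $h(l)-h(-l)$ is opposite to that of $\mu$. For $h(l)+h(-l)$, I would invoke the Fukasawa hypothesis itself: $G_{1\pm}>0$ on $\mathbb R$ forces $h(l)>b|g(l)|\geq 0$ for every $l$, hence $h(l)+h(-l)>0$ everywhere.

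Putting the three steps together: for $\mu\geq 0$ and $l>0$ one has $h(l)-h(-l)\leq 0$, whence $G_1(l)\leq G_1(-l)$ and therefore $f(l)\geq f(-l)$ on $\{l>l_2\}$; for $\mu<0$ the inequality reverses and $f(-l)\geq f(l)$ there, which proves the two cases of the statement. The only mildly non-routine input is the positivity of $h$ under Fukasawa, but this is immediate from the factorisation $G_1=(h-bg)(h+bg)>0$; the rest is a symmetry argument backed by the explicit computation of $h(l)\pm h(-l)$.
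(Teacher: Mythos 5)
Your proof is correct and follows essentially the same route as the paper's: even symmetry of $g_2$ at $\rho=0$, reduction of the $f$-comparison to a $G_1$-comparison, the cancellation $g(l)+g(-l)=\rho/2=0$ killing the $b^2$ term, and a sign analysis of $h(l)\pm h(-l)$. The only variation is how you establish $h(l)+h(-l)>0$: the paper expands the numerator of $h(l)+h(-l)$ at $\rho=0$ to $(\gamma+\sqrt{l^2+1})(2\gamma\sqrt{l^2+1}+l^2+2)$ and observes this is positive for all $\gamma>-1$ (a condition that always holds), whereas you invoke the Fukasawa positivity $G_{1\pm}>0$ to get $h>b|g|\geq 0$ everywhere and hence $h(l)+h(-l)>0$. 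Both are valid here since $\sigma^*$ is only discussed for parameters in the Fukasawa domain, though the paper's direct computation is marginally more self-contained (it never needs $\mu\in I_{\gamma,b,0}$ at this step); your version is a touch slicker. Your explicit simplified formulas $h(l)-h(-l)=-\mu N'(l)/N(l)$ and $h(l)+h(-l)=2-lN'(l)/N(l)$ at $\rho=0$ are a clean substitute for the paper's general two-line formulas, and the rest matches line for line.
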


\begin{proof}

In the case $\rho=0$, it holds $l_2=-l_1$ since $g_2$ is symmetric. It can be shown that in general
\begin{align*}
h(l)-h(-l) &= -\frac{l}{\sqrt{l^2+1}}\frac{\rho\bigl(\gamma\sqrt{l^2+1}+1\bigr)+\mu\bigl(\gamma+(1-\rho^2)\sqrt{l^2+1}\bigr)}{N(l)N(-l)},\\
h(l)+h(-l) &= \frac{\bigl(\gamma\sqrt{l^2+1}+1\bigr)\bigl(2\gamma+2\sqrt{l^2+1}-\rho\mu\bigr) + l^2\bigl(\gamma+(1-\rho^2)\sqrt{l^2+1}\bigr)}{\sqrt{l^2+1}N(l)N(-l)}.
\end{align*}

In particular for $\rho=0$ and positive $l$, the sign of the former quantity is the sign of $-\mu$ while the numerator of the latter quantity is $(\gamma+\sqrt{l^2+1})(2\gamma\sqrt{l^2+1}+l^2+2)$, which is always positive for $\gamma>-1$. Then $G_1(l)-G_1(-l)=(h(l)-h(-l))(h(l)+h(-l))$ follows the sign of $-\mu$ and so does $f(-l)-f(l)$. If $\mu$ is non-negative, $f(l)\geq f(-l)$ so its supremum must be searched on the right of $l_2$. If $\mu$ is negative, the opposite inequality holds for $f$ and its supremum lies on the left of $l_1$.

\end{proof}

These Lemmas will be useful for the sub-SVIs studied in this article,
however it must be noticed that the inequality \(f(l)\geq f(-l)\) does
not hold in general.

\subsection{Smile inversion}\label{smile-inversion}

	Tehranchi proved in \cite{tehranchi2020black} that a curve of Call
prices (with unit underlyer) parametrized by the strike \(\kappa\) is
free of Butterfly arbitrage if and only if it is convex and satisfies
\(1 \geq C(\kappa)\geq (1-\kappa)^+\) for every \(\kappa\geq0\).
Moreover, \(C\) has these properties if and only if
\(C^*(\kappa):=1-\kappa+\kappa C\bigl(\frac{1}{\kappa}\bigr)\) has them.

	Note that if at some point $\kappa$ it holds $C(\kappa)=1$, then since $C$ should be convex and decreasing, then $C$ is the constant function $1$. Because this case is uninteresting, we can assume that \(1>C(\kappa)\) for every \(\kappa\). Then
there is a unique total variance function \(w(k)\) such that
\(C_{BS}(\kappa,\sqrt{w(k)})=C(\kappa)\) where \(C_{BS}\) is the
(normalized) Black-Scholes Call price function, and a unique \(w^*(k)\)
such that \(C_{BS}(\kappa,\sqrt{w^*(k)})=C^*(\kappa)\). By the Put-Call
parity for the Black-Scholes model it holds that
\[P_{BS}(\kappa,\sqrt{w^*(k)})= C_{BS}(\kappa,\sqrt{w^*(k)})+\kappa-1 = \kappa C\Bigl(\frac{1}{\kappa}\Bigr).\]

	Now the LHS is equal to
\(\kappa N\Bigl(\frac{k}{\sqrt{w^*(k)}} + \frac{\sqrt{w^*(k)}}2\Bigr) - N\Bigl(\frac{k}{\sqrt{w^*(k)}} - \frac{\sqrt{w^*(k)}}2\Bigr)\)
and the RHS is equal to
\(\kappa \Bigl( N\Bigl(\frac{k}{\sqrt{w(-k)}} + \frac{\sqrt{w(-k)}}2\Bigr) - \frac{1}{\kappa}N\Bigl(\frac{k}{\sqrt{w(-k)}} - \frac{\sqrt{w(-k)}}2\Bigr) \Bigr)\).
By the monotonicity of the function \(u \to C_{BS}(\kappa, u)\) it
follows then that \(w^*(k)=w(-k)\). Eventually, we can reword the
symmetry of Tehranchi involution \(C \to C^*\) at the implied volatility
level: a smile \(w\) is free of Butterfly arbitrage if and only if the
inverse smile \(k \to w(-k)\) is.

	Let us now relate the implied volatility of \(C\) with that of \(C^*\).

	In the case of SVI, this reduces to have
\begin{equation}\label{eqSVIstar}
a+b\Bigl(\rho(k-m) + \sqrt{(k-m)^2+\sigma^2}\Bigr) = a^*+b^*\Bigl(\rho^*(-k-m^*) + \sqrt{(-k-m^*)^2+\sigma^{*2}}\Bigr)
\end{equation} for all \(k\). We prove that this happens iff \(a^*=a\),
\(b^*=b\), \(\rho^*=-\rho\), \(m^*=-m\) and \(\sigma^*=\sigma\). Of
course, \(a^*\), \(b^*\), \(\rho^*\), \(m^*\) and \(\sigma^*\) do not
depend on \(k^*\) so neither on \(k\). Making the derivative with
respect to \(k\) at the latter equality gives
\begin{equation}\label{eqSVIstarder}
b\biggl(\rho + \frac{k-m}{\sqrt{(k-m)^2+\sigma^2}}\biggr) = b^*\biggl(-\rho^* + \frac{k+m^*}{\sqrt{(k+m^*)^2+\sigma^{*2}}}\biggr).
\end{equation} Evaluating this for \(k\) going to \(\pm\infty\), it must
hold \(b(\rho\pm 1)=b^*(-\rho^*\pm 1)\) or equivalently \(b^*=b\) and
\(\rho^*=-\rho\). In the same equation, consider \(k=m\), then it must
hold
\[b\rho = b\biggl(\rho + \frac{m+m^*}{\sqrt{(m+m^*)^2+\sigma^{*2}}}\biggr)\]
which reduces to \(m^*=-m\). At this point, (\ref{eqSVIstarder}) becomes
\(\frac{k-m}{\sqrt{(k-m)^2+\sigma^2}} = \frac{k+m^*}{\sqrt{(k+m^*)^2+\sigma^{*2}}}\)
so \(\sigma^*=\sigma\). Finally, from (\ref{eqSVIstar}), it follows
\(a^*=a\).

	\begin{proposition}[Absence of Butterfly arbitrage for the inverse SVI]\label{PrepInverseSVI}

The $SVI(a,b,\rho,m,\sigma)$ is Butterfly arbitrage-free iff the $SVI(a,b,-\rho,-m,\sigma)$ is Butterfly arbitrage-free.

\end{proposition}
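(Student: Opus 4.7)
The plan is to observe that the proposition is essentially a direct corollary of two facts that the excerpt has already carefully assembled: (i) the implied-volatility-level rephrasing of Tehranchi's involution, namely that a smile $w$ is free of Butterfly arbitrage if and only if the inverse smile $k \mapsto w(-k)$ is; and (ii) the explicit identification carried out via equation (\ref{eqSVIstar}) of the SVI parameters realizing $w(-k)$ when $w$ is an SVI.

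First I would start from $w(k) = a+b(\rho(k-m)+\sqrt{(k-m)^2+\sigma^2})$ and write down $\tilde w(k) := w(-k)$ by simple substitution. Rearranging $-k-m = -(k-(-m))$ and $(-k-m)^2 = (k-(-m))^2$ gives
\[
\tilde w(k) = a + b\bigl((-\rho)(k-(-m)) + \sqrt{(k-(-m))^2+\sigma^2}\bigr),
\]
which is exactly $SVI(a,b,-\rho,-m,\sigma)$. So the inverse smile of $SVI(a,b,\rho,m,\sigma)$ is itself an SVI, with parameters $(a,b,-\rho,-m,\sigma)$. Note that the admissibility constraints on $(a,b,\rho,m,\sigma)$ are symmetric under $(\rho,m)\mapsto(-\rho,-m)$, so the transformation is well-defined on the SVI parameter space.

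Then the conclusion follows at once by applying the Tehranchi-based equivalence proved just before the proposition: $w$ is Butterfly arbitrage-free iff $\tilde w$ is, i.e., $SVI(a,b,\rho,m,\sigma)$ is Butterfly arbitrage-free iff $SVI(a,b,-\rho,-m,\sigma)$ is. The uniqueness calculation from (\ref{eqSVIstar})--(\ref{eqSVIstarder}), although already establishing the stronger statement that $(a,b,-\rho,-m,\sigma)$ is the \emph{only} choice of SVI parameters reproducing $w(-k)$, is not strictly needed for the proof of the proposition itself; existence of one such representation suffices.

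There is really no substantial obstacle to overcome: all the analytic content (the Tehranchi reduction, the passage from call prices to implied variance via the invertibility of $u \mapsto C_{BS}(\kappa,u)$, and the algebraic identification of the reflected SVI) has been done in the preceding paragraphs. The only care required is the routine verification that the reflected smile lies in the SVI family and that the parameter constraints are preserved, which the explicit formula above makes immediate.
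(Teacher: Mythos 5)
Your proposal is correct and takes essentially the same route as the paper: reduce via the Tehranchi-involution reformulation ($w$ is Butterfly arbitrage-free iff $k\mapsto w(-k)$ is) and then identify $w(-k)$ as the SVI with parameters $(a,b,-\rho,-m,\sigma)$. The one difference worth noting is that the paper establishes the parameter identification by starting from \cref{eqSVIstar}, differentiating to get \cref{eqSVIstarder}, and sending $k\to\pm\infty$ and $k=m$ to pin down the starred parameters one at a time, thereby proving that $(a,b,-\rho,-m,\sigma)$ is the \emph{unique} SVI parametrization of the reflected smile. You correctly observe that this extra work is unnecessary for the proposition: a single-line algebraic substitution ($-k-m=-(k-(-m))$, $(-k-m)^2=(k-(-m))^2$) already exhibits $w(-k)$ as $SVI(a,b,-\rho,-m,\sigma)(k)$, and existence of one such representation suffices to invoke the Tehranchi equivalence. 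Your brief remark that the admissibility constraints are preserved under $(\rho,m)\mapsto(-\rho,-m)$ is the right sanity check and closes the argument cleanly.
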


\subsection{The $b^*$ approach: reparametrizing from
the critical point
equation}\label{the-b-approach-reparametrizing-from-the-critical-point-equation}

	We now explain how to properly perform the trick described in the
Introduction \ref{introduction} which allows to switch the roles of the
parameter \(b\) and of the critical point \(\bar{l}\) depending on \(b\)
of \(\tilde f\). The key passage will be the exploitation of the
characteristic equation of these critical points:
\(h(\bar l)p(\bar l)-b^2 g(\bar l)q(\bar l)=0\). Three hypothesis must
be verified:

\begin{enumerate}
\def\labelenumi{\arabic{enumi}.}
\item
  the Fukasawa bound for \(\gamma\) must be a monotone function in
  \(b\);
\item
  there is uniqueness for the critical points of \(\tilde f\);
\item
  the functions \(p\) and \(q\) do not vanish at the same point.
\end{enumerate}

	In \cite{martini2020no}, the Fukasawa conditions read
\(\gamma>\tilde F(b,\rho)\). Firstly, suppose that for fixed \(\rho\),
\textbf{the function \(b\to \tilde F(b,\rho)\) is monotone and
surjective from \(\bigl[0,\frac{2}{1+|\rho|}\bigr]\) to \([-1,0]\)}.
Then, its inverse \(\tilde F^{-1}(\cdot,\rho)\) is well defined on
\([-1,0]\) and we shall extend it to \begin{equation*}
\tilde G(\gamma,\rho) :=
\begin{cases}
\tilde F^{-1}(\gamma,\rho) & \text{if}\ \gamma\in]-1,0],\\
\frac{2}{1+|\rho|} & \text{if}\ \gamma>0.
\end{cases}
\end{equation*}

	Secondly, suppose we have proven that \textbf{for \(l>l_2\), the
function \(\tilde f\) has exactly one critical point
\(\bar l(\gamma,b,\rho,\mu)\) for
\(\rho\in[-1,1], \gamma>-1, b\in[0,\tilde G(\gamma,\rho)], \mu\in]L_-,L_+[\)}.
This critical point must then be a local and global point of minimum.
Note that we require the uniqueness also for \(b=0\) and
\(b=\tilde G(\gamma,\rho)\). This point satisfies
\(\tilde f'(\bar l)=0\) or equivalently
\[h(\bar l)p(\bar l)-b^2 g(\bar l)q(\bar l)=0\] with
\(h(\bar l)= G_{1+}(l)+G_{1-}(l) > 0\), \(g(l)>0\) and
\(p(l):=h(l)g_2'(l) - 2h'(l)g_2(l)\) and
\(q(l):=g(l)g_2'(l)-2g'(l)g_2(l)\).

In particular, either \(q(\bar l)=p(\bar l)=0\) or
\[b^2=\frac{h(\bar l)\bigl(h(\bar l)g_2'(\bar l) - 2h'(\bar l)g_2(\bar l)\bigr)}{g(\bar l)\bigl(g(\bar l)g_2'(\bar l)-2g'(\bar l)g_2(\bar l)\bigr)}\]
when \(q(\bar l) \neq 0\).

It is natural to define on the set
\(B(\gamma,\rho,\mu) = \{l\in]l_2(\gamma,\rho),\infty[: p(l)q(l)>0\}\) a
positive function \(b^*\) by the formula \begin{equation}\label{eqbstar}
b^{*2}(l):=\frac{h(l)p(l)}{g(l)q(l)}.
\end{equation}

Note that the function \(b^*\) might go on \(B(\gamma,\rho,\mu)\) to
levels not allowed for \(b\); yet this function is continuous on its
domain of definition and when \(q(\bar l) \neq 0\) it holds that
\(b^2 = b^{*2}(\bar l)\). As an immediate consequence, note that if
\(b_1 \neq b_2\) are such that
\(\bar l(\gamma,b_1,\rho,\mu), \bar l(\gamma,b_2,\rho,\mu) \in B(\gamma,\rho,\mu)\),
then \(\bar l(\gamma,b_1,\rho,\mu) \neq \bar l(\gamma,b_2,\rho,\mu)\).
Note also that we don't know if \(B\) is one-piece, i.e.~connected.

	The question of interest is now the location of the set
\(Z(\gamma,\rho,\mu):=\{\bar l(\gamma,b,\rho,\mu), b\in]0,\tilde G(\gamma,\rho)[\}\)
(note that we exclude \(0\), this will turn to be more convenient below)
with respect to the sets \(B(\gamma,\rho,\mu)\) and \(\{p=q=0\}\). The
third and last hypothesis is that, \textbf{for the fixed parameters
\((\gamma,\rho,\mu)\), the set \(\{p=q=0\}\) is empty}. Then
\(Z(\gamma,\rho,\mu)\) is contained in \(B(\gamma,\rho,\mu)\). From the
above remark, the function \(\bar l(\gamma,\cdot,\rho,\mu)\) is
injective; however at this stage we don't know whether it is continuous
(which would imply it is either continuous increasing or continuous
decreasing).

	Remember that the proof of the uniqueness of the critical point of
\(\tilde f\) still holds for \(b= \tilde G(\gamma,\rho)\). Then for
\((l,b) \in ]l_{2},\infty] \times [0, \tilde G(\gamma,\rho)]\), the
equation \[h(l)p(l)-b^2 g(l)q(l)=0\] characterizes the points
\(\bar l(\gamma,b,\rho,\mu)\), since \(\tilde f\) has a single local and
global minimum. As discussed above, in the open set
\(B(\gamma,\rho,\mu)\), this equation defines a continuous function
\(b^*\); from the characterizing property we get that
\(Z(\gamma,\rho,\mu)\) eventually coincides with
\(b^{*-1}(]0, \tilde G(\gamma,\rho)[)\subseteq B(\gamma,\rho,\mu)\), so
that in particular \(Z(\gamma,\rho,\mu)\) is an open set. Furthermore,
the function \(b \to \bar l(\gamma,b,\rho,\mu)\) is the inverse of
\(b^*:Z(\gamma,\rho,\mu)\to]0,\tilde G(\gamma,\rho)[\).

	It remains to prove that \(Z(\gamma,\rho,\mu)\) is \emph{one-piece}.
Indeed, take a sequence \((b_n)_n\) such that
\(\bar l_n=\bar l(\gamma,b_n,\rho,\mu)\) goes to
\(\bar l\in\bar Z(\gamma,\rho,\mu)\). Since \((b_n)_n\) is a bounded
sequence, it has a subsequence converging to a certain
\(b\in[0,\tilde G(\gamma,\rho)]\); call such subsequence as the original
one, then the correlated subsequence of \(\bar l_n\) still converges to
\(\bar l\). Since for every \(n\) we have
\(h(\bar l_n)p(\bar l_n)-b_n^2g(\bar l_n)q(\bar l_n)=0\) where all the
functions are continuous, then taking the limit, it holds
\(h(\bar l)p(\bar l)-b^2g(\bar l)q(\bar l)=0\). From the characteristic
equation above, this yields in turn that
\(\bar l=\bar l(\gamma,b,\rho,\mu)\). Also, \(b\) is unique because
either \(b=b^*(\bar l)\) or \(p(\bar l)=q(\bar l)=0\), but the set
\(\{p=q=0\}\) is empty. Then it must hold either that
\(\bar l\in Z(\gamma,\rho,\mu)\) or that \(\bar l\) is the unique
critical point \(\bar l(\gamma,b,\rho,\mu)\) of \(\tilde f\) when
\(b=0\) or \(b=\tilde G(\gamma,\rho)\). The boundary of
\(Z(\gamma,\rho,\mu)\) is the set
\(\{\bar l(\gamma,0,\rho,\mu),\bar l(\gamma,\tilde G(\gamma,\rho),\rho,\mu)\}\)
and because \(Z(\gamma,\rho,\mu)\) is an open set, we get that
\(Z(\gamma,\rho,\mu)=]\bar l(\gamma,0,\rho,\mu),\bar l(\gamma,\tilde G(\gamma,\rho),\rho,\mu)[\)
or
\(Z(\gamma,\rho,\mu)=]\bar l(\gamma,\tilde G(\gamma,\rho),\rho,\mu),\bar l(\gamma,0,\rho,\mu)[\).

	As a consequence we also get that \(\bar l(\gamma,\cdot,\rho,\mu)\) is
either strictly increasing or strictly decreasing, and that
\(Z(\gamma,\rho,\mu)\) is a single connected component of
\(B(\gamma,\rho,\mu)\), with a boundary point
\(\bar l(\gamma,0,\rho,\mu)\) which is the single zero of \(p\), and the
other boundary point lying within \(B(\gamma,\rho,\mu)\).

\section{Vanishing SVI}\label{vanishing-svi}

\subsection{Vanishing (Upward) SVI}\label{vanishing-upward-svi}
	In this section, we work with the Vanishing Upward SVI and immediately recover the final results on the Vanishing Downward SVI in \Cref{vanishing-downward-svi}.
	
	The Vanishing Upward SVI is the sub-SVI obtained by setting \(\rho=1\)
and \(a=0\). The corresponding SVI formula becomes \begin{equation*}
SVI(k;0,b,1,m,\sigma) = b(k-m + \sqrt{(k-m)^2+\sigma^2}).
\end{equation*} With our notations, \(N(l) = l+\sqrt{l^2+1}\). Note that
the Roger Lee conditions require \(0<b \leq 1\).

	We plot in \Cref{FigureVanishingUpward} a Vanishing Upward SVI with \(b=\frac{1}{2}\), \(m=-1\) and
\(\sigma=1\).

\begin{figure}
	\centering
	\includegraphics[width=.7\textwidth]{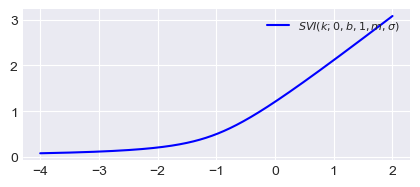}
	\caption{Vanishing Upward SVI with $b=\frac{1}{2}$, $m=-1$ and $\sigma=1$.}
	\label{FigureVanishingUpward}
\end{figure}
    
	The wording \emph{Vanishing} refers to the fact that the smile goes to 0
on the left, \emph{upward} meaning it is increasing. The symmetric
smile with \(\rho=-1\) will be a \emph{Vanishing Downward} one.

\subsection{The Fukasawa conditions}\label{the-fukasawa-conditions}

	Since \(\gamma=0\), the Fukasawa condition on \(\gamma\) is
automatically satisfied. We cite here a result obtained in paragraph
5.3.1 of \cite{martini2020no}.

\begin{lemma}[Fukasawa conditions for the Vanishing Upward SVI]

A Vanishing Upward SVI with $0< b\leq 1$ satisfies the Fukasawa conditions iff $\mu<\sqrt{3(1-b)}$.

\end{lemma}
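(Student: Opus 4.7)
The plan is to reduce the Vanishing Upward case ($\rho=1$) to the Vanishing Downward case ($\rho=-1$), then apply the $\rho=-1$ theorem together with an explicit computation of the Fukasawa threshold.

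First, I invoke \Cref{PrepInverseSVI}: the SVI with parameters $(0,b,1,m,\sigma)$ is Butterfly arbitrage-free iff the SVI with parameters $(0,b,-1,-m,\sigma)$ is. Since the Fukasawa conditions are a necessary part of no-arbitrage, the same equivalence transfers to them. At the level of the normalized parameter $\mu=m/\sigma$, this reduces the statement to proving that the Vanishing Downward SVI ($\rho=-1$, $\gamma=0$) satisfies the Fukasawa conditions iff $-\mu > -\sqrt{3(1-b)}$, i.e. iff $\mu > -\sqrt{3(1-b)}$ with $-\mu$ in place of $\mu$. Applying Theorem 6.3 of \cite{martini2020no} (the $\rho=-1$ analog), the Fukasawa condition in this case is exactly $\mu > L_-(l_-;0,b,-1)$, where $l_-$ is the unique root of $g_{-(b,-1)}(l)=\gamma=0$ below the suitable cutoff. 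So everything reduces to showing $L_-(l_-;0,b,-1)=-\sqrt{3(1-b)}$.

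For the explicit computation, I would perform the substitution $t:=\sqrt{l^2+1}-l>0$, which yields $l=(1-t^2)/(2t)$ and $\sqrt{l^2+1}=(1+t^2)/(2t)$. The equation $g_{-(b,-1)}(l)=0$ reads
\[
(-\sqrt{l^2+1}+l)^2\Bigl(\sqrt{l^2+1}\bigl(\tfrac12-\tfrac{b}{4}\bigr)+\tfrac{bl}{4}\Bigr)=-l+\sqrt{l^2+1}.
\]
After dividing by the common factor $(\sqrt{l^2+1}-l)=t$ and rewriting in $t$, the mixed radical terms collapse and the equation becomes the linear relation $(1-b)(t^2-1)=2+b$, hence
\[
t^2=\frac{3}{1-b},\qquad l_-=-\frac{2+b}{2\sqrt{3(1-b)}}.
\]
Evaluating $N(l_-;0,-1)=t$ and $N'(l_-;-1)=-1+l_-/\sqrt{l_-^2+1}=-6/(4-b)$, substituting into
\[
L_-(l;0,b,-1)=2N(l;0,-1)\Bigl(\tfrac{1}{N'(l;-1)}+\tfrac{b}{4}\Bigr)-l
\]
and simplifying gives $L_-(l_-;0,b,-1)=-\sqrt{3(1-b)}$. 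Translating back via the inversion, the Fukasawa condition for the Vanishing Upward SVI becomes $\mu<\sqrt{3(1-b)}$, as claimed.

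The only potential obstacle is the algebraic simplification of $g_{-(b,-1)}(l)=0$ and of $L_-$, which looks unpleasant in the original variable $l$; the key trick is the change of variable $t=\sqrt{l^2+1}-l$ that linearizes both computations. One also needs to verify that the root found lies in the correct branch (i.e.\ is the $l_-$ distinguished by Theorem~\ref{TheoFuk}/Theorem 6.3 of \cite{martini2020no}), which follows because $l_-<0$ and $b\leq 1$ together place $t=\sqrt{3/(1-b)}$ in the admissible range where $g_{-(b,-1)}$ is monotone.
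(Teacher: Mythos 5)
Your proof is correct and fills in what the paper only cites: the paper simply points to paragraph 5.3.1 of \cite{martini2020no} for this lemma, giving no argument at all, so any self-contained derivation is a genuine addition. Your route — reduce the $\rho=1$ case to $\rho=-1$ via \Cref{PrepInverseSVI}, then apply the one-sided condition $\mu > L_-(l_-;\gamma,b,-1)$ and evaluate $L_-$ explicitly using the substitution $t=\sqrt{l^2+1}-l$ — is clean, and I have verified the algebra: $t^2 = 3/(1-b)$, $l_- = -(2+b)/(2\sqrt{3(1-b)})$, $N'(l_-;-1) = -6/(4-b)$, and $L_-(l_-;0,b,-1)=\sqrt{3}(b-1)/\sqrt{1-b}=-\sqrt{3(1-b)}$ are all correct, including the $b=1$ limit.

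One step of the logic is stated too loosely. You write that since the Fukasawa conditions are a \emph{necessary} part of no-arbitrage and full no-arbitrage is preserved under inversion, ``the same equivalence transfers to them.'' That inference is not valid on its own: if $P$ is necessary for $Q$ and $Q$ is inversion-symmetric, it does not follow that $P$ is inversion-symmetric. The transfer is nevertheless true, but you need to justify it directly: under $(l,\rho,\mu)\mapsto(-l,-\rho,-\mu)$ one has $N(-l;\gamma,-\rho)=N(l;\gamma,\rho)$, $N'(-l;-\rho)=-N'(l;\rho)$, hence $h(-l;\gamma,-\rho,-\mu)=h(l;\gamma,\rho,\mu)$ and $g(-l;-\rho)=-g(l;\rho)$, so $G_{1+}(-l;\gamma,b,-\rho,-\mu)=G_{1-}(l;\gamma,b,\rho,\mu)$ and vice versa. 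Thus the simultaneous positivity of $G_{1+}$ and $G_{1-}$ — the Fukasawa condition — is preserved under the parameter inversion. With that one-line check inserted, your argument is complete.
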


\subsection{The condition on $\sigma$}\label{the-condition-on-sigma}

	Let us define the two following auxiliary functions: \begin{align}
\mu^*(x) :=& \Bigl[2(1-x)(2x^{2}-8x-1) +\nonumber\\
& +\sqrt{4b^2x^{6} + 8b^2x^{5} + 8x^{4}(8-b^2) - 4x^{3}(5b^2+32) + x^{2}(96 - b^2) + 2x(5b^2-16) + 4+3b^2}\Bigr]\nonumber\\
&/\Bigl[2\sqrt{1-x^{2}}(2x^{2}-2x-1)\Bigr], \label{eqVanishingmustar}\\
\sigma^*(x) :=& -\frac{4b\sqrt{1-x^{2}}(1-x -2x^{2})}{4\bigl(2-x-\mu^*(x) \sqrt{1-x^{2}}\bigr)^2-b^2(1+x)^2} \label{eqVanishingsigmastar}
\end{align}

We show then the following:

\begin{proposition}[Fully explicit no arbitrage domain for the Vanishing Upward SVI]\label{PropVanishingUp}
A Vanishing Upward SVI with $b=1$ is arbitrage-free iff $\mu< 0$ and $\sigma\geq-\frac{\mu}{2}$.

A Vanishing Upward SVI with $0< b< 1$ is arbitrage-free iff it can be parametrized as
\begin{equation}\label{eqSVINewParamVanishingUp}
SVI(k) = b\sigma\Biggl(\frac{k}\sigma-\mu^*(x) + \sqrt{\biggl(\frac{k}\sigma-\mu^*(x)\biggr)^2+1}\Biggr)
\end{equation}
where $\frac{2+b}{4-b}< x< 1$ and $\sigma\geq \sigma^*(x)$.
\end{proposition}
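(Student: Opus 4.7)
The proof is an application of \Cref{FinalTheo} (together with the $\rho=\pm1$ variant of Section~2.1 combined with \Cref{PrepInverseSVI} for the boundary case) and of the $\mu^*$--reparametrization scheme of \Cref{the-b-approach-reparametrizing-from-the-critical-point-equation}, specialised to $\gamma=0$, $\rho=1$. The case $b=1$ should be handled separately, since $b(1+\rho)=2$ puts us in the degenerate Fukasawa regime and the cited Lemma forces $\mu<\sqrt{3(1-b)}=0$. There the explicit form $N(l)=l+\sqrt{l^2+1}$ collapses $h$, $g$, $g_2$ into elementary rational expressions in $\sqrt{l^2+1}$ and $l+\mu$, so that $\tilde f=-G_1/g_2$ can be minimised in closed form and the bound $\sigma\geq -\mu/2$ follows directly.

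For $0<b<1$, the plan is to implement the three hypotheses of the $\mu^*$ recipe. First, since $\rho=1\geq 0$, \Cref{Lemmag2rho} gives $l_2\leq -l_1$; I expect to show, via the explicit formulas for $h(l)\pm h(-l)$ appearing in the proof of \Cref{Lemmafrho0} (specialised to $\gamma=0$, $\rho=1$), that $f(l)\geq f(-l)$ for every $l>0$, reducing $\sigma^*$ to $\sup_{l>l_2}f(l)$. Second, the critical-point equation $h(\bar l)p(\bar l)-b^2 g(\bar l)q(\bar l)=0$ is quadratic in $\mu$: $h$ is affine in $\mu$ and hence so are $h'$ and $p=hg_2'-2h'g_2$, whereas $g$ and $q$ do not depend on $\mu$. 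Solving this quadratic and selecting the branch compatible with $\mu<\sqrt{3(1-b)}$ yields $\mu$ as an explicit function of the critical point $\bar l$. The rationalising substitution $x=\bar l/\sqrt{\bar l^2+1}$, which turns $N(\bar l)$ into $(1+x)/\sqrt{1-x^2}$ and $N'(\bar l)$ into $1+x$, should convert this expression precisely into \eqref{eqVanishingmustar}, and subsequently transform the formula $\sigma^*(\bar l)=-bg_2(\bar l)/(2G_1(\bar l))$ into \eqref{eqVanishingsigmastar}.

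The main obstacle is the uniqueness of the critical point of $\tilde f$ on $(l_2,\infty)$ for every admissible $(b,\mu)$, together with the emptiness of $\{p=q=0\}$ and the monotonicity of the Fukasawa bound in $b$. For the parametrization \eqref{eqSVINewParamVanishingUp} to be a genuine change of variables, the map $\mu\mapsto\bar l$ must be a bijection; equivalently, after the substitution $x=\bar l/\sqrt{\bar l^2+1}$, the function $x\mapsto \mu^*(x)$ must be strictly monotone on the claimed interval $\bigl((2+b)/(4-b),1\bigr)$. I plan to identify the endpoints as $x\to 1$ corresponding to $\bar l\to \infty$ and $\mu\to-\infty$, and $x\to (2+b)/(4-b)$ corresponding to the Fukasawa threshold $\mu=\sqrt{3(1-b)}$, at which $\bar l\to l_2$ and $\tilde f$ blows up. Once these boundary identifications and monotonicity are established, combined with the formulas for $\mu^*$ and $\sigma^*$ derived above, the characterisation follows from \Cref{FinalTheo}.
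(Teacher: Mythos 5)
Your overall plan — reparametrise via the critical-point equation, which is quadratic in $\mu$, pass to the variable $x=\bar l/\sqrt{\bar l^2+1}$, and establish a bijection with the admissible $\mu$ range — is the right skeleton and matches the paper's. However, the proposal as written contains a concrete misidentification and omits the substantive part of the argument.

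The endpoint identification at $x=(2+b)/(4-b)=x_+^*(b)$ is incorrect. Since $l_2=1/\sqrt{3}$ corresponds to $x=1/2$ and $x_+^*(b)>1/2$ for $b>0$, the critical point $\bar l$ does \emph{not} tend to $l_2$ there. What actually happens at $x_+^*(b)$ is that $\mu^*(x)\to\sqrt{3(1-b)}$, the Fukasawa threshold, at which the factor $J_{1+}$ of $J_1$ has a double zero at $x_+^*(b)$; consequently $\tilde f$ \emph{vanishes} (not blows up) there, and it is $\sigma^*$ that diverges. This matters: the proof must show that $\tilde\phi$ has no critical point in $(1/2,x_+^*(b))$ — otherwise the map $x\mapsto\mu^*(x)$ restricted to $(x_+^*(b),1)$ would not sweep out the whole Fukasawa-admissible $\mu$-range — and the paper devotes a separate argument (strict monotonicity of $\tilde\phi$ on $(1/2,x_+^*(b))$, proved by a contradiction using $J_1''>0$) to exactly this point, which your plan does not address.

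You flag uniqueness of the critical point as the ``main obstacle'' but give no indication of how to prove it, and you import the three hypotheses of the $b^*$ scheme from \Cref{the-b-approach-reparametrizing-from-the-critical-point-equation} — emptiness of $\{p=q=0\}$ and monotonicity of the Fukasawa bound in $b$ — which are the hypotheses for reparametrising in $b$, used in the Symmetric and SSVI sections. The Vanishing case reparametrises in $\mu$, and the paper proves what is needed directly: (i) for $\mu\le 0$, $\tilde\phi'\ge0\Rightarrow\tilde\phi''>0$ via the sign of $n=J_1 j_2''-J_1'' j_2$; (ii) for $\mu>0$, $\tilde\phi'$ is a concave parabola in $\mu$, so well-definedness of $\mu^*$ on $(x_+^*(b),1)$ reduces to $\tilde\phi'(x;b,\sqrt{3(1-b)})>0$, shown by contradiction using $J_1''>0$; (iii) injectivity of $\mu^*$ by ruling out $\tilde\phi'(\hat x)=\tilde\phi''(\hat x)=0$. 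This is the bulk of the proof and must be supplied, not merely named.

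Two smaller points. First, for $\rho=1$ the function $g_2$ has a single zero (the zeros solve $1-l^2=l\sqrt{l^2+1}$, forcing $l=1/\sqrt3$), so $\sigma^*=\sup_{l>l_2}f$ holds automatically and the $f(l)\ge f(-l)$ comparison you plan to import from \Cref{Lemmafrho0} (which is stated for $\rho=0$) is unnecessary. Second, you should check the $b=1$ arithmetic carefully: the paper's computation that $\tilde\phi$ decreases to $-\mu/2$ at $x=1$, combined with $\sigma^*=b/(2\inf\tilde f)$, yields $\sigma^*=-1/\mu$; make sure your closed-form minimisation is compatible with the factor of $b/2$ before asserting the bound in the Proposition.
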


\subsubsection{Proof of \Cref{PropVanishingUp}}\label{proof-of}

	Note that \begin{align*}
G_1(l) &= \biggl(1-\frac{(l+\mu)}{2\sqrt{l^2+1}}\biggr)^2-\frac{b^2}{16}\biggl(1+\frac{l}{\sqrt{l^2+1}}\biggr)^2,\\
g_2(l) &= \frac{1}{(l^2+1)^\frac{3}{2}} - \frac{\sqrt{l^2+1} + l}{2(l^2+1)}.
\end{align*} The \(g_2\) function has only one positive zero. Indeed,
its zeros solve \(2=l^2+1+l \sqrt{l^2+1}\), or yet
\(1-l^2=l\sqrt{l^2+1}\). The only possible solution satisfies
\(l^2 \leq 1\), so \(l_2 = \frac{1}{\sqrt{3}}\). In order to have no
arbitrage, we then need
\(\sigma\geq\sigma^*=\sup_{l\in]l_2,+\infty[}-\frac{bg_2(l)}{2G_1(l)}\).

	From now on, we operate a change of variable setting
\(x=\frac{l}{\sqrt{l^2+1}}\) so that the points \(l=l_2\) and
\(l=\infty\) correspond to \(x=\frac{1}2\) and \(x=1\) respectively.
Also, \(\frac{1}{\sqrt{l^2+1}}=\sqrt{1-x^2}\).

We call \(J_1\) and \(j_2\) the functions \(G_1\) and \(g_2\) evaluated
at \(x\), so \begin{align*}
J_1(x) &= \biggl(1-\frac{x}2-\frac{\mu}2 \sqrt{1-x^2}\biggl)^2-\frac{b^2}{16}(1+x)^2,\\
j_2(x) &= \frac{\sqrt{1-x^2}}{2} (1-x -2x^2).
\end{align*}

	The first derivative of \(j_2\) is
\(j_2'(x) = \frac{1}{2\sqrt{1-x^2}}(6x^3+2x^2-5x-1)\) which is positive
iff \(x>\frac{2+\sqrt{10}}{6}:=x_{m_2}\). Also,
\(j_2''(x) = \frac{-12x^4-2x^3+18x^2+3x-5}{2(1-x^2)^\frac{3}2}\) and the
only inflection point of \(j_2\) in \(\bigl[\frac{1}2,1\bigr]\) is
\(\frac{1}2\). Then, the function \(j_2\) is null at \(\frac{1}2\),
decreases to its minimum
\(j_2(x_{m_2})=-\frac{34\sqrt{2}-5\sqrt{5}}{108}\) and then increases to
\(0\) for \(x=1\), furthermore it is always convex.

	Let us turn to the study of \(J_1\). We have
\(J_1(1) = \frac{1-b^2}{4}\) and we know that for the Fukasawa
conditions, both \(J_{1+}\) and \(J_{1-}\) are positive, where
\(J_1=J_{1+}J_{1-}\) and
\(J_{1\pm} = 1-\frac{x}2 -\frac{\mu}2 \sqrt{1-x^2} \mp \frac{b}4 (1+x)\).
Considering the first derivatives, we know
\(J_1'= J_{1+}'J_{1-} + J_{1+}J_{1-}'\) where
\begin{equation}\label{J1Diffeq}
J_{1\pm}'(x) = -\frac{1}2 + \frac{\mu}{2}\frac{x}{\sqrt{1-x^2}}\mp\frac{b}4.
\end{equation}

If \(\mu\leq 0\), then \(J_{1\pm}\) are decreasing and so is \(J_1\). In
particular, it attains its minimum at \(x=1\) and we get that
\(J = J_1 + \frac{1}{2\sigma}bj_2\) is always positive for
\[\sigma \geq -\frac{bj_2(x_{m_2})}{2J_1(1)}=\frac{(34\sqrt{2}-5\sqrt{5})b}{54(1-b^2)}\]
when \(b<1\). We have therefore the explicit no arbitrage sub-domain:

	\begin{lemma}[No arbitrage sub-domain for the Vanishing Upward SVI]\label{LemmaVanishingSubDomain}

A Vanishing Upward SVI with $0< b < 1$, $\mu\leq 0$ and $\sigma\geq\frac{(34\sqrt{2}-5\sqrt{5})b}{54(1-b^2)}$ is arbitrage-free.

\end{lemma}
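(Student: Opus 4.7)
The plan is to produce a pointwise lower bound for $J := J_1 + \tfrac{b}{2\sigma} j_2$ on the relevant interval $x \in [1/2, 1]$ (which corresponds to $l \in [l_2, \infty)$ under the change of variables $x = l/\sqrt{l^2+1}$) by minimizing $J_1$ and $j_2$ \emph{separately}, and then inverting the resulting inequality to read off a sufficient lower bound on $\sigma$. This yields a sub-domain rather than the sharp one, because the two minima are attained at different points; the sharp parametrization is deferred to \Cref{PropVanishingUp}.

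First I would check the Fukasawa setup: since $b<1$, the Fukasawa threshold $\sqrt{3(1-b)}$ is strictly positive, so $\mu \leq 0$ automatically satisfies the Fukasawa condition cited before, and both factors $J_{1+}, J_{1-}$ are strictly positive on $[1/2,1]$. From the derivative formula \eqref{J1Diffeq}, for $\mu \leq 0$ and $x\in (0,1)$ each $J_{1\pm}'(x) = -\tfrac12 + \tfrac{\mu x}{2\sqrt{1-x^2}} \mp \tfrac{b}{4}$ is strictly negative, so both factors $J_{1\pm}$ are positive and strictly decreasing; their product $J_1$ is therefore decreasing on $[1/2,1]$, with minimum $J_1(1)=\tfrac{1-b^2}{4}$ reached at the right endpoint.

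Second, the preceding analysis of $j_2$ already established that its minimum on $[1/2,1]$ is $j_2(x_{m_2}) = -\tfrac{34\sqrt{2}-5\sqrt{5}}{108}$. Combining the two bounds yields
\[
J(x) \;\geq\; \frac{1-b^2}{4} - \frac{b(34\sqrt{2}-5\sqrt{5})}{216\,\sigma},
\]
which is non-negative precisely when $\sigma \geq \frac{b(34\sqrt{2}-5\sqrt{5})}{54(1-b^2)}$, i.e.\ exactly under the hypothesis. On the complementary region $x < 1/2$ (equivalently $l < l_2$), one has $g_2 \geq 0$ and $G_1 > 0$ by Fukasawa, so $G_1 + \tfrac{b}{2\sigma} g_2 \geq 0$ holds trivially. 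No Butterfly arbitrage then follows from the sufficient condition recalled from \cite{martini2020no} and stated in \Cref{FinalTheo} for the Vanishing Upward case.

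There is no real obstacle: monotonicity of $J_1$ drops out of the $\mu \leq 0$ sign of $J_{1\pm}'$ combined with the Fukasawa positivity, and the minimum of $j_2$ is in hand. The only care needed is the translation under the change of variables $x = l/\sqrt{l^2+1}$, which is a bijection from $(l_2, \infty)$ onto $(1/2,1)$, so the pointwise bound on $J$ in the $x$-variable transfers directly to a pointwise bound on $G_1 + \tfrac{b}{2\sigma} g_2$ in the $l$-variable.
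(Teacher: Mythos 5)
Your proof is correct and essentially identical to the paper's: both argue that for $\mu\le 0$ the factors $J_{1\pm}$ are decreasing (and positive by Fukasawa), so $J_1$ is decreasing with minimum $J_1(1)=\tfrac{1-b^2}{4}$, then combine with the already-computed minimum $j_2(x_{m_2})=-\tfrac{34\sqrt2-5\sqrt5}{108}$ and invert the resulting crude lower bound on $J_1+\tfrac{b}{2\sigma}j_2$ to get the sufficient $\sigma$-threshold. Your few extra remarks (Fukasawa condition trivially met for $\mu\le 0$, $g_2\ge 0$ on $l<l_2$) are consistent with what the paper leaves implicit via the cited general theorem.
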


	This explicit sub-domain was obtained with not too much effort. Let us
turn now to the more difficult task to obtain an explicit
parametrization for the whole domain.

\paragraph{Uniqueness of the critical point of $\tilde f$ for the Vanishing
SVI}\label{uniqueness-of-the-critical-point-of-tilde-f-for-the-vanishing-svi}

	Let us study \(\tilde \phi(x;b,\mu) = -\frac{J_1(x;b,\mu)}{j_2(x;b)}\).
We want to prove that for each \(b\in[0,1]\) and
\(\mu\in]-\infty,\sqrt{3(1-b)}[\), there exists a unique value
\(x^*:=x^*(b,\mu)\in\bigl[\frac{1}2,1\bigr]\) such that
\(\inf_{x\in]\frac{1}2,1[}\tilde \phi(x) = \tilde \phi(x^*)\).

	The existence is obvious. The derivative of \(\tilde \phi\) is
\begin{align}\label{Derfeq}
\tilde{\phi}'(x;b,\mu) =& \Bigl[4\mu^{2}(1-x^2)(2x^2-2x-1) -8\mu  (1-x)\sqrt{1 - x^{2}}(2x^2-8x-1) +\nonumber\\
&+2x^{4} (4-b^2) - 6x^{3}(b^{2}+12) + 3x^{2}(52 - b^{2}) + 4x(b^{2} - 22) + 3 b^{2}\Bigr]\nonumber\\
&/\Bigl[8(1-x^2)^\frac{3}{2}(1+x)(2x-1)^2\Bigr].
\end{align} The denominator of the above formula is always positive.
Looking at the numerator, for \(b=1\) every coefficient of \(\mu\) is
negative when \(x\in\bigl]\frac{1}2,1\bigr[\) and \(\mu<0\), required
from the Fukasawa condition. Then \(\tilde{\phi}'(x;1,\mu)\) is always
negative and the inferior point of \(\tilde \phi\) is reached at \(1\)
for every \(\mu< 0\), so that \(x^*(1,\mu)=1\). From now on we consider
\(b< 1\). Since for finite values of \(\mu\),
\(\tilde \phi(\frac{1}2;b,\mu) = \tilde \phi(1;b,\mu) = \infty\), the
points which attain the infimum of \(\tilde \phi\) are points of minimum
belonging to \(\bigl]\frac{1}2,1\bigr[\) and such that
\(\tilde \phi'(x^*;b,\mu)=0\).

	What happens when \(\mu=-\infty\)? Dividing by \(\mu^2\) we still have
\(\frac{\tilde \phi'(x^*;b,\mu)}{\mu^2}=0\), so making \(\mu\) going to
\(-\infty\) and relying on the linearity of the limits, we obtain the
equation \(4(1-x^{*2})(2x^{*2}-2x^*-1)=0\), whose only solution in
\(\bigl[\frac{1}{2},1\bigr]\) is \(1\), so \(x^*(b,-\infty)=1\).

Take now \(\mu\leq0\) and \(|\nu|<-\mu\) (note in particular that
\(\mu < \nu\)) and consider the quantity
\(\tilde \phi(x;b,\mu)-\tilde \phi(x;b,\nu)\). Unless a factor
\(b^{-1}(1+x)^{-1}(2x-1)^{-1}\), it is equal to
\((\nu-\mu) \bigl(2-x-\frac{(\mu+\nu)}2 \sqrt{1-x^2}\bigr)\) and, as a
consequence, the quantity
\(\tilde \phi'(x;b,\mu)- \tilde \phi'(x;b,\nu)\) times \((1+x)(2x-1)\)
equals
\[(\nu-\mu) \biggl(-1+\frac{(\mu+\nu)}2 \frac{x}{\sqrt{1-x^2}}\biggr) + (4x+1) \bigl(\tilde \phi(x;b,\nu)-\tilde \phi(x;b,\mu) \bigr).\]
The function \(\tilde \phi\) is decreasing in \(\mu\), indeed
\(\partial_\mu\tilde \phi(x) = -\frac{2-x-\mu\sqrt{1-x^2}}{b(1+x)(2x-1)}\)
and \(2-x-\mu\sqrt{1-x^2}>2-x-\sqrt{3}\sqrt{1-x^2}\geq0\). So given the
fact that \(\tilde \phi\) is decreasing in \(\mu\), we get in turn
\(\tilde \phi'(x;b,\mu)- \tilde \phi'(x;b,\nu)< 0\). This entails that
if there is uniqueness, \(x^*(b,\nu)< x^*(b,\mu)\) under those
conditions.

	We prove that indeed the uniqueness holds. We prove that if
\(\tilde \phi'\geq 0\), then \(\tilde \phi''>0\), which means that once
\(\tilde \phi'\) becomes zero, then it will necessarily increase and it
can never become \(0\) again (note that this property is slightly weaker
than the convexity of \(\tilde\phi\)). We have
\(\tilde \phi' = \frac{J_1j_2'-J_1'j_2}{j_2^2}\geq 0\) but the term
\(-J_1'j_2\) is negative, so \(j_2'\) must be positive. It holds
\(\tilde \phi'' = \frac{J_1j_2''-J_1''j_2}{j_2^2}-2\tilde \phi'\frac{j_2'}{j_2}\),
where we have proven \(-2\tilde \phi'\frac{j_2'}{j_2}\geq 0\). We now
look at the quantity \(n = J_1j_2''-J_1''j_2\) and show that it is
strictly positive to obtain the conclusion. Indeed, the denominator of
\(n\) is \(32\sqrt{1 - x^{2}}(1-x)\) which is positive while its
numerator is a quadratic function of \(\mu\) with quadratic coefficient
\(-4(1-x^2)(2x-1)(4x^2-2x-3)\), linear coefficient
\(16\sqrt{1 - x^2}(1-x)(2x-1)(2x^2-5x-4)\) and free term depending on
\(b\). The quadratic coefficient is positive, so \(n\) is a convex
parabola as a function of \(\mu\). Furthermore, the linear coefficient,
corresponding to \(\partial_\mu n(x;\mu)|_{\mu=0}\) unless a positive
factor, is negative, so \(n(x;0) < n(x;\mu)\) for every \(\mu< 0\). Then
for negative \(\mu\)'s it is enough to prove that \(n(x;0)\) is positive
for every \(x\). We already know that \(j_2''\) is positive. In
addition, \(J_1'' = J_{1+}''J_{1-}+2J_{1+}'J_{1-}'+J_{1+}J_{1-}''\)
where \(J_{1\pm}''(x;0)=\frac{\mu}{2(1-x^2)^\frac{3}2}\Bigr|_{\mu=0}=0\)
and \(J_{1\pm}'(x;0)< 0\). Then \(J_1''(x;0)>0\) and
\(n(x;0)=J_1(x;0)j_2''(x;0)-J_1''(x;0)j_2(x;0)>0\).

	For the final case \(\mu>0\), we need to introduce the
\(\mu^*\)-function.

\subparagraph{The \(\mu^*\) approach for the Vanishing SVI}

	Assume \(b< 1\) so that \(J_1(1)>0\) and let us make \(\mu\) vary
towards its upper bound defined by the Fukasawa condition. Then the
function \(J_1\) will eventually reach the x-axis level at a point
\(x^*_+(b)\). This point is necessarily a critical point of \(J_1\) with
\(\mu\) set at the value of the upper bound. So
\(J_1(x^*_+(b))=J_1'(x^*_+(b))=0\). Now observe that
\(\tilde \phi'=\frac{J_1j_2'-J_1'j_2}{j_2^2}\) so that
\(\tilde \phi'(x^*_+(b);b,\sqrt{3(1-b)})=0\) also. In particular, if we
know that \(\tilde \phi\) (for this critical value of \(\mu\)) has a
single critical point, this must be \(x^*_+(b)\).

	We look for the solutions to \(J_1(x)=0\) when \(\mu = \sqrt{3 (1-b)}\).
\(J_{1-}\) is always positive on \(\bigl]\frac{1}2, 1\bigr[\) while
\(J_{1+}(x)=1-\frac{x}2 -\frac{\mu}2 \sqrt{1-x^2} - \frac{b}4 (1+x)\)
satisfies
\(\bigl(\bigl(1-\frac{b}4\bigr)x-\bigl(\frac{1}2+\frac{b}4\bigr) \bigr)^2=0\).
This yields
\[x_+^*(b)=\frac{\frac{1}2+\frac{b}4}{1-\frac{b}4}=\frac{2+b}{4-b}\]
which is clearly larger than \(\frac{1}2\) for positive \(b\) and
smaller than \(1\) when \(b< 1\).

	As a sanity check we should verify that also \(J_1'(x)=0\) at this
point. Indeed,
\(J_1'(x_+^*(b)) = J_{1+}'(x_+^*(b))J_{1-}(x_+^*(b)) + J_{1+}(x_+^*(b))J_{1-}'(x_+^*(b)) = J_{1+}'(x_+^*(b))J_{1-}(x_+^*(b))\).
From \cref{J1Diffeq} we have that \(J_{1+}'\) is null for
\(\mu=\sqrt{3(1-b)}\) and \(x=x_+^*(b)\).

	In the limiting case \(b=1\) and \(\mu=0\), we have already seen that
\(J_1\) is decreasing and attains its minimum at \(1\). Furthermore,
\(\tilde \phi(1;1,0)=0\), so also \(\tilde \phi\) reaches its minimum at
\(1\). We can then set \(x^*(1)=1\). The uniqueness of \(x^*\) follows
from the fact that \(J_1\) and so \(\tilde\phi\) are positive in
\(\bigl[\frac{1}2,1\bigr[\).

	Take \(x\in]x^*_+(b),1[\). From \cref{Derfeq} we can see that
\(\tilde \phi'\) is a concave parabola as a function of \(\mu\). So in
order to have a unique \(\mu = \mu^*(x)\)-solution to
\(\tilde \phi'(x;b,\mu)=0\), it is enough to prove
\(\tilde \phi'(x;b,\sqrt{3(1-b)})>0\). Note that \(x^*_+(b)\) is a zero
and a point of minimum for \(\tilde \phi(\cdot;b,\sqrt{3(1-b)})\), so in
a right neighborhood of \(x^*_+(b)\) we have
\(\tilde \phi'(x;b,\sqrt{3(1-b)})>0\). If for a certain \(x\) we rather
have \(\tilde \phi'(x;b,\sqrt{3(1-b)})< 0\), then there exists a
\(x^*_-(b)>x^*_+(b)\) point of maximum such that
\(\tilde \phi'(x^*_-(b);b,\sqrt{3(1-b)})=0\) and
\(\tilde \phi''(x^*_-(b);b,\sqrt{3(1-b)})< 0\). We show that this is not
possible.

	Let us look at \(J_1'\). From \cref{J1Diffeq}, we see that
\(J_{1+}'(x)\) is positive iff
\(\frac{x}{\sqrt{1-x^2}}>\frac{b+2}{2\mu}\) or
\(x>\frac{2+b}{\sqrt{4\mu^2+(2+b)^2}}\). For \(\mu=\sqrt{3(1-b)}\) we
obtain exactly that \(x\) has to be greater than \(x^*_+(b)\). On the
other side, \(J_{1-}'(x)\) is positive iff
\(x>\frac{2-b}{\sqrt{4\mu^2+(2-b)^2}}\) which corresponds to
\(x>\frac{2-b}{\sqrt{b^2-16b+16}}\) for \(\mu=\sqrt{3(1-b)}\). This
quantity is inferior to \(x^*_+(b)\) so also \(J_{1-}'(x)\) is positive.
Then \(J_1'(x^*_-(b))>0\). Suppose \(x^*_-(b)\geq x_{m_2}\), then
\(j_2'(x^*_-(b))\geq0\) and consequently \(\tilde \phi'(x^*_-(b))>0\),
which is not possible. So if it exists, \(x^*_-(b)\) must be smaller
than \(x_{m_2}\).

	We have
\(\tilde \phi'' = \frac{J_1j_2''-J_1''j_2}{j_2^2}-2\tilde \phi'\frac{j_2'}{j_2}\)
so when evaluating in \(x^*_-(b)\), the second term is null. The
quantities \(J_1(x^*_-(b))\), \(-j_2(x^*_-(b))\) and \(j_2''(x^*_-(b))\)
are strictly positive. Also,
\(J_1''=J_{1+}''J_{1-}+2J_{1+}'J_{1-}'+J_{1+}J_{1-}''\) and
\(J_{1\pm}''(x)=\frac{\mu}{2(1-x^2)^\frac{3}2}\) which are positive.
Since also \(J_{1\pm}\) and \(J_{1\pm}'\) are positive in \(x^*_-(b)\),
then \(J_1''(x^*_-(b))>0\). So
\(\tilde\phi''(x^*_-(b);b,\sqrt{3(1-b)})\) cannot be negative and this
leads to a contradiction.

	Consequently, for fixed \(b\in]0,1[\), there is a unique function
\(x\to \mu^*(x)\) such that \(\tilde{\phi}'(x;b,\mu^*(x))=0\) is well
defined for \(x\in]x^*_+(b),1[\) and takes values in
\(]-\infty,\sqrt{3(1-b)}[\). It is also continuous since it is defined
as the first root of a second degree polynomial with continuously
changing parameters. In particular, \(\mu^*\) is defined as in
\cref{eqVanishingmustar}.

Note that the function \(\mu^*\) cannot be extended to \(b=1\) since the
inferior point of \(\tilde \phi\) is reached at \(1\) for every
\(\mu< 0\). On the other hand, it can be extended to \(b=0\) since all
the previous statements still hold.

	We have already seen that for \(x\) in the preimage of \(]-\infty,0]\),
the function \(\mu^*(x)\) is injective. We show that this still holds in
the preimage of \(]0,\sqrt{3(1-b)}[\). If this is not the case, there
exists a critical point \(\hat x\) such that \({\mu^*}'(\hat x) = 0\).
Taking the derivative with respect to \(x\) at the members of
\(\tilde\phi'(x;\mu^*(x))=0\), we obtain
\(\tilde\phi''(x;\mu^*(x)) + \partial_\mu\tilde\phi'(x;\mu^*(x)){\mu^*}'(x)=0\),
so at \(\hat x\) it holds simultaneously
\(\tilde\phi'(\hat x;\mu^*(\hat x))=0\) and
\(\tilde\phi''(\hat x;\mu^*(\hat x))=0\). We want to prove that this is
not possible.

At \(\hat x\), we have
\(\tilde\phi''\mid_{\mu=\mu^*(\hat x)} = \frac{J_1j_2''-J_1''j_2}{j_2^2}\mid_{\mu=\mu^*(\hat x)}\).
Since we have already proven that \(j_2''>0\) on
\(\bigl]\frac{1}{2},1\bigr[\), it is enough to prove that \(J_1''>0\).
We consider \(\mu\geq0\). It can be shown that
\(J_1''(x) = \frac{(4- b^{2}-4\mu^2)(1-x^2)^2 + 4\mu\sqrt{1 - x^{2}}(2x^{3} - 3x + 2)}{8(1-x^{2})^2}\)
and \(J_1'''(x) = \frac{3 \mu (2 x - 1)}{2(1 - x^{2})^\frac{5}2}\),
which is positive iff \(x>\frac{1}2\). So \(J_1''\) is increasing. We
have
\(J_1''\bigl(\frac{1}{2}\bigr)=-\frac{b^2}8-\frac{\mu^2}2 + \frac{\mu}{\sqrt{3}} + \frac{1}2\)
which is positive iff \(-4\mu^2+\frac{8}{\sqrt{3}}\mu+4-b^2>0\); this
concave parabola is positive at zero and its positive root is
\(\frac{2+\sqrt{16-3b^2}}{2\sqrt{3}}\). This quantity is greater than
\(\sqrt{3(1-b)}\) so \(J_1''\) is always positive for
\(\mu \in [0, \sqrt{3(1-b)}[\).

	The proof that the function \(\mu^*(x)\) is injective corresponds to the
proof of the uniqueness of \(x^*(b,\mu)\) in the set \(]x^*_+(b),1[\).
In order to obtain the uniqueness in the whole
\(\bigl]\frac{1}{2},1\bigr[\), we prove now that \(\tilde \phi\) is
strictly decreasing in \(\bigl]\frac{1}{2},x^*_+(b)\bigr[\), so that any
critical point \(x^*\) cannot live in this set. For the continuity of
\(\mu^*\) and since \(\mu^*(1)=-\infty\), for every \(\mu\leq 0\) there
exists \(x>x^*_+(b)\) such that \(\mu^*(x)=\mu\). Furthermore, for any
fixed \(x\), we have proven that there is at most one possible
\(\mu\leq 0\) such that \(\tilde \phi'(x;b,\mu)=0\). So if
\(x< x^*_+(b)\), there is no \(\mu\leq 0\) satisfying the latter
equation.

Consider now \(\mu> 0\). Firstly, we prove that \(\tilde \phi'\) is
increasing with respect to \(\mu\). From \cref{Derfeq} it is evident
that it holds true iff
\(\mu<\frac{(1-x)(2x^2-8x-1)}{\sqrt{1-x^2}(2x^2-2x-1)}\). The right hand
side is a decreasing function for \(x\in\bigl]\frac{1}2,1\bigr[\) so it
is enough to check the inequality in \(x^*_+(b)\). Here the condition
becomes \(\mu<\frac{b^2-8}{b^2+4b-8}\sqrt{3(1-b)}\) and this is greater
than \(\sqrt{3(1-b)}\), so the inequality holds true.

So in order to prove that \(\tilde \phi'(x;b,\mu)\) is negative, it is
enough to prove that \(\tilde \phi'(x;b,\sqrt{3(1-b)})\) is
non-positive. Suppose now that there exists \(x< x^*_+(b)\) such that
\(\tilde \phi'(x;b,\sqrt{3(1-b)})> 0\). Since
\(\tilde \phi'(\frac{1}2;b,\sqrt{3(1-b)})\) is negative, there is an
intermediate point at which \(\tilde \phi'\) is null and then becomes
positive up to \(x\). Also \(\tilde \phi'(x^*_+(b);b,\sqrt{3(1-b)})\) is
null so there is \(x^*_-< x^*_+(b)\) such that
\(\tilde \phi'(x^*_-;b,\sqrt{3(1-b)})>0\) and
\(\tilde \phi''(x^*_-;b,\sqrt{3(1-b)})= 0\). However it holds that
\(\tilde \phi''(x^*_-;b,\sqrt{3(1-b)}) = \frac{J_1(x^*_-)j_2''(x^*_-)-J_1''(x^*_-)j_2(x^*_-)}{j_2(x^*_-)^2} - 2\tilde \phi'(x^*_-)\frac{j_2'(x^*_-)}{j_2(x^*_-)}\).
We have already proven that \(J_1''\) is positive for all
\(x\in\bigl]\frac{1}{2},1\bigr[\) so the first term is positive in
\(x^*_-< x^*_+(b)\). Then, it must be
\(\tilde \phi'(x^*_-)\frac{j_2'(x^*_-)}{j_2(x^*_-)}>0\) or equivalently
\(j_2'(x^*_-)< 0\). Note that \(J_1'(x^*_-)\) is negative because
\(J_1''\) is positive in \(\bigl]\frac{1}2,1\bigr[\) and for
\(\mu=\sqrt{3(1-b)}\) the function \(J_1\) is null at \(x^*_+(b)\). Then
\(\tilde \phi'(x^*_-) = \frac{J_1(x^*_-)j_2'(x^*_-)-J_1'(x^*_-)j_2(x^*_-)}{j_2(x^*_-)^2}\)
is negative, which is a contradiction.

	To sum up, we have shown that for fixed \(b\in[0,1[\) and
\(\mu<\sqrt{3(1-b)}\), there is only one \(x\) such that
\(\tilde \phi'(x;b,\mu)=0\) and this \(x\) lives in \(]x^*_+(b),1[\).
Furthermore, for fixed \(b\in[0,1[\) and \(x\in]x^*_+(b),1[\), there is
only one \(\mu\) in \(]-\infty,\sqrt{3(1-b)}[\) such that
\(\tilde \phi'(x;b,\mu)=0\). If \(b=1\) and \(\mu< 0\), the function
\(\tilde \phi(x;1,\mu)\) is decreasing and reaches its infimum
\(-\frac{\mu}{2}\) at \(x=1\).

\subsubsection{Numerical illustration of
\Cref{PropVanishingUp}}\label{numerical-illustration-of}

	\Cref{FigureVanishingUpwardProp} compares the no arbitrage sub-domain for \(\sigma\)
reported in \Cref{LemmaVanishingSubDomain} with the full no arbitrage
domain found in \Cref{PropVanishingUp}, as functions of \(b\). The red
line represents the sub-domain. The blue line corresponds to the full
domain for \(x^*\) equal to the zero of the \(\mu^*\)-function. The
reason why we take this point, which depends on \(b\), is that the
sub-domain is defined for \(\mu\leq 0\), so \(x^*\) must be greater than
the zero of \(\mu^*(\cdot)\). Then, the green line shows the full domain
for \(x^*=0.99\) (for the chosen values of \(b\), \(0.99\) is always
greater than the zero of \(\mu^*(\cdot)\)) and the light blue line is
the full domain for an intermediate point between the blue and the green
ones. As expected, the red line is above all the other lines.

\begin{figure}
	\centering
	\includegraphics[width=.7\textwidth]{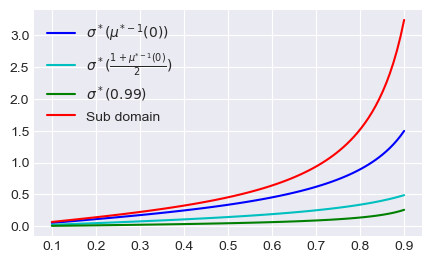}
	\caption{Comparison between the Vanishing Upward SVI arbitrage-free sub-domain and domain as functions of $b$.}
	\label{FigureVanishingUpwardProp}
\end{figure}
    
\subsection{Vanishing (Downward) SVI}\label{vanishing-downward-svi}

	The Vanishing Downward SVI is the sub-SVI obtained by setting
\(\rho=-1\) and \(a=0\). The corresponding SVI formula becomes
\begin{equation*}
SVI(k;0,b,-1,m,\sigma) = b(-k+m + \sqrt{(k-m)^2+\sigma^2}).
\end{equation*}

	We plot in \Cref{FigureVanishingDownward} a Vanishing Downward SVI with \(b=\frac{1}{2}\), \(m=1\)
and \(\sigma=1\).

\begin{figure}
	\centering
	\includegraphics[width=.7\textwidth]{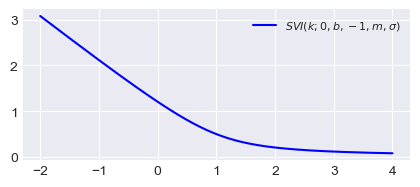}
	\caption{Vanishing Downward SVI with $b=\frac{1}{2}$, $m=1$ and $\sigma=1$.}
	\label{FigureVanishingDownward}
\end{figure}
    
	From
\Cref{PrepInverseSVI}, we know that the Vanishing Downward SVI is
arbitrage-free iff \(SVI(k;0,b,1,-m,\sigma)\) is arbitrage-free, and
this corresponds to a Vanishing Upward SVI. This means that the previous
results still hold for the Vanishing Downward SVI setting
\(m(\rho=-1)=-m(\rho=1)\). In particular, we redefine the quantity
\(\sigma^*\) as
\[\sigma^*(x) := -\frac{4b\sqrt{1-x^{2}}(1-x -2x^{2})}{4\bigl(2-x+\mu^*(x) \sqrt{1-x^{2}}\bigr)^2-b^2(1+x)^2}\]
and \Cref{PropVanishingUp} becomes:

\begin{proposition}[Fully explicit no arbitrage domain for the Vanishing Downward SVI]\label{PropVanishingDown}
A Vanishing Downward SVI with $b=1$ is arbitrage-free iff $\mu> 0$ and $\sigma\geq\frac{\mu}{2}$.

A Vanishing Downward SVI with $0< b< 1$ is arbitrage-free iff it can be parametrized as
\begin{equation}\label{eqSVINewParamVanishingDown}
SVI(k) = b\sigma\Biggl(-\frac{k}\sigma-\mu^*(x) + \sqrt{\biggl(\frac{k}\sigma+\mu^*(x)\biggr)^2+1}\Biggr)
\end{equation}
where $\frac{2+b}{4-b}< x< 1$ and $\sigma\geq \sigma^*(x)$.
\end{proposition}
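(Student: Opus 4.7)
The plan is to obtain this result as a direct corollary of \Cref{PrepInverseSVI} combined with \Cref{PropVanishingUp}. Since the Vanishing Downward SVI is $SVI(k;0,b,-1,m,\sigma)$, \Cref{PrepInverseSVI} asserts that it is Butterfly arbitrage-free if and only if $SVI(k;0,b,1,-m,\sigma)$, which is a Vanishing Upward SVI, is arbitrage-free. So the whole task reduces to applying \Cref{PropVanishingUp} to this inverted smile, where the reduced parameter $\mu = m/\sigma$ is replaced by $-\mu$ (while $b$ and $\sigma$ remain unchanged).

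For the case $b=1$, the Upward criterion ``$\mu_{up} < 0$ and $\sigma \geq -\mu_{up}/2$'' of \Cref{PropVanishingUp}, applied with $\mu_{up} = -\mu$, gives exactly $\mu > 0$ and $\sigma \geq \mu/2$. For $0 < b < 1$, I take the Upward parametrization of \Cref{PropVanishingUp} with $-m$ in place of $m$ and read it at the reflected argument $k' = -k$: the algebraic identity
\[
b\bigl(-k+m+\sqrt{(k-m)^2+\sigma^2}\bigr) \;=\; b\bigl(k'-(-m) + \sqrt{(k'-(-m))^2+\sigma^2}\bigr)
\]
confirms that the Downward SVI in $k$ is precisely the Upward SVI at $k' = -k$. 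Substituting $k' = -k$ into the Upward parametrization \cref{eqSVINewParamVanishingUp} produces \cref{eqSVINewParamVanishingDown}, while the bound $\frac{2+b}{4-b} < x < 1$ is unchanged because $\mu^*(x)$ depends only on $b$, not on the sign of $\mu$. Similarly, the only effect of the substitution $\mu^* \to -\mu^*$ on the Upward $\sigma^*$ formula \cref{eqVanishingsigmastar} is to flip the sign of the $\mu^*(x)\sqrt{1-x^2}$ term in the denominator, which is precisely the revised $\sigma^*$ displayed just above the Proposition.

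There is no genuine obstacle: everything is mechanical once the inverse-smile symmetry is invoked. The only care required is the bookkeeping of the sign flip $m \to -m$ (equivalently $\mu \to -\mu$) through the Upward parametrization and through the denominator of $\sigma^*$, together with the observation that $\mu^*(x)$ itself is insensitive to this flip since its defining equation depends only on $b$ and $x$.
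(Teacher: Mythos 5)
Your overall approach mirrors the paper's exactly: the paper also derives \Cref{PropVanishingDown} as a corollary of \Cref{PrepInverseSVI} applied to \Cref{PropVanishingUp}, and your treatment of the $b=1$ case and of the parametrization \cref{eqSVINewParamVanishingDown} (via $\mu_d=-\mu^*(x)$, $\mu^*$ the Upward function of \cref{eqVanishingmustar}) is the same computation.

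There is, however, a genuine issue in your last step, and it is worth spelling out because it also seems to affect the paper's redefined $\sigma^*$. You justify the revised $\sigma^*$ by ``the substitution $\mu^*\to-\mu^*$'' in \cref{eqVanishingsigmastar}, but that substitution is not what \Cref{PrepInverseSVI} gives. What \Cref{PrepInverseSVI} gives is that a Downward SVI with reduced parameter $\mu_d$ has \emph{exactly the same} $\sigma$-threshold as the Upward SVI with $\mu_u=-\mu_d$; the two smiles are reflections of each other and the Butterfly-arbitrage condition on $\sigma$ is identical. In the Upward $x$-parametrization one has $\mu_u=\mu^*(x)$ and threshold $\sigma^*(x)$ given by \cref{eqVanishingsigmastar}; the corresponding Downward smile (with $\mu_d=-\mu^*(x)$, yielding \cref{eqSVINewParamVanishingDown}) therefore has the \emph{same} threshold $\sigma^*(x)$, i.e.\ the denominator should still read $4\bigl(2-x-\mu^*(x)\sqrt{1-x^2}\bigr)^2-b^2(1+x)^2$. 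One can also check this directly: using $x=-l/\sqrt{l^2+1}$ for $l<l_1$, the Downward smile gives $J_1(x;\mu_d)=\bigl(1-\tfrac{x}{2}+\tfrac{\mu_d}{2}\sqrt{1-x^2}\bigr)^2-\tfrac{b^2}{16}(1+x)^2$, and substituting the parametrized value $\mu_d=-\mu^*(x)$ reproduces the Upward denominator, not the sign-flipped one. So the passage ``the only effect of the substitution $\mu^*\to-\mu^*$ \ldots'' should be replaced by the observation that the $\sigma^*$-bound is invariant under the inversion, and the displayed Downward $\sigma^*$ with $+\mu^*(x)\sqrt{1-x^2}$ appears to carry a sign slip that your proof should not inherit.
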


\section{Extremal Decorrelated SVI}\label{extremal-decorrelated-svi}
	
	The coefficient \(\rho\) in SVI should correspond to the leverage factor
in stochastic volatility models, i.e.~the stock/vol returns correlation.
In particular, when it is zero, the volatility is independent from the
stock which leads to symmetric smiles. In terms of SVI parameters this
means that \(m\) should also be zero.

This is not automatically enforced in SVI, where \(\rho\) and \(m\) are
distinct parameters. Therefore we call:

\begin{itemize}
	\item
	\emph{Decorrelated SVI} sets of SVI parameters where \(\rho=0\);
	\item
	\emph{Extremal Decorrelated SVI} a Decorrelated SVI with \(b=2\);
	\item
	\emph{Symmetric SVI} a Decorrelated SVI with \(m=0\).
\end{itemize}

The two latter families intersect into an SVI with \(b=2\) and
\(\rho=m=0\).

	The \emph{Extremal Decorrelated} SVI is the sub-SVI obtained by setting
\(b=2\) and \(\rho=0\). The corresponding SVI formula becomes
\begin{equation*}
SVI(k;a,2,0,m,\sigma) = a+2\sqrt{(k-m)^2+\sigma^2}.
\end{equation*}

With our notations, \(N(l) = \gamma + \sqrt{l^2+1}\).

	We plot in \Cref{FigureExtremal} an Extremal Decorrelated SVI with \(a=8\), \(m=2\) and
\(\sigma=2\).

\begin{figure}
	\centering
	\includegraphics[width=.7\textwidth]{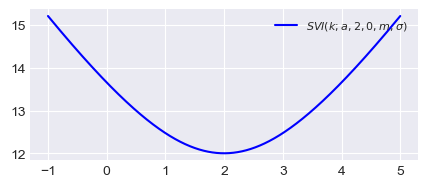}
	\caption{Extremal Decorrelated SVI with $a=8$, $m=2$ and $\sigma=2$.}
	\label{FigureExtremal}
\end{figure}
    
\subsection{The Fukasawa conditions}\label{the-fukasawa-conditions}

	The Roger Lee conditions on \(b\) and \(\rho\) are satisfied. In
subsection 5.2.5 of \cite{martini2020no} we prove that the Fukasawa
conditions are satisfied iff \(\gamma>\tilde F(2,0)=0\) and
\(\mu\in I_{\gamma,2,0}=]-\gamma,\gamma[\). Setting \(\mu=q\gamma\) we
obtain the following.

	\begin{lemma}[Fukasawa conditions for the Extremal Decorrelated SVI]

An Extremal Decorrelated SVI satisfies the Fukasawa conditions iff it can be parametrized by
\begin{equation}\label{eqSVINewParamExtremalFuk}
2 \sigma \biggl(\gamma + \sqrt{\Bigl(\frac{k}{\sigma}-q\gamma\Bigr)^2+1} \biggr)
\end{equation}
with $\gamma >0$, $q \in ]-1,1[$ and $\sigma>0$.

\end{lemma}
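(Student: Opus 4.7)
The plan is to reduce the statement to a direct application of the third bullet of \Cref{TheoFuk}, followed by an elementary change of variable. Since $b = 2$ and $\rho = 0$, we have $b(1 - \rho) = b(1 + \rho) = 2$, which places us exactly in the third case of \Cref{TheoFuk}. That bullet already gives $\tilde F(2, 0) = 0$ and asserts that the Fukasawa conditions for $(\gamma, 2, 0, \mu)$ are equivalent to $\gamma > 0$ together with $\mu \in {]-\gamma, \gamma[}$. This is a complete characterization at the level of $(\gamma, b, \rho, \mu, \sigma)$; everything else is rewriting.

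Next I would set $q := \mu/\gamma$. Since $\gamma > 0$, the admissible set $\mu \in {]-\gamma, \gamma[}$ translates bijectively into $q \in {]-1, 1[}$, and the Fukasawa conditions become $\gamma > 0$, $q \in {]-1, 1[}$, $\sigma > 0$. To obtain the claimed formula I unpack the definitions $\gamma = a/(b\sigma) = a/(2\sigma)$ and $\mu = m/\sigma$, which give $a = 2\sigma\gamma$ and $m = \sigma q \gamma$. Substituting into $SVI(k; a, 2, 0, m, \sigma) = a + 2\sqrt{(k-m)^2 + \sigma^2}$ and pulling $\sigma$ out of the square root yields
\[
2\sigma\gamma + 2\sigma\sqrt{\Bigl(\tfrac{k}{\sigma} - q\gamma\Bigr)^{2} + 1} \;=\; 2\sigma\Bigl(\gamma + \sqrt{\bigl(\tfrac{k}{\sigma} - q\gamma\bigr)^{2} + 1}\Bigr),
\]
which is exactly \eqref{eqSVINewParamExtremalFuk}.

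There is no real obstacle: the entire content lies in \Cref{TheoFuk}, and the lemma's added value is cosmetic, namely the reparametrization by $q$. Its role is to decouple the admissible range of the translation parameter from $\gamma$, producing fixed parameter ranges $\gamma > 0$, $q \in {]-1,1[}$, $\sigma > 0$ that are convenient for the subsequent analysis of the full no-arbitrage domain.
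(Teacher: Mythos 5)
Your proof is correct and mirrors the paper's argument: the paper cites the $b=2$, $\rho=0$ case of Theorem~\ref{TheoFuk} (from subsection 5.2.5 of \cite{martini2020no}), sets $\mu = q\gamma$, and unpacks $\gamma = a/(2\sigma)$, $\mu = m/\sigma$ to obtain the displayed parametrization. Nothing is missing.
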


\subsection{The condition on $\sigma$}\label{the-condition-on-sigma}

	We will prove the following characterization of the no Butterfly
arbitrage domain for the Extremal Decorrelated SVI:

\begin{proposition}[Fully explicit no arbitrage domain for the Extremal Decorrelated SVI]\label{PropExtremal}

An Extremal Decorrelated SVI is arbitrage-free iff it can be parametrized as
\begin{equation}\label{eqSVINewParamExtremal}
SVI(k) = 2 \sigma \biggl(\gamma + \sqrt{\Bigl(\frac{k}{\sigma}-q\gamma\Bigr)^2+1} \biggr)
\end{equation}
with $\gamma >0$, $q \in ]-1,1[$ and $\sigma\geq\frac{1}{\gamma(1-|q|)}$.

\end{proposition}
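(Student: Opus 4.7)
The Fukasawa conditions $\gamma > 0$, $q \in (-1, 1)$ having been established in the preceding lemma, \Cref{FinalTheo} reduces the task to the computation of $\sigma^*(\gamma, 2, 0, q\gamma)$. By \Cref{PrepInverseSVI} applied to the Extremal Decorrelated SVI, the parameter sets $(a, 2, 0, m, \sigma)$ and $(a, 2, 0, -m, \sigma)$ are simultaneously arbitrage-free, so it is enough to treat $q \in [0, 1)$ and deduce the general case by replacing $q$ with $|q|$ in the final formula. For such $q$, \Cref{Lemmafrho0} (with $\rho = 0$ and $\mu = q\gamma \geq 0$) gives $\sigma^*(\gamma, 2, 0, q\gamma) = 1/\inf_{l > l_2} \tilde f(l)$, where $\tilde f = -G_1/g_2 > 0$ on $]l_2, \infty[$ (by Fukasawa, $G_1 > 0$ there, and $g_2 < 0$ beyond its only positive zero $l_2$ since $g_2(0) = 1 > 0$). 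The proposition is thus equivalent to showing $\inf_{l > l_2} \tilde f(l) = \gamma(1-q)$.

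The upper bound $\inf \tilde f \leq \gamma(1-q)$ comes from the asymptotics at infinity. With $s = \sqrt{l^2+1}$ and the factorization $G_1 = (h-2g)(h+2g)$, a direct expansion yields
\[h - 2g = \frac{s(2\gamma-l) + l^2 + 2 - l\gamma(1+q)}{2s(\gamma+s)} \sim \frac{\gamma(1-q)}{2l}, \qquad h + 2g \to 1, \qquad g_2(l) \sim -\frac{1}{2l},\]
so that $\tilde f(l) \to \gamma(1-q)$ as $l \to \infty$. This reflects the boundary nature of $b = 2$: the factor $h - 2g$ vanishes asymptotically, forcing $G_1 \to 0$ and placing the infimum of $\tilde f$ at infinity, a phenomenon absent when $b < 2$.

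The core of the proof is the matching pointwise bound $\tilde f(l) \geq \gamma(1-q)$ for all $l > l_2$, which is equivalent (since $g_2 < 0$ there) to $G_1(l) + \gamma(1-q) g_2(l) \geq 0$ on $]l_2, \infty[$. Clearing the common denominator $4 s^3 (\gamma+s)^2$ and setting $A := s(2\gamma-l) + l^2 + 2 - l\gamma(1+q)$ and $B := s(2\gamma+l) + l^2 + 2 + l\gamma(1-q)$, this inequality becomes
\[s A B + 2\gamma(1-q) (\gamma+s) \bigl(2(\gamma+s) - s l^2\bigr) \geq 0,\]
with $s A B > 0$ (since Fukasawa gives $A, B > 0$), while the second summand is non-positive for $l \geq l_2$ (it equals $4 s^3 (\gamma+s)^2 \gamma(1-q) g_2(l)$ up to sign). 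The main obstacle is to show that the positive contribution dominates; I expect to attack this by grouping the expanded polynomial in $(l, s, \gamma, q)$ by powers of $\gamma$ so that each coefficient becomes a manageable polynomial in $(l, s)$ whose sign can be read off directly (the $\gamma^2$ coefficient already factors as $\gamma^2[(lq - 2s)^2 - l^2]$, suggesting a sum-of-squares structure), or, equivalently, to show that $\tilde f$ has no interior critical point on $]l_2, \infty[$, in which case the boundary values $\tilde f(l_2^+) = +\infty$ and $\tilde f(+\infty) = \gamma(1-q)$ force monotone decay to the infimum. Once established, $\inf \tilde f = \gamma(1-q)$ (attained only asymptotically), yielding $\sigma^*(\gamma, 2, 0, q\gamma) = 1/(\gamma(1-q))$ for $q \in [0, 1)$ and, by the symmetry argument, $\sigma^* = 1/(\gamma(1-|q|))$ in general.
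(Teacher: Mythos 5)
Your reduction to the core inequality is correct and mirrors the paper's own proof essentially step by step: using \Cref{PrepInverseSVI} to treat only $q \in [0,1)$, invoking \Cref{Lemmafrho0} to restrict attention to $l > l_2$, computing $\lim_{l\to\infty}\tilde f(l) = \gamma(1-q)$, and reducing the claim $\inf_{l > l_2}\tilde f = \gamma(1-q)$ to the pointwise bound $G_1(l) + \gamma(1-q)\,g_2(l) \geq 0$. Your factorization $G_1 = (h-2g)(h+2g)$ with $A = s(2\gamma-l)+l^2+2-l\gamma(1+q)$ and $B = s(2\gamma+l)+l^2+2+l\gamma(1-q)$ checks out, as does the identity that $(\gamma+s)(2(\gamma+s)-sl^2)$ is (up to a positive denominator) $g_2$.

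However, the proof is incomplete at precisely the point where the work lies. You write \emph{``I expect to attack this by grouping the expanded polynomial \ldots\ or, equivalently, to show that $\tilde f$ has no interior critical point''} — that is a plan, not a proof, and neither alternative is carried out. The inequality $sAB + 2\gamma(1-q)(\gamma+s)\bigl(2(\gamma+s)-sl^2\bigr) \geq 0$ is non-trivial because the second summand is strictly negative on $]l_2,\infty[$: one must genuinely show that $sAB$ dominates it, not merely observe that $sAB > 0$. Your remark that the $\gamma^2$-coefficient of $AB$ is $(2s-lq)^2 - l^2$ (which is positive, since $2s > l(1+q)$) is a promising start, but the remaining coefficients are not all obviously positive — for instance the $\gamma$-free part of the added term is proportional to $(1-q)(2-l^2)(l^2+1)$, which changes sign — so a ``sum-of-squares'' structure is not self-evident and the grouping has to be done carefully. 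The paper completes this step by switching to $y = \sqrt{l^2+1}$, clearing denominators, bounding the single radical term via $\sqrt{y^2-1} < y$ (valid since $q\gamma y > 0$), and checking that every coefficient of the resulting cubic-in-$y$ lower bound
\[
\bigl(\gamma^2(1-q)^2 + 3\bigr)y^3 + 4\gamma(3-2q)y^2 + \bigl(\gamma^2(1-q)(q+11)+1\bigr)y + 4\gamma^3(1-q)
\]
is strictly positive for $\gamma > 0$, $q \in [0,1)$. That explicit verification, or an equivalent one, is exactly what your proposal is missing.
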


\subsubsection{Proof of
\Cref{PropExtremal}}\label{proof-of}

	Setting \(\mu=q\gamma\) with \(q\in]-1,1[\), the function \(G_1\) is
\begin{equation*}
G_1(l) = \frac{- 2q\gamma l^3 + l^2((3 + q^2)\gamma^{2} + 3) - 4q\gamma l + 4(1 + \gamma^2) + 2\gamma\sqrt{l^{2} + 1}(l^2 - 2q\gamma l + 4)}{4(l^2+1)(\gamma+\sqrt{l^2+1})^2}
\end{equation*} and since \(\rho=0\),
\(g_2(l) = -\frac{(l^2-2)\sqrt{l^2+1}-2\gamma}{2(l^2+1)^\frac{3}{2}(\gamma +\sqrt{l^2+1})}\)
is symmetric, so \(l_1=-l_2\). In particular, working with \(l\geq 0\),
to find its roots it is sufficient to solve a third degree equation in
\(\sqrt{l^2+1}\).

	We operate the substitution \(y=\sqrt{l^2+1}\) so that \begin{align*}
G_1(l(y)) &= \frac{2\gamma y^3 + ((3+q^2)\gamma^2+3)y^2+ 6\gamma y + 1 + (1-q^2)\gamma^2 -2q\gamma l(y)(y^2+2\gamma y+1)}{4y^2(y+\gamma)^2},\\
g_2(l(y)) &= -\frac{y^3-3y-2\gamma}{2y^3(y+\gamma)}.
\end{align*}

	From \Cref{Lemmafrho0}, it follows that the function
\(f=-\frac{bg_2}{2G_1}\) is such that for \(l>l_2\) and \(q\) negative,
\(f(l)< f(-l)\) while for \(q\) positive, \(f(l)> f(-l)\) and for
\(q=0\) it is symmetric. So if we want to find the supremum of \(f\)
in \(]-\infty,l_1[\cup]l_2,\infty[\), it is enough to look at
\(]-\infty,l_1[\) for \(q< 0\) and \(]l_2,\infty[\) for \(q\geq0\). Note
also that \(f(l;\gamma,q)=f(-l;\gamma,-q)\).

	Suppose \(q\geq 0\) and \(l>l_2\). The function \(f\) has its limit at
\(\infty\) equal to \(\frac{1}{\gamma(1-q)}\). We want to prove that
this is also the supremum of \(f\). This would be true iff
\(f(l)<\frac{1}{\gamma(1-q)}\) for every \(l\) or equivalently iff
\(-g_2(l)\gamma(1-q)< G_1(l)\). Using the change of variable, this reads
also
\[\frac{y^3-3y-2\gamma}{y^3(y+\gamma)}\gamma(1-q) < \frac{2\gamma y^3 + ((3+q^2)\gamma^2+3)y^2+ 6\gamma y + 1 + (1-q^2)\gamma^2 -2q\gamma\sqrt{y^2-1}(y^2+2\gamma y+1)}{2y^2(y+\gamma)^2}.\]
Multiplying by \(2y^3(y+\gamma)^2\) and simplifying, one gets
\begin{multline*}
2 \gamma q y^{4} + (\gamma^{2}(1+q)^{2} + 3)y^{3} + 6\gamma(2-q)y^{2} + (\gamma^{2} (1-q)(q+11) + 1)y +\\
4 \gamma^{3} (1-q) - 2\gamma qy\sqrt{y^2-1}(y^{2} + 2\gamma y + 1) >0.
\end{multline*} Since \(\sqrt{y^2-1}< y\), the LHS is greater than
\[(\gamma^{2} (1-q)^{2} + 3)y^{3} + 4\gamma(3-2q)y^{2} + (\gamma^{2} (1-q)(q+11) + 1)y + 4 \gamma^{3} (1-q).\]
Each of the coefficients of this polynomial is positive and \(y>0\) so
the whole polynomial is positive and \(f\) reaches its supremum at
\(\infty\).

	In the case \(q<0\), the conclusion follows immediately from
\(\Cref{PrepInverseSVI}\).

\section{Symmetric SVI}\label{symmetric-svi}

	The Symmetric SVI is the sub-SVI obtained by setting \(\rho=0\) and
\(m=0\). The corresponding SVI formula becomes \begin{equation*}
SVI(k;a,b,0,0,\sigma) = a+b\sqrt{k^2+\sigma^2}.
\end{equation*} With our notations, \(N(l) = \gamma+\sqrt{l^2+1}\).

	We plot in \Cref{FigureSymmetric} a Symmetric SVI with \(a=0.64\), \(b=1.6\) and
\(\sigma=0.4\).

\begin{figure}
	\centering
	\includegraphics[width=.7\textwidth]{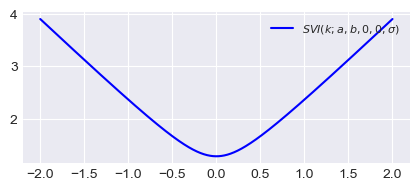}
	\caption{Symmetric SVI with $a=0.64$, $b=1.6$ and $\sigma=0.4$.}
	\label{FigureSymmetric}
\end{figure}
    
\subsection{The Fukasawa conditions}\label{the-fukasawa-conditions}

	The Roger Lee conditions imply \(0< b\leq 2\). In Appendix B of
\cite{martini2020no} we prove that the Fukasawa conditions are satisfied
iff
\(\gamma>\tilde F(b,0)=g_{-(b,0)}\Bigl(-\frac{6b}{\sqrt{b^4-20b^2+64}}\Bigr)=-\frac{(b^2+32)\sqrt{4-b^2}}{(16-b^2)^{\frac{3}{2}}}\).
Since \(\rho=0\), the interval for \(\mu\) is symmetric and being
\(\mu=0\), the Fukasawa condition on \(\mu\) is automatically satisfied.

	\begin{lemma}[Fukasawa conditions for the Symmetric SVI]

A Symmetric SVI with $0< b\leq 2$ satisfies the Fukasawa conditions iff $\gamma>\tilde F(b)=-\frac{(b^2+32)\sqrt{4-b^2}}{(16-b^2)^{\frac{3}{2}}}$.

\end{lemma}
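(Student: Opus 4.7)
The plan is to invoke Theorem~\ref{TheoFuk} with $\rho = 0$ and then to make the threshold explicit. Since the Symmetric SVI has $\mu = 0$, the condition $\mu \in I_{\gamma, b, 0}$ reduces to asking that the interval $I_{\gamma, b, 0}$ contain $0$. With $\rho = 0$ we have $g_{-(b, 0)} = g_{-(b, -0)}$, so $l_-(\gamma, b, 0) = l_-(\gamma, b, -0)$ and $I_{\gamma, b, 0} = \,]L_-(l_-), -L_-(l_-)[\,$ is symmetric about zero. Hence $0 \in I_{\gamma, b, 0}$ iff the interval is non-empty, iff $L_-(l_-; \gamma, b, 0) < 0$, which Theorem~\ref{TheoFuk} characterizes as $\gamma > \tilde F(b, 0)$.

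It remains to compute $\tilde F(b, 0)$ in closed form. Specializing the general formulas to $\rho = 0$ gives
\begin{align*}
g_{-(b, 0)}(l) &= \tfrac{1}{2}\sqrt{l^2+1}\,(l^2 - 2) + \tfrac{b}{4}\, l^3, \\
L_-(l; \gamma, b, 0) &= 2\bigl(\gamma + \sqrt{l^2+1}\bigr)\Bigl(\tfrac{\sqrt{l^2+1}}{l} + \tfrac{b}{4}\Bigr) - l.
\end{align*}
Since $\gamma \mapsto L_-(l_-(\gamma, b, 0); \gamma, b, 0)$ is continuous, the threshold is attained at the boundary value where $L_-(l_-) = 0$. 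My strategy is to \emph{reparametrize by $l$}: substitute $\gamma = g_{-(b, 0)}(l)$ into $L_-(l; \gamma, b, 0) = 0$, producing a single equation $P(l, b) = 0$. Setting $y = \sqrt{l^2 + 1}$ (so $l^2 = y^2 - 1$) collapses this into a low-degree polynomial identity whose relevant root yields the sought $l_-$.

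The main obstacle is the algebraic verification that the relevant root is $l = -\frac{6b}{\sqrt{b^4 - 20 b^2 + 64}}$, and that evaluating $g_{-(b, 0)}$ at this point simplifies to the claimed $-\frac{(b^2+32)\sqrt{4-b^2}}{(16 - b^2)^{3/2}}$. Once this algebra is settled, two routine checks close the argument: the root must lie below $l^* = 0$ (selecting the negative branch, consistent with $l_- < l^*$), and $\tilde F(b, 0)$ must exceed $-1 = -\sqrt{1 - \rho^2}$ so that the $\wedge\,(-\sqrt{1-\rho^2})$ in the definition of $\tilde F$ does not intervene; both facts can be read off directly from the closed-form expression, the latter as a one-variable inequality in $b \in \,]0,2]$.
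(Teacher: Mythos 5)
Your structural reduction is exactly the paper's: the paper observes that with $\rho=0$ the Fukasawa interval for $\mu$ is symmetric about the origin, so $\mu=0$ automatically lies in it whenever the interval is non-empty, and then defers the explicit formula for $\tilde F(b,0)$ to Appendix~B of \cite{martini2020no}. Your plan of substituting $\gamma = g_{-(b,0)}(l)$ into $L_-(l;\gamma,b,0)=0$ and solving for the critical $l$ is the same reparametrization computation that reference carries out, so in spirit you are reproducing the cited derivation rather than proposing an alternative route.

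Two small points worth tightening. First, your reduction via the interval $\bigl]L_-(l_-),-L_-(l_-)\bigr[$ invokes the first bullet of Theorem~\ref{TheoFuk}, which requires $b(1\pm\rho)<2$, i.e.\ $b<2$; the endpoint $b=2$ falls into the third bullet where $I_{\gamma,2,0}=\,]-\gamma,\gamma[\,$ and $\tilde F(2,0)=0$ by definition, which matches the closed form $-\frac{(b^2+32)\sqrt{4-b^2}}{(16-b^2)^{3/2}}\to 0$ as $b\to 2$, but this case should be stated explicitly rather than absorbed silently. Second, the heart of the lemma is precisely the algebra you label the ``main obstacle'' and leave unverified: confirming that the relevant root of $P(l,b)=0$ is $l=-\frac{6b}{\sqrt{b^4-20b^2+64}}$ and that $g_{-(b,0)}$ evaluated there collapses to the stated expression. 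Since the statement \emph{is} that closed form, a proof that stops short of carrying out this verification has not actually established the lemma — it has only reduced it to the computation the paper outsources. The two ``routine checks'' you mention (sign of the root, and $\tilde F(b,0)>-1$) are genuinely routine; the latter amounts to the inequality $(b^2+32)^2(4-b^2)<(16-b^2)^3$, which simplifies to $108\,b^4>0$.
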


	For every value of \(b\), we have \(\tilde F(b)\in ]-1,0]\) and in
particular \(\tilde F(2)=0\). We can extend the definition of
\(\tilde F\) to the point \(b=0\), setting \(\tilde F(0)=-1\).
Furthermore,
\(\frac{d\tilde F}{db}(b) = \frac{108b^3}{(16-b^2)^{\frac{5}2}\sqrt{4-b^2}}>0\)
for positive \(b\). So the inverse function
\(\tilde F^{-1}:[-1,0]\to[0,2]\) is well defined. As in
\cref{the-b-approach-reparametrizing-from-the-critical-point-equation},
we define \begin{equation}\label{eqGtilde1}
\tilde G(\gamma) =
\begin{cases}
\tilde F^{-1}(\gamma) & \text{if}\ \gamma\in]-1,0],\\
2 & \text{if}\ \gamma>0.
\end{cases}
\end{equation} It can be shown, working out the calculation of the roots
of a 3rd degree polynomial, that the explicit formula of \(\tilde G\) in
the case \(\gamma\leq 0\) is \begin{equation}\label{eqGtilde2}
\tilde G(\gamma) = 2\sqrt{\frac{6\sqrt{8\gamma^2+1}\cos\biggl(\frac{1}{3}\arccos\biggl(-\frac{8\gamma^4+20\gamma^2-1}{(8\gamma^2+1)^{\frac{3}2}}\biggr)\biggr) -4\gamma^2-5}{1-\gamma^2}}.
\end{equation}

\subsection{The condition on $\sigma$}\label{the-condition-on-sigma}

	Before enunciating the main Proposition of this chapter, we need to
introduce some notation. In this section, for positive \(l\)s we will
operate a change of variable setting \(z=\frac{1}{\sqrt{l^2+1}}\), whose
domain is \(]0,1]\). Then we define the functions \(J_1(z):=G_1(l(z))\),
\(j_2(z):=g_2(l(z))\), \(\eta(z):=h(l(z))\) and \(j(z):=g(l(z))\). In
particular, \begin{equation}\label{eqj2hg}
\begin{aligned}
j_2(z;\gamma) &= z\frac{2\gamma z^3+3z^2-1}{2(\gamma z+1)},\\
\eta(z;\gamma) &= 1-\frac{1-z^2}{2(1+\gamma z)},\\
j(z) &= \frac{1-z^2}{4}.
\end{aligned}
\end{equation}

In the following, the symbol \('\) indicates the derivative with respect
to \(z\) if not differently specified.

We also define the functions \begin{align}
\gamma^*(u) &:= u\sqrt{6u^3+15u^2+14u+6+(1+u)^2\sqrt{3(12u^2+12u+11)}},\label{eqgammab0}\\
b^*(z,\gamma) &:= \sqrt{\frac{\eta(z)(\eta(z)j_2'(z) - 2\eta'(z)j_2(z))}{j(z)(j(z)j_2'(z)-2j'(z)j_2(z))}},\label{Symmetriceqbstar}\\
\sigma^*(z,\gamma) &:= -\frac{b^*(z^*,\gamma)j_2(z^*)}{2(\eta(z^*)^2-b^*(z^*,\gamma)^2j(z^*)^2)}.\label{Symmetriceqsigmastar}
\end{align}

	The final result of this subsection will be:

\begin{proposition}[Fully explicit no arbitrage domain for the Symmetric SVI]\label{PropSymmetric}
An arbitrage-free Symmetric SVI must have $\gamma>-1$. 

A Symmetric SVI with $b=2$ is arbitrage-free iff $\gamma>0$ and $\sigma\geq\frac{1}{\gamma}$.

Call $z^*(\gamma^*(u),0)=\frac{u}{\gamma^*(u)}$ and $z^*(\gamma,\tilde G(\gamma)) = \frac{\sqrt{(4-\tilde G(\gamma)^2)(16-\tilde G(\gamma)^2)}}{\tilde G(\gamma)^2+8}$.
\begin{itemize}
	\renewcommand{\labelitemii}{\tiny$\square$}
	\item If $\gamma = -\sqrt{\frac{9+5\sqrt{3}}{18}}$, the Symmetric SVI is arbitrage-free iff $b< 2\sqrt{3\sqrt{3}-5}$ and \newline$\sigma\geq-\frac{bj_2(\hat z)}{2(\eta(\hat z)^2-b^2j(\hat z)^2)}$ where $\hat z=\sqrt{\frac{3-\sqrt{3}}2}$.
	\item If $\gamma\neq-\sqrt{\frac{9+5\sqrt{3}}{18}}$ and $b< 2$, a Symmetric SVI is arbitrage-free iff it can be parametrized as
	\begin{equation}\label{eqSVINewParamSymmetric}
		SVI(k) = \sigma b^*(z,\gamma^*(u))\biggl(\gamma^*(u)+\sqrt{\Bigl(\frac{k}{\sigma}\Bigr)^2+1}\biggr)
	\end{equation}
	where
	\begin{itemize}
		\item $u>-1$,
		\item $z\in]z^*(\gamma^*(u),0),z^*(\gamma^*(u),\tilde G(\gamma^*(u)))[$ if $\gamma^*(u)< -\sqrt{\frac{9+5\sqrt{3}}{18}}$ or \newline$z\in]z^*(\gamma^*(u),\tilde G(\gamma^*(u))),z^*(\gamma^*(u),0)[$ if $\gamma^*(u)>-\sqrt{\frac{9+5\sqrt{3}}{18}}$,
		\item $\sigma\geq\sigma^*(z,\gamma^*(u))$.
	\end{itemize}
\end{itemize}

\end{proposition}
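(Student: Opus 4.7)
The plan is to apply the $b^*$ reparametrization paradigm of Subsection \ref{the-b-approach-reparametrizing-from-the-critical-point-equation}, after isolating the two special regimes $b = 2$ and $\gamma = -\sqrt{(9+5\sqrt{3})/18}$. The case $b = 2$ is dispatched immediately: with $\mu = 0$, a Symmetric SVI at $b = 2$ is an Extremal Decorrelated SVI at $q = 0$, so \Cref{PropExtremal} yields $\gamma > 0$ and $\sigma \geq 1/\gamma$.

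For $0 < b < 2$, \Cref{Lemmafrho0} with $\mu = 0$ makes $f$ symmetric, so $\sigma^*$ is computed as an infimum of $\tilde\phi(z) = -J_1(z)/j_2(z)$ on $]0, z_2[$ after the substitution $z = 1/\sqrt{l^2+1}$. The framework requires: (i) monotonicity of $b \mapsto \tilde F(b)$, already established by the positive derivative recorded before \eqref{eqGtilde1}; (ii) uniqueness of the critical point of $\tilde\phi$; (iii) emptiness of $\{p = q = 0\}$ where $p(z) = \eta j_2' - 2\eta' j_2$ and $q(z) = j j_2' - 2 j' j_2$. For (ii), I would mimic the Vanishing SVI strategy, showing that any zero of $\tilde\phi'$ satisfies $\tilde\phi''(z) > 0$; this reduces to verifying $J_1(z) j_2''(z) - J_1''(z) j_2(z) > 0$, which can be attacked by exploiting the known sign of each factor on $]0,z_2[$. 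For (iii), direct resultant computation between the polynomial numerators of $p$ and $q$ (in $z$ and $\gamma$) should reveal that a common zero occurs exactly on the curve $\gamma = -\sqrt{(9+5\sqrt{3})/18}$, $z = \hat z := \sqrt{(3-\sqrt{3})/2}$.

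Assuming (i)--(iii), the framework produces a continuous injection $b^*: Z(\gamma) \to\, ]0, \tilde G(\gamma)[$ given by \eqref{Symmetriceqbstar}, and the no-arbitrage domain reads $b = b^*(z, \gamma)$, $\sigma \geq \sigma^*(z, \gamma)$ with $z \in Z(\gamma)$. To produce the explicit endpoints of $Z(\gamma)$, at $b = 0$ the critical point $z^*(\gamma, 0)$ solves $\eta j_2' = 2 \eta' j_2$. Clearing denominators and substituting $u = \gamma z$, this equation becomes a biquadratic in $z$ with coefficients depending only on $u$; solving for $\gamma^2$ then leads to a quadratic equation whose positive root yields the closed form \eqref{eqgammab0}, so that $z^*(\gamma^*(u), 0) = u/\gamma^*(u)$. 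At $b = \tilde G(\gamma)$, the Fukasawa saturation characterization combined with \eqref{eqGtilde2} yields $z^*(\gamma, \tilde G(\gamma))$ by direct substitution. The orientation of $Z(\gamma)$ (left versus right endpoint) is fixed by the sign of $\partial_z b^*$ at one endpoint; computation shows this sign flips exactly when the two endpoints coincide, which occurs at the degenerate $\gamma = -\sqrt{(9+5\sqrt{3})/18}$.

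For this degenerate value, hypothesis (iii) fails: $p(\hat z) = q(\hat z) = 0$, so the critical-point equation $hp = b^2 gq$ is trivially satisfied at $\hat z$ for every $b$. The argument adapts as follows: verify by substitution that $\hat z = \sqrt{(3-\sqrt{3})/2}$ is indeed the unique critical point of $\tilde \phi$ for this $\gamma$ and any admissible $b$, which forces the parametrization $\sigma \geq -b j_2(\hat z)/(2(\eta(\hat z)^2 - b^2 j(\hat z)^2))$ with $b < \tilde G(\gamma) = 2\sqrt{3\sqrt{3}-5}$. I expect the main obstacle to be the uniqueness step (ii): it must be established uniformly for $\gamma \in\, ]-1, \infty[$ and $b \in [0, \tilde G(\gamma)]$, and the polynomial sign analysis of $J_1 j_2'' - J_1'' j_2$ may require splitting according to the sign of $\gamma$ and the position relative to the degenerate value.
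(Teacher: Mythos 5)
Your outline follows the paper's architecture almost step by step: dispatch $b=2$ via the Extremal Decorrelated case, exploit the symmetry of $\tilde f$ from $\rho=\mu=0$, change variable to $z=1/\sqrt{l^2+1}$, invoke the $b^*$-reparametrization framework with hypotheses (i)--(iii), identify the degenerate $\gamma=-\sqrt{(9+5\sqrt{3})/18}$ as the unique solution of $\{p=q=0\}$, produce the endpoints of $Z(\gamma)$ from $b=0$ and $b=\tilde G(\gamma)$, and reparametrize by $u=\gamma z$ to derive the quadratic in $\gamma^2$ behind \eqref{eqgammab0}. On all these points you and the paper coincide.

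The one real gap is in step (ii). You propose to prove uniqueness by showing $n:=J_1 j_2''-J_1'' j_2>0$ at every zero of $\tilde\phi'$, using only the signs of the factors. This cannot work uniformly: for $\gamma<0$ the paper shows $j_2$ has an inflection point $z_{i_2}\in\,]0,z_{m_2}[$ at which $j_2''$ changes sign; on $]0,z_{i_2}[$ one has $j_2''<0$ while $J_1''>0$ and $j_2<0$, so $n=J_1j_2''-J_1''j_2$ has an indeterminate sign and is indeed not positive throughout. What the paper does instead is locate any candidate maximum of $\tilde\phi$ in $[z_{m_1},z_{i_2}[$ (combining the known positions of critical points and the sign constraint $j_2''J_1''\leq 0$ at a maximum), and then shows this interval is \emph{empty} by proving $z_{m_1}>z_{i_2}$, i.e.\ $J_1'(z_{i_2})<0$. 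Crucially, this last inequality is reduced to the case $b=2$ by a monotonicity argument in $b^2$ and then \emph{verified numerically} over $\gamma\in\,]-1,0[$ (Figure~\ref{FigureSymmetricProof}); there is no closed-form sign analysis for it. Similarly, the selection of the branch $\gamma^*(u)=\sqrt{\Gamma_+}$ for $u>0$ (and $-\sqrt{\Gamma_+}$ for $-1<u<0$), which you describe as ``the positive root,'' actually requires checking that $\frac{u}{\gamma z_2(\gamma)}<1$ for the candidate roots, and the paper again settles this with a numerical plot (Figure~\ref{FigureMinusGamma}). Your plan should be amended to acknowledge these two numerical verifications; presenting the uniqueness step as a pure polynomial-sign argument would fail on the $\gamma<0$ range.

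Two minor points of wording: the equation obtained after the substitution $z=u/\gamma$ is a quadratic in $\gamma^2$ with coefficients in $u$, not a ``biquadratic in $z$''; and for (iii) the paper does not invoke a resultant but instead inverts $q=0$ to $\gamma=\gamma_q(z)$ (shown strictly monotone), substitutes into $p$, and factors the result as $r(z)$ whose unique interior zero is $\hat z$. The resultant route would likely reach the same conclusion, so this is a stylistic difference rather than a gap.
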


\subsubsection{Proof of
\Cref{PropSymmetric}}\label{proof-of}

	When \(b=2\), we are in the case of the Extremal Decorrelated SVI with
\(q=0\), so the arbitrage conditions are satisfied iff \(\gamma>0\) and
\(\sigma\geq\frac{1}{\gamma}\).

	From now on we consider \(b< 2\). Since \(\rho=0\), the \(g_2\) function
is symmetric with respect to \(l\) and since \(\mu=0\), the \(G_1\)
function is symmetric too. Then \(\tilde f = -\frac{G_1}{g_2}\) is
symmetric as a function of \(l\) and we consider \(l\geq 0\). Recall
that we are interested in the value of
\(\sigma^*=\sup_{l>l_2}\frac{b}{2\tilde f(l)}\).

	The function \(\tilde \phi(z):=\tilde f(l(z))\) goes to \(\infty\) at
\(0\) and at the non trivial zero of \(j_2\), which we call
\(z_2(\gamma)\). Then \(\tilde\phi\) has at least one point of minimum
in \(]0,z_2[\).

\paragraph{Study of $j_2$}\label{study-of-j_2}

	To find \(z_2\), we should solve \(p_1(z):=2\gamma z^3+3z^2-1=0\). It
can be shown that for \(\gamma\) negative, the polynomial has three
roots but only the second one lies in the interval \(]0,1[\). If
\(\gamma\) is null, then \(z_2(0)=\frac{1}{\sqrt{3}}\). If \(\gamma\) is
positive but not greater than \(1\), the polynomial has three roots and
only the third lies in the interval \(]0,1[\). Finally, if \(\gamma>1\),
there is only one root. Using the trigonometric notation,
\(z_2(\gamma)\) is \begin{equation*}
z_2(\gamma) =
\begin{cases}
-\frac{1}{\gamma}\cos\biggl(\frac{1}{3}\arccos(1-2\gamma^2)-\frac{2\pi}3\biggr)-\frac{1}{2\gamma} & \text{if $\gamma< 0$},\\
\frac{1}{\sqrt{3}} & \text{if $\gamma=0$},\\
\frac{1}{\gamma}\cos\biggl(\frac{1}{3}\arccos(2\gamma^2-1)\biggr)-\frac{1}{2\gamma} & \text{if $0<\gamma\leq 1$},\\
\frac{1}{\gamma}\cosh\biggl(\frac{1}{3}\operatorname{arccosh}(2\gamma^2-1)\biggr)-\frac{1}{2\gamma} & \text{if $\gamma>1$}.
\end{cases}
\end{equation*}

	Let us show that \(j_2\) has a single critical point \(z_{m_2}\) where
it achieves its minimum. It holds \begin{align*}
j_2'(z) &= \frac{6\gamma^2z^4+14\gamma z^3+9z^2-1}{2(\gamma z+ 1)^2},\\
j_2''(z) &= \frac{6\gamma^{3} z^{4} + 19 \gamma^{2} z^{3} + 21 \gamma z^{2} + 9 z + \gamma}{(\gamma z+1)^3}.
\end{align*} Since \(j_2'(0)=-\frac{1}2\) and
\(j_2'(1)=\frac{3\gamma+4}{\gamma+1}\), which is positive having
\(\gamma>-1\), then \(j_2\) has at least one critical point in
\(]0,1[\).

\begin{itemize}
\item
  The function \(j_2''\) is positive if \(\gamma\geq 0\) so \(j_2\) has
  exactly one critical point in this case, which is a point of minimum.
\item
  Consider the case \(\gamma< 0\). The polynomial
  \(p_2(z):=6\gamma^2z^4+14\gamma z^3+9z^2-1\) has derivative
  \(p_2'(z)=2z(12\gamma^2z^2+21\gamma z+9)\), so its three critical
  points are \(0\), \(-\frac{3}{4\gamma}\) and \(-\frac{1}{\gamma}\).
  Since \(\gamma>-1\), the third critical point, which is a point of
  minimum, is greater than \(1\) and, since \(0\) is the first critical
  point and since
  \(-1=p_2(0)< 0< p_2(1)=6\gamma^2+14\gamma +8 = 2(\gamma+1)(\gamma+4)\),
  \(p_2\) has exactly one zero in \(]0,1[\). So again \(j_2\) has only
  one critical point in this interval.
\end{itemize}

Because \(j_2\) is negative before \(z_2\) then \(j_2'(z_2)>0\),
furthermore \(j_2'(0)< 0\) so the critical point \(z_{m_2}\) of \(j_2\)
(a point of minimum) lies in \(]0,z_2[\).

	We have already seen that if \(\gamma\geq0\), then \(j_2\) is strictly
convex in \(]0,1[\).

We can moreover show that \(j_2\) has a single inflection point if
\(\gamma\) is negative. In such case, call the numerator of \(j_2''\) as
\(p_3(z):=6\gamma^{3} z^{4} + 19 \gamma^{2} z^{3} + 21 \gamma z^{2} + 9 z + \gamma\).
It holds \(p_3(0)=\gamma< 0\) and
\(p_3(1)=6\gamma^3+19\gamma^2+22\gamma+9\), which is a third degree
polynomial in \(\gamma\) with only root equal to \(-1\) and a positive
leading order term, so it is positive for \(\gamma>-1\). Then \(p_3\)
has at least one zero in \(]0,1[\). Moreover,
\(p_3'(z)=3(\gamma z+1)^2(8\gamma z+3)\), whose zeros are
\(-\frac{3}{8\gamma}\) and \(-\frac{1}{\gamma}>1\). The former is a
point of maximum for \(p_3\) while the latter is an inflection point.
Gathering all the informations, it follows that \(p_3\) has exactly one
zero in \(]0,1[\). Since \(j_2''>0\) at the point of minimum of \(j_2\)
and \(j_2''(0)< 0\), the only inflection point \(z_{i_2}(\gamma)\) of
\(j_2\) lies in \(]0,z_{m_2}[\).

\paragraph{Study of $J_1$}\label{study-of-j_1}

	It holds
\(J_1(z) = \eta(z)^2-b^2j(z)^2 = \Bigl(1-\frac{1-z^2}{2(1+\gamma z)}\Bigr)^2-b^2\frac{1-z^2}{16}\).

	We write here a general result which holds for every \(SVI\).

\begin{lemma}

Under the Fukasawa conditions, $G_1$ is strictly decreasing at the second zero of $g_2$.

\end{lemma}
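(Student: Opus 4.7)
The plan is to compute $G_1'(l_2)$ in closed form and show directly that it is negative, using as the main algebraic lever the defining relation $g_2(l_2)=0$, i.e., $(N'(l_2))^2 = 2\,N(l_2)\,N''(l_2)$, which I will invoke twice.

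First, I would differentiate $h(l) = 1 - N'(l)(l+\mu)/(2N(l))$ directly, obtaining
\[
h'(l) = -\frac{N''(l)(l+\mu) + N'(l)}{2N(l)} + \frac{(N'(l))^2(l+\mu)}{2N(l)^2},
\]
and then substitute $(N'(l_2))^2 = 2N(l_2)N''(l_2)$: two of the three terms collapse and the compact identity $h'(l_2) = -N'(l_2)h(l_2)/(2N(l_2))$ drops out. Second, since $G_1 = h^2 - b^2 g^2$ with $g = N'/4$ and $g' = N''/4$, I would expand $G_1'(l_2) = 2h(l_2)h'(l_2) - b^2 N'(l_2) N''(l_2)/8$ and, invoking the $g_2(l_2)=0$ relation once more on the last term, obtain the clean factorization
\[
G_1'(l_2) = -\frac{N'(l_2)}{N(l_2)}\,\bigl(h(l_2)^2 + b^2 g(l_2)^2\bigr).
\]

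The remaining work is a sign check. Fukasawa gives $G_{1\pm}(l_2) = h(l_2) \mp b\,g(l_2) > 0$, whence $h(l_2) > 0$ and the parenthesis is strictly positive; $N(l_2) > 0$ since $N$ is a rescaled total variance. The only non-cosmetic point, which I consider the main obstacle, is to prove $N'(l_2) > 0$. For this I would locate $l_2$ relative to the unique minimum $l^* = -\rho/\sqrt{1-\rho^2}$ of $N$: since $N'(l^*) = 0$ gives $g_2(l^*) = N''(l^*) > 0$ while $g_2$ is negative at $\pm\infty$ by direct asymptotics, the two-zero structure of \Cref{FinalTheo} forces $l^*$ to lie strictly between $l_1$ and $l_2$. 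This yields $l_2 > l^*$, hence $N'(l_2) > 0$, and finally $G_1'(l_2) < 0$.
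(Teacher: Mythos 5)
Your proposal is correct and takes a genuinely different route from the paper. The paper factors $G_1' = G_{1+}'G_{1-} + G_{1+}G_{1-}'$, observes $G_{1+}' < G_{1-}'$, reduces the question to $G_{1-}'(l_2)<0$, and then bounds $G_{1-}'(l_2)$ strictly below $0$ by plugging the Fukasawa inequality $\mu < L_+(l_2) = \frac{2N(l_2)}{N'(l_2)} - \frac{bN(l_2)}{2} - l_2$ into the expression for $G_{1-}'(l_2)$ after using $N''(l_2)=\frac{N'(l_2)^2}{2N(l_2)}$. You instead stay with $G_1 = h^2 - b^2 g^2$, derive the compact identity $h'(l_2) = -\frac{N'(l_2)}{2N(l_2)}h(l_2)$ from $g_2(l_2)=0$, reuse the same relation on $g'(l_2)$, and land on the exact factorization
\[
G_1'(l_2) = -\frac{N'(l_2)}{N(l_2)}\bigl(h(l_2)^2 + b^2 g(l_2)^2\bigr),
\]
which is a negative multiple of a sum of squares. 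This is cleaner, and in fact strictly more informative: the sign conclusion needs only $N(l_2)>0$, $N'(l_2)>0$ (from $l_2>l^*$, which you establish correctly), and that $h(l_2)$ or $g(l_2)$ is nonzero — the latter is automatic from $N'(l_2)>0$ when $b>0$, so the $\mu$-bound from Fukasawa is not actually needed, whereas it is the engine of the paper's estimate. Your invocation of $G_{1\pm}(l_2)>0$ to get $h(l_2)>0$ is therefore valid but slightly redundant. The paper's approach, while less slick, is more in keeping with the rest of the machinery in \cite{martini2020no} (it reuses $L_+$ and the per-factor structure of $G_1$); yours delivers an explicit closed form for $G_1'(l_2)$ that could be reused elsewhere.
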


\begin{proof}

Since $G_1' = G_{1+}'G_{1-}+G_{1+}G_{1-}'$ and $G_{1+}'< G_{1-}'$ for $l>l^*$, then it is enough to prove $G_{1-}'(l_2)< 0$. Remember that $G_{1-}(l_2) = 1-N'(l_2)\bigl(\frac{l_2+\mu}{2N}-\frac{b}4\bigr)$. From the formula of $g_2$, we have that $N''(l_2)=\frac{N'(l_2)^2}{2N(l_2)}$. Then
\begin{align*}
G_{1-}'(l_2) &= -\frac{N''(l_2)(l_2+\mu)+N'(l_2)}{2N(l_2)} + \frac{N'(l_2)^2(l_2+\mu)}{2N(l_2)^2} + \frac{bN''(l_2)}{4}\\
&= \frac{N'(l_2)^2(l_2+\mu)}{4N(l_2)^2} + \frac{bN'(l_2)^2}{8N(l_2)} - \frac{N'(l_2)}{2N(l_2)}.
\end{align*}
For the absence of arbitrage, $\mu<\inf_{l>l^*}L_+(l)=\inf_{l>l^*}\Bigl(\frac{2N(l)}{N'(l)}-\frac{bN(l)}{2}-l\Bigr)$ so this holds in particular in $l_2>l^*$. Substituting with the formula of $L_+$ we obtain
$$G_{1-}'(l_2)< \frac{N'(l_2)^2}{4N(l_2)^2}\biggl(l_2+\frac{2N(l_2)}{N'(l_2)}-\frac{bN(l_2)}{2}-l_2\biggr) + \frac{bN'(l_2)^2}{8N(l_2)} - \frac{N'(l_2)}{2N(l_2)} = 0.$$
\end{proof}

	Observe that \begin{align*}
J_1'(z) &= \frac{\gamma z^4(b^2\gamma^2+4) + z^3(3b^2\gamma^2+8\gamma^2+8) + 3\gamma z^2(b^2+8) + z(b^2+8\gamma^2+8) + 4\gamma}{8(\gamma z+1)^3},\\
J_1''(z) &= \frac{b^2}8 + \frac{\gamma^2z^4+4\gamma z^3+6z^2+(4\gamma z+1)(2-\gamma^2)}{2(\gamma z+1)^4},\\
J_1'''(z) &= \frac{6z(\gamma^2-1)^2}{(\gamma z+1)^5}.
\end{align*} So:

\begin{itemize}
\item
  For \(\gamma\geq 0\), \(J_1'\) is always greater than \(0\) so \(J_1\)
  is increasing and has its minimum in \(0\).
\item
  If \(\gamma< 0\), \(J_1'\) is negative for \(z=0\) and we know that
  \(G_1'(l_2)< 0\) so \(J_1'(z_2)>0\). Then \(J_1\) has at least a point
  of minimum in \(]0,z_2[\). Also, the minimum of \(J_1''\) is reached
  at \(0\) and it is \(\frac{b^2-4\gamma^2+8}8\). This quantity is
  positive iff \(\gamma^2<\frac{b^2+8}4\), so for example for every
  \(\gamma<0\) because \(\gamma\) is greater than \(-1\). In such case
  \(J_1\) is always convex and it has exactly one critical point
  \(z_{m_1}\), which is a point of minimum, in \(]0,z_2[\).
\end{itemize}

	Note also that when \(\gamma\) goes to its Fukasawa limit,
\(J_{1+}(z)= 1-\frac{1-z^2}{2(\gamma z+1)}-\frac{b\sqrt{1-z^2}}{4}\)
goes to \(0\) in \(z^*_+(b):=\frac{\sqrt{(4-b^2)(16-b^2)}}{b^2+8}\),
which is also a point of minimum for \(J_1\). It is easy to shown that
if
\(\gamma>-\frac{(b^4-38b^2+64)(b^2+8)}{(4-b^2)^{\frac{3}2}(16-b^2)^{\frac{3}2}}:=M(b)\),
then \(j_2(z_+^*(b)))>0\), so \(z_+^*(b)>z_2(\gamma)\), otherwise
\(z_+^*(b)\leq z_2(\gamma)\). Note that \(M(b)\geq\tilde F(b)\) (the
equality holding iff \(b=0\)), so when \(\gamma\) goes to
\(\tilde{F}(b)\), \(J_1\) reaches its minimum before \(z_2(\gamma)\) (or
at \(z_2(\gamma)\) if \(b=0\)).

Then \(J_1\) has in any case one point of minimum: in the case
\(\gamma\geq 0\), this is \(0\), while when \(\gamma<0\), it is
\(z_{m_1}(\gamma,b)\). Also, \(J_1\) is convex if
\(\gamma^2\leq\frac{b^2+8}4\) (in particular \(\gamma<0\) given that
\(\gamma\) should be larger than \(-1\)) while it has a unique
inflection point \(z_{i_1}(\gamma,b)\) if \(\gamma^2>\frac{b^2+8}4\).

\paragraph{Uniqueness of the critical point of $\tilde\phi$ for the Symmetric SVI}\label{uniqueness-of-the-critical-point-of-tildephi-for-the-symmetric-svi}

	From the formula for \(\tilde \phi' = \frac{J_1j_2'-J_1'j_2}{j_2^2}\),
given that \(J_1 > 0\) and \(j_2 < 0\) we have that:

\begin{itemize}
\item
  for \(\gamma< 0\): all the critical points of \(\tilde \phi\) must lie
  in \([z_{m_1},z_{m_2}] \cup [z_{m_2},z_2\land z_{m_1})\);
\item
  for \(\gamma\geq0\): all the critical points of \(\tilde \phi\) must
  lie in \(]0,z_{m_2}[\) since \(J_1' j_2\) is negative.
\end{itemize}

Considering now the second derivative of \(\tilde \phi\), so
\(\tilde \phi'' = \frac{J_1j_2''-J_1''j_2}{j_2^2} - 2\tilde \phi'\frac{j_2'}{j_2}\)
and evaluating it at a point of maximum of \(\tilde \phi\), the second
term becomes null and necessarily \(\tilde \phi'' \leq 0\). Then it must
hold \(j_2''J_1''\leq0\) (the equality holding when both factors are
\(0\)). In particular:

\begin{enumerate}
\def\labelenumi{\arabic{enumi}.}
\item
  for \(\gamma< 0\): if \(\tilde \phi\) has a point of maximum, it lies
  in \([z_{m_1},z_{i_2}(\gamma)[\) because in such case \(J_1''>0\),
  \(j_2''<0\) in \([0,z_{i_2}(\gamma)[\);
\item
  for \(0\leq\gamma\leq\sqrt{\frac{b^2+8}4}\): \(\tilde \phi\) cannot
  have a point of maximum, because \(J_1\) and \(j_2\) are strictly
  convex, so it has one point of minimum which lies in \(]0,z_{m_2}[\);
\item
  for \(\gamma>\sqrt{\frac{b^2+8}4}\): \(J_1\) has one inflection point
  \(z_{i_1}\) and if \(\tilde \phi\) has a point of maximum, it lies in
  \(]0,z_{m_2}\land z_{i_1}[\).
\end{enumerate}

	We consider the first case and show that \(z_{m_1}>z_{i_2}(\gamma)\) or
equivalently that \(J_1'(z_{i_2})< 0\), so that there is no point of
maximum. The quantity \(z_{i_2}(\gamma)\) does not depend on \(b\) so
\(\partial_{b^2}(J_1'(z_{i_2})) = \partial_{b^2}J_1'(z_{i_2}) = \frac{z_{i_2}}{8}>0\).
We can then check that for every \(-1<\gamma< 0\) we have indeed
\(J_1'(z_{i_2};b=2)< 0\) in \Cref{FigureSymmetricProof}.

\begin{figure}
	\centering
	\includegraphics[width=.7\textwidth]{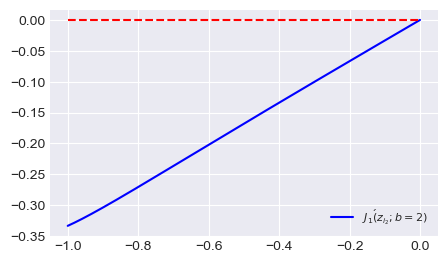}
	\caption{$J_1'(z_{i_2};b=2)$ as a function of $\gamma$.}
	\label{FigureSymmetricProof}
\end{figure}
    
	Let us now consider the last case. Since
\(\gamma>\sqrt{\frac{b^2+8}4}\), then in particular \(\gamma^2>2\).
Suppose we are at a critical point of \(\tilde \phi\), then
\(J_1 = \frac{J_1'}{j_2'}j_2\) and its second derivative evaluated at
this point is
\(\tilde \phi'' = \frac{J_1j_2''-J_1''j_2}{j_2^2} = \frac{J_1'j_2''-J_1''j_2'}{j_2j_2'}\).
In the set \(]0,z_{m_2}\land z_{i_1}[\), the denominator is always
positive. The numerator times \(16(\gamma z + 1 )^6\) is equal to
\begin{align*}
&b^{2}\bigl(6 \gamma^{6} z^{8} + 36 \gamma^{5} z^{7} + 91 \gamma^{4}z^{6} + 126 \gamma^{3} z^{5} + 3\gamma^2(\gamma^{2} + 34)z^{4} + 2\gamma(5\gamma^{2} + 23)z^{3} + 3(4\gamma^{2} + 3)z^{2} + 6 \gamma z + 1\bigr)+\\
&+ 24 \gamma^{4} z^{8} + 96\gamma^3(\gamma^{2} + 1)z^{7} + 4\gamma^2 z^{6}(148\gamma^{2} + 17) + 24\gamma z^{5}(8\gamma^{4}+48\gamma^{2}-3) + 12z^{4}(50\gamma^{4} + 81\gamma^{2} - 6)+\\
&+16\gamma z^{3}(44\gamma^{2} + 25) + 12z^{2}(33 \gamma^{2} + 8) + 120 \gamma z + 4 \gamma^{2} + 8.
\end{align*} All coefficients of \(z\) are positive quantities, indeed
the only minus involved are in the expressions
\(8\gamma^{4}+48\gamma^{2}-3> 8\times4+48\times2-3>0\) and
\(50\gamma^{4} + 81\gamma^{2} - 6 > 50\times4 + 81\times2 - 6>0\). Then
\(\tilde \phi''\) is positive and \(\tilde \phi\) cannot have a point of
maximum even in this case.

	We denote from now on the well-defined single point of minimum of
\(\tilde \phi\) with \(z^*(\gamma,b)\).

\paragraph{Limits of $z^*$ with respect to
$\gamma$}\label{limits-of-z-with-respect-to-gamma}

	When \(\gamma\) goes to \(\infty\), the ratio between the numerator of
\(\tilde \phi'\) and \(\gamma^4\) goes to \(2z^6(48+b^2(z^2-3))\), which
is null for \(z=0\), so \(z^*(\infty,b)=0\).

On the other side, let \(\gamma\) go to \(\tilde F(b)\). Then, from the
above study of \(J_1\), the minimum point of \(J_1\) is
\(z^*_+(b)=\frac{\sqrt{(4-b^2)(16-b^2)}}{b^2+8}< z_2(\tilde F(b))\) and
it coincides with the minimum point \(z^*(\tilde F(b),b)\) of
\(\tilde \phi\), at which \(\tilde \phi\) is null.

If \(b=0\), it still holds \(z^*(\infty,0)=0\) and it is easy to show
that at the point \(z^*(0)=z_2(\tilde F(0))=1\) the function
\(\tilde\phi\) with \(\gamma = \tilde F(b)\) vanishes.

\paragraph{The $b^*$ approach for the Symmetric
SVI}\label{the-b-approach-for-the-symmetric-svi}

	Let \(\gamma\in]-1,\infty[\) be fixed. With reference to the definition
of \(\tilde G(\gamma)\) in \cref{eqGtilde1}, we want to prove that for
every \(b\in]0,\tilde G(\gamma)[\) the corresponding \(z^*(\gamma,b)\)
lies in \(]z^*(\gamma,0),z^*(\gamma,\tilde G(\gamma))[\) (or in
\(]z^*(\gamma,\tilde G(\gamma)),z^*(\gamma,0)[\)).

\begin{remark}

Note that if $\gamma>0$, $b$ could actually attain the value $\tilde G(\gamma)=2$ but we have already seen in the Extremal Decorrelated scenario that in such case the function $\tilde f$ is increasing and has its minimum (which is not necessarily a critical point) in $z=0$. So in the following we will always consider $b< \tilde G(\gamma)$.

\end{remark}

	Remember the definition of \(\eta\) and \(j\) in \cref{eqj2hg} and the
fact that we use \('\) to indicate the derivative with respect to \(z\).

	The proof of the uniqueness of the critical point of \(\tilde \phi\)
still holds for \(b=0\) and \(b= \tilde G(\gamma)\), and we proved
\(z^*(\gamma,\tilde G(\gamma))=z^*_+(G(\gamma))\). In particular, the
single point of minimum \(z^*(\gamma,b)\) defined above of
\(\tilde \phi\) lies in \(]0,z_{2}[\).

In a following paragraph we show that the only solution to
\(p(z)=q(z)=0\) is for \(\hat\gamma=-\sqrt{\frac{9+5\sqrt{3}}{18}}\) at
\(\hat z = \sqrt{\frac{3-\sqrt{3}}2}\). Take \(\gamma\neq\hat \gamma\).
With reference to the observations in \cref{the-b-approach-reparametrizing-from-the-critical-point-equation}, we can
conclude that the set \(Z(\gamma)\) of the critical points of
\(\tilde \phi\) is equal to the interval with extrema the only zero of
\(p=\eta j_2' - 2\eta'j_2\), denoted \(z^*(\gamma,0)\), and the point
\(z^*(\gamma,\tilde G(\gamma))\).

	For \(\gamma=\hat\gamma\), the minimum point of \(\tilde \phi\) does not
depend on \(b\) and it is \(z^*(\hat\gamma,b)=\hat z\). In such case,
\(\tilde G(\hat\gamma)=2\sqrt{3\sqrt{3}-5}\). In the following we see
what happens for the other values of \(\gamma\).

	It holds
\(q(z) = -\frac{2 z^4\gamma^{2}(z^2-3) + 4z^3\gamma(z^2-3) + 3 z^{4} - 8 z^{2} + 1}{8 \sqrt{1 - z^{2}}(\gamma z + 1)^2}\).
Suppose \(z_q(\gamma)\) is a zero of \(q\), then solving the equation
\(q(z_q)=0\) in \(\gamma\), the only solution greater than \(-1\) is
\(\gamma = \gamma_q(z_q): = \frac{1-z_q^2}{z_q^2\sqrt{2(3-z_q^2)}}-\frac{1}{z_q}\).
It can be shown that
\(\gamma_q'(z) = \frac{1}{z^2}\Bigl(1-\frac{z^4-3z^2+6}{z\sqrt{2}(3-z^2)^{\frac{3}{2}}}\Bigr) <0\)
for every \(z\in]0,1[\). This means that \(\gamma_q\) is invertible from
\(]0,1[\) to \(]-1,\infty[\) and that for fixed \(\gamma\), \(q\) has a
unique zero. Observe that \(\gamma_q(\hat z)=\hat\gamma\). Since
\(q(0)<0\), on the left of its zero \(z_q(\gamma)\), \(q\) is negative
while on the right it is positive. Also, if \(\gamma<\hat\gamma\), then
\(z_q(\gamma)>\hat z\). We now look at
\(p(z) = \frac{2\gamma^{2} z^6 + 12\gamma^{3} z^{5} + 3z^{4}(10\gamma^{2} -1) + 28\gamma z^{3} + 12 z^{2} - 1}{4(\gamma z + 1)^3}\).
Substituting \(\gamma\) with \(\gamma_q(z)\) in the last expression, we
find
\(p\bigl(z;\gamma=\gamma_q(z)\bigr) = -\frac{z^2\bigl(2z^4-3z^2-3+z\sqrt{2}(3-z^2)^{\frac{3}2}\bigr)}{1-z^2}\).
We call \(r(z) := 2z^4-3z^2-3+z\sqrt{2}(3-z^2)^{\frac{3}2}\), then
\(r(\hat z)=r(1)=0\) and \(r(0)=-3<0\). Furthermore
\(r'(z) = -(4z^2-3)(\sqrt{2(3-z^2)}-2z)\), where the second factor is
positive for \(z<1\), so \(r\) is increasing in
\(\bigl[0,\frac{\sqrt{3}}{2}\bigr[\ni\hat z\) and decreasing in
\(\bigl]\frac{\sqrt{3}}{2},1\bigr[\), with unique zeros at \(\hat z\)
and \(1\). This means that for the previous choice of
\(\gamma<\hat \gamma\), \(p(z_q(\gamma))<0\), so
\(z^*(\gamma,0)>z_q(\gamma)\) because \(z^*(\gamma,0)\) is the unique
zero of \(p\). Remember that \(q\) is positive in
\(]z_q(\gamma),z^*(\gamma,0)[\). Since \(b^*\) defined in \cref{eqbstar}
lives where \(p\) and \(q\) have the same sign and since the set
\(Z(\gamma)\) is an interval, the latter will lie on the right of
\(z^*(\gamma, 0)\), indeed on the immediate left we have seen that \(p\)
is negative while \(q\) is positive. The reasoning is similar for
\(\gamma>\hat\gamma\), with the conclusion that the previous set is on
the left of \(z^*(\gamma, 0)\).

\paragraph{Limits of $z^*$ with respect to
$b$}\label{limits-of-z-with-respect-to-b}

	When \(b\) goes to \(0\) and \(\gamma\) is not \(\hat\gamma\), the value
of \(z^*(\gamma,b)\) is the only solution smaller than \(z_2(\gamma)\)
to \(p(z)=0\), or \begin{equation}\label{eqnumpSymm}
2\gamma^{2} z^6 + 12\gamma^{3} z^{5} + 3z^{4}(10\gamma^{2} -1) + 28\gamma z^{3} + 12 z^{2} - 1=0.
\end{equation} Since the equation is of sixth degree, the solution
cannot be written explicitly, but it can be found by a numeric routine
in the interval \(]0,z_2(\gamma)[\).

Another cunning way to find \(z^*(\gamma,0)\) is to use again the trick
of changing the role between parameters and variables. Firstly, note
that if \(\gamma=0\), the only solution to \cref{eqnumpSymm} is
\(z=\frac{1}{\sqrt{6+\sqrt{33}}}\). If \(\gamma\neq0\), substitute
\(z=\frac{u}{\gamma}\) in the above equation, then we look for \(u\)
solving \begin{equation}\label{eqnumpSymmu}
\gamma^4-2u^2(6u^3+15u^2+14u+6)\gamma^2+u^4(3-2u^2)=0.
\end{equation} The interval where \(u\) lives is such that
\(0<\frac{u}{\gamma}<z_2(\gamma)\), or equivalently
\(\frac{\gamma z_2(\gamma)}{u}>1\). If \(u\) is negative, it must hold
\(0>u>\gamma z_2(\gamma)>\gamma>-1\), so for sure \(u>-1\). Imagine to
fix \(u\), then equation \cref{eqnumpSymmu} is quadratic in \(\gamma^2\)
and has four (eventually complex) solutions \(\pm\sqrt{\Gamma_\pm}\),
with
\(\Gamma_\pm = u^2\bigl(6u^3+15u^2+14u+6\pm(1+u)^2\sqrt{3(12u^2+12u+11)}\bigr)\).
The solution \(\gamma\) must have the same sign of \(u\), since \(z>0\).
If \(u\) is positive, the solution \(\sqrt{\Gamma_-}\) is smaller than
\(u\), so also \(\sqrt{\Gamma_-} z_2(\sqrt{\Gamma_-})\) is smaller than
\(u\). Then, the only possible solution could be \(\sqrt{\Gamma_+}\). We
plot in \Cref{FigureMinusGamma} (left) the graph of \(\frac{\sqrt{\Gamma_+} z_2(\sqrt{\Gamma_+})}{u}\) to
check whether it is greater than \(1\). If \(u\) is negative, we plot in \Cref{FigureMinusGamma} (right)
the graphs of \(\frac{-\sqrt{\Gamma_\pm}z_2(-\sqrt{\Gamma_\pm})}{u}\)
and check again whether these quantities are greater than \(1\).

\begin{figure}
	\centering
	\includegraphics[width=.9\textwidth]{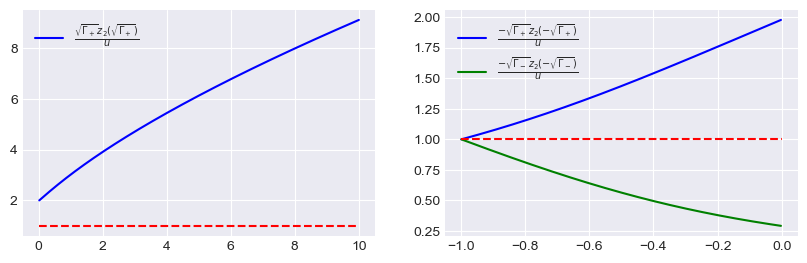}
	\caption{Study of $\frac{\pm\sqrt{\Gamma_\pm}z_2(\pm\sqrt{\Gamma_\pm})}{u}$ as functions of $u$.}
	\label{FigureMinusGamma}
\end{figure}
    
	We can see that for every \(u>0\), the solution \(\sqrt{\Gamma_+}\) is
admissible and that for \(-1<u<0\), the only admissible solution is
\(-\sqrt{\Gamma_+}\). To sum up, for \(u>-1\), the solution to
\cref{eqnumpSymmu} in terms of \(\gamma\) is \cref{eqgammab0} and it
ranges from \(-1\) to \(\infty\). Observe that if \(u=0\), the value of
\(\frac{u}{\gamma^*(u)}\) is exactly \(\frac{1}{\sqrt{6+\sqrt{33}}}\),
so the definition is well posed even in this case.

\subsubsection{Numerical illustration of the interval for $z^*$}\label{numerical-illustration-of-the-interval-for-z}

	We report in \Cref{FigureSymmetriczstar,FigureSymmetriczstar2} the graphs of \(z^*(\gamma,0)\), which is the only zero
of \(p(z,\gamma)\) in \(\bigl]0,z_2(\gamma)\bigr[\), and of
\(z^*(\gamma,\tilde G(\gamma))\).

\begin{figure}
	\centering
	\includegraphics[width=.7\textwidth]{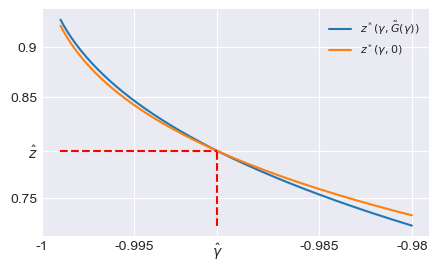}
	\caption{$z^*(\gamma,0)$ and $z^*(\gamma,\tilde G(\gamma))$ as functions of $\gamma$ ranging from $-1$ to $-0.98$.}
	\label{FigureSymmetriczstar}
\end{figure}

\begin{figure}
	\centering
	\includegraphics[width=.7\textwidth]{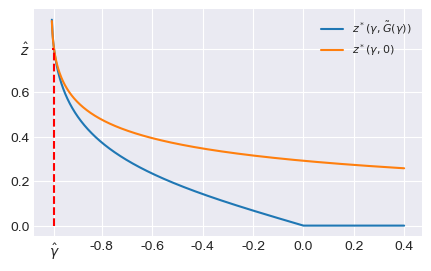}
	\caption{$z^*(\gamma,0)$ and $z^*(\gamma,\tilde G(\gamma))$ as functions of $\gamma$ ranging from $-1$ to $0.4$.}
	\label{FigureSymmetriczstar2}
\end{figure}

\section{SSVI}\label{ssvi}

	The SSVI is a sub-SVI classically parametrized as
\begin{equation}\label{eqSSVIclassical}
SSVI(k;\theta,\varphi,\rho)= \frac{\theta}{2}\biggl(1+ \rho \varphi k + \sqrt{ (\varphi k+\rho)^2+ 1-\rho^2}\biggr).
\end{equation}
It can be easily checked that the corresponding SVI parameters are
\newline\((a,b,m,\rho,\sigma)=\Bigl(\frac{\theta(1-\rho^2)}{2},\frac{\theta\varphi}{2},-\frac{\rho}{\varphi},\rho,\frac{\sqrt{1-\rho^2}}{\varphi}\Bigr)\),
so that \(\gamma=\sqrt{1-\rho^2}\) and the special property
\(\mu=-\frac{\rho}{\sqrt{1-\rho^2}}=l^*\) hold, where \(l^*\) is the
unique critical point of the smile, which is a point of minimum, of
\(N\). The corresponding SVI formula becomes \begin{equation*}
SVI\biggl(k;b\sigma\sqrt{1-\rho^2},b,\rho,-\frac{\rho\sigma}{\sqrt{1-\rho^2}},\sigma\biggr) = b\sigma\biggl(\sqrt{1-\rho^2}+\rho\Bigl(\frac{k}{\sigma}+\frac{\rho}{\sqrt{1-\rho^2}}\Bigr) + \sqrt{\Bigl(\frac{k}{\sigma}+\frac{\rho}{\sqrt{1-\rho^2}}\Bigr)^2+1}\biggr).
\end{equation*} With our notations,
\(N(l) = \sqrt{1-\rho^2}+\rho l + \sqrt{l^2+1}\).

	We plot in \Cref{FigureSSVI} an SSVI with \(b=1\), \(\rho=\frac{1}{2}\) and
\(\sigma=\frac{1}{2}\).

\begin{figure}
	\centering
	\includegraphics[width=.7\textwidth]{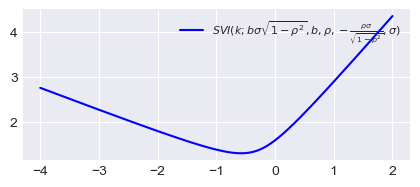}
	\caption{SSVI with $b=1$, $\rho=\frac{1}{2}$ and $\sigma=\frac{1}{2}$.}
	\label{FigureSSVI}
\end{figure}
    
\subsection{The Fukasawa conditions}\label{the-fukasawa-conditions}

	Since \(\gamma>0\), we always have \(\gamma>\tilde F(b,\rho)\). We show
that also the condition on \(\mu\) is verified for \(\gamma>0\).

	\begin{lemma}

Assume $\gamma> 0$. Then for every SVI, $l^*\in I_{\gamma,b,\rho}$.

\end{lemma}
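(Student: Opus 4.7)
My plan is to apply Theorem~\ref{TheoFuk}: under the ambient assumptions $|\rho|<1$ and $\gamma+\sqrt{1-\rho^2}>0$ (both of which $\gamma>0$ enforces), the claim $l^*\in I_{\gamma,b,\rho}$ is equivalent to the pair of Fukasawa positivity conditions $G_{1+}(l;\mu=l^*)>0$ and $G_{1-}(l;\mu=l^*)>0$ for every $l\in\mathbb R$. The involution $(l,\rho,\mu)\mapsto(-l,-\rho,-\mu)$ underlying Proposition~\ref{PrepInverseSVI} leaves $N$ invariant, reverses the sign of $N'$, swaps $G_{1+}\leftrightarrow G_{1-}$, and sends $l^*(\rho)$ to $l^*(-\rho)=-l^*(\rho)$; the two inequalities are therefore equivalent as $\rho$ varies, so it suffices to prove
\[
\Phi(l):=2N(l)\,G_{1+}(l)\big|_{\mu=l^*}=2N(l)-N'(l)(l+l^*)+\tfrac{b}{2}N(l)N'(l)\;>\;0
\]
for all $l\in\mathbb R$. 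At $l=l^*$ the identity $N'(l^*)=0$ gives $\Phi(l^*)=2N(l^*)=2(\gamma+\sqrt{1-\rho^2})>0$.

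Next I would exploit the affine dependence on $\gamma$: writing $N=\gamma+\tilde N$ with $\tilde N(l):=\rho l+\sqrt{l^2+1}$,
\[
\Phi(l)=\gamma\bigl(2+\tfrac{b}{2}N'(l)\bigr)+\Phi_0(l),
\]
where $\Phi_0$ is the specialization to $\gamma=0$. The Roger Lee bounds $b(1\pm\rho)\le 2$ force $bN'(l)\ge -b(1-\rho)\ge -2$, so $2+\tfrac{b}{2}N'(l)\ge 1$ and the $\gamma$-linear summand is strictly positive whenever $\gamma>0$. Everything then reduces to establishing $\Phi_0(l)\ge 0$ on $\mathbb R$.

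For the final step I would substitute $t:=l/\sqrt{l^2+1}\in(-1,1)$, $u:=\sqrt{1-t^2}$, $r:=\sqrt{1-\rho^2}$, and use the identity $2+\rho t-t^2=u^2+(1+\rho t)$ to factor
\[
\Phi_0(l)\cdot ur=r\,P(t)+u\,\rho(\rho+t),\qquad P(t):=\bigl(\tfrac{b\rho}{2}-1\bigr)t^2+\bigl(\rho+\tfrac{b(1+\rho^2)}{2}\bigr)t+\bigl(2+\tfrac{b\rho}{2}\bigr).
\]
Both $P$ and the auxiliary $Q(t):=P(t)+\rho(\rho+t)$ are concave in $t$ under Roger Lee (the leading coefficient is $b\rho/2-1\le 0$), and evaluation at the two endpoints yields $P(\pm 1)=(1\pm\rho)\bigl(1\pm\tfrac{b(1\pm\rho)}{2}\bigr)$ and $Q(\pm 1)=(1\pm\rho)^2\bigl(1\pm b/2\bigr)$, all non-negative under Roger Lee, so by concavity $P,Q\ge 0$ on $[-1,1]$. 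If $\rho(\rho+t)\ge 0$, then $\Phi_0\cdot ur\ge rP\ge 0$; if $\rho(\rho+t)<0$, the sign condition forces $|t|>|\rho|$, hence $u<r$, and
\[
\Phi_0\cdot ur=r\,P(t)+u\,\rho(\rho+t)\ge r\bigl(P(t)+\rho(\rho+t)\bigr)=rQ(t)\ge 0.
\]
This yields $\Phi_0\ge 0$ and closes the argument. The main obstacle is the algebraic step of identifying the correct factorization: spotting the auxiliary $Q$ and realizing that---thanks to concavity---only the endpoint values $t=\pm 1$ must be checked is exactly what cleanly converts the Roger Lee inequalities into the positivity of $\Phi_0$. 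The hypothesis $\gamma>0$ (not merely $\gamma\ge 0$) is essential because $\Phi_0$ vanishes in the limiting regime $|t|\to 1$ when $b(1\pm\rho)=2$; the strictly positive $\gamma$-linear term then guarantees $\Phi>0$ everywhere.
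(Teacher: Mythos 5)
Your proof is correct and takes a genuinely different route from the paper's. The paper works directly with the definition of the interval $I_{\gamma,b,\rho}$: for $\rho\geq 0$ it shows $L_-(l)<l^*$ for every $l<l^*$ by rewriting that inequality as the negativity of an auxiliary function $c(l)=\frac{\rho}{\sqrt{1-\rho^2}}+\frac{l}{1+\rho}\bigl(1+2\rho+\frac{l}{\sqrt{l^2+1}}\bigr)$, analyzing $c'$ and its minimum, treating the slightly awkward sub-case $\rho<\bigl(\tfrac23\bigr)^{3/2}-\tfrac12$ separately, and then invoking the inversion symmetry and Lemma 4.4 of \cite{martini2020no} for the remaining cases. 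You instead invoke Theorem~\ref{TheoFuk} to reduce the claim to the raw Fukasawa inequalities $G_{1\pm}(l;\mu=l^*)>0$, isolate the $\gamma$-linear part to observe it contributes a uniformly strictly positive term $\gamma\bigl(2+\tfrac{b}{2}N'\bigr)\geq\gamma$, and then reduce the remaining $\gamma=0$ piece, after the substitution $t=l/\sqrt{l^2+1}$, to the nonnegativity of two concave quadratics $P$ and $Q$ whose endpoint values at $t=\pm1$ are exactly the Roger Lee quantities. This is shorter and structurally cleaner: it makes the role of Roger Lee transparent (endpoint values of concave polynomials), avoids case-splitting on $\rho$, and the $u<r$ comparison trick elegantly handles the sign of the non-polynomial term. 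One minor slip: what you call $\Phi=2N\,G_{1+}$ is, with the paper's sign convention $G_{1\pm}=h\mp bg$, actually $2N\,G_{1-}$ (the $+\tfrac{b}{2}NN'$ sign gives $h+bg$). This is harmless since your involution argument correctly reduces the two Fukasawa inequalities to one, but the label should be swapped.
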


\begin{proof}

Consider the case $\rho\geq 0$, so that $l^* \leq 0$. Note that $N(l)-\gamma = lN'(l) + N''(l)(l^2+1)$. Then the condition $L_-(l)< l^*$ for $l<l^*$ becomes $\frac{\gamma + lN'(l) + N''(l)(l^2+1)}{2N'(l)}(4+bN'(l))< l+l^*$ or equivalently $\gamma>\frac{N'(l)}{4+bN'(l)}(2l^*-l(2+bN'(l))) - N''(l)(l^2+1)$.

The term $- N''(l)(l^2+1)$ is strictly negative and if $2l^*-l(2+bN'(l))\geq0$, also the first term is negative, so that if $\gamma\geq 0$ we automatically have $L_-(l)<l^*$ for every $l<l^*$. When $\rho>0$ (or $\rho=0$ and $b<2$), since $b(1+\rho)\leq2$, it holds $b(1-\rho)\neq 2$. Because $L_-(-\infty)=L_-(l^{*-})=-\infty$ and $L_-$ is continuous, it holds $\sup_{l < l^*}L_-(l) = L_-(l_-)$ where $l_-\in]-\infty,l^*[$. Then if we show $L_-(l)<l^*$ for every $l<l^*$, it follows $l^* > \sup_{l < l^*}L_-(l)$. If $\rho=0$ and $b=2$, it holds $b(1-\rho)=2$, so $l_-=-\infty$ and $L_-(l_-)=-\gamma$. In such case, the only annoying situation arises for $\gamma=0$, because it holds $l^* = 0 = \sup_{l < l^*}L_-(l)$. Taking $\gamma>0$ this case does not arise.

Consider the quantity $2l^*-l(2+bN'(l))$. Since $b(1+\rho)\leq 2$, it is greater than $-2\Bigl(-l^*+l\Bigl(1+\frac{N'(l)}{1+\rho}\Bigr)\Bigr)$, which is positive iff, substituting and dividing by $-2$, the quantity $c(l):=\frac{\rho}{\sqrt{1-\rho^2}}+\frac{l}{1+\rho}\Bigl(1+2\rho+\frac{l}{\sqrt{l^2+1}}\Bigr)$ is negative. Observe that for $l$ going to $-\infty$, $c$ goes to $-\infty$ when $\rho>0$ and to $0^-$ when $\rho=0$. Furthermore $c(l^*)=0$. The derivative of $c$ is $c'(l) = \frac{1}{1+\rho}\Bigl(1+2\rho+\frac{l(l^2+2)}{(l^2+1)^{\frac{3}2}}\Bigr)$. We would like $c'$ to be positive, so that $c$ would be increasing and, from the previous analysis, negative for $l<l^*$. However, this does not always happen. Indeed, the minimum point of $c'$ is reached at $-\sqrt{2}$ and it is $\frac{1}{1+\rho}\Bigl(1+2\rho-\frac{4}3\sqrt{\frac{2}3}\Bigr)$, which is non-negative iff $\rho\geq \Bigl(\frac{2}3\Bigr)^{\frac{3}2}-\frac{1}2=\bar\rho$. Therefore, we are annoyed by the cases where $\rho<\bar\rho$. Under this hypothesis, call $\bar l(\rho)$ the first zero of $c'$, that is also the only zero of $c'$ smaller than $-\sqrt{2}$. We need to prove that $d(\rho):=c(\bar l(\rho);\rho)<0$. Observe that $\frac{d}{d\rho}d(\rho) = \partial_\rho c(\bar l(\rho);\rho) + c'(\bar l(\rho);\rho)\frac{d}{d\rho}\bar l(\rho) = \partial_\rho c(\bar l(\rho);\rho)$. The latter quantity is $\frac{1}{(1-\rho^2)^{\frac{3}2}}+\frac{l}{(1+\rho)^2}\Bigl(1-\frac{l}{\sqrt{l^2+1}}\Bigr)$, which is increasing in $l$ and so it is smaller than its value at $l=-\sqrt{2}$. Then $\partial_\rho c(\bar l(\rho);\rho)\leq \frac{1}{(1-\rho^2)^{\frac{3}2}}-\frac{\sqrt{2}}{(1+\rho)^2}\Bigl(1-\sqrt{\frac{2}{3}}\Bigr)$ and it is negative for $\rho<\bar\rho$. Then, it is enough to show $d(0)<0$. For $\rho=0$, we have $\bar l(0)=-\infty$ and it is easy to see that $d(0)=0^-$ as desired.

Remember that if $l^*>0$ (i.e. $\rho<0$), then $l^*>L_-(l_-)$ because $L_-(l_-)\leq0$ for Lemma 4.4 in \cite{martini2020no}. Then we have proven that $\sup_{l < l^*}L_-(l) < l^*$ for every $\gamma> 0$ and $\rho$.

Using the symmetries, we have that for every $\rho$
$$l^*(\rho)> L_-(l_-(\rho);\rho)=-L_+(-l_-(\rho);-\rho)=-L_+(l_+(-\rho);-\rho)$$
or equivalently $l^*(-\rho)< L_+(l_+(-\rho);-\rho)$.

\end{proof}

	In particular this lemma holds for SSVI and has as immediate consequence
the following:

\begin{lemma}[Fukasawa conditions for SSVI]\label{LemmaFukSSVI}

An SSVI with $0<b\leq \frac{2}{1+|\rho|}$ always satisfies the Fukasawa conditions.

\end{lemma}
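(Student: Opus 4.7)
The plan is to invoke the previous lemma directly, since SSVI is a sub-SVI for which the two ``inputs'' of that lemma (a strictly positive $\gamma$, and $\mu$ pinned to the special value $l^*$) hold by construction.

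First, I would observe the two structural identities that characterize SSVI among SVI models: in the $(\gamma, b, \rho, \mu)$ parameterization one has
\[\gamma \;=\; \sqrt{1-\rho^2}, \qquad \mu \;=\; -\frac{\rho}{\sqrt{1-\rho^2}} \;=\; l^*(\rho).\]
These are read off from the $(a,b,m,\rho,\sigma)$ values of SSVI recalled at the beginning of the section. Assume $|\rho|<1$ (the boundary $|\rho|=1$ is handled separately below). Then $\gamma>0$, so in particular $\gamma + \sqrt{1-\rho^2}>0$ and the hypotheses of \Cref{TheoFuk} are in force.

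Second, I would check the two Fukasawa conditions of \Cref{TheoFuk} in turn. The condition $\gamma>\tilde F(b,\rho)$ is automatic: by the construction of $\tilde F$ recalled in \Cref{notations-and-preliminaries}, $\tilde F(b,\rho)\leq -\sqrt{1-\rho^2}\wedge 0\leq 0$, whereas $\gamma=\sqrt{1-\rho^2}>0$. The condition $\mu\in I_{\gamma,b,\rho}$ is exactly the statement of the preceding lemma applied to SSVI, since $\mu=l^*$. Note that this uses $\gamma>0$ strictly, which is where the assumption $|\rho|<1$ enters; the preceding lemma has been shown to cover all sub-cases of Roger Lee's constraint $b(1\pm\rho)\leq 2$, which for SSVI is exactly $b\leq \frac{2}{1+|\rho|}$.

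Third, for the degenerate boundary $|\rho|=1$: the SSVI formula collapses to a Vanishing SVI (Upward if $\rho=1$, Downward if $\rho=-1$), and the constraint $b\leq \frac{2}{1+|\rho|}=1$ matches the Roger Lee bound of that family; the Fukasawa conclusion then follows by specializing the Vanishing analysis of \Cref{vanishing-svi}, noting that $\gamma=0$ and $\mu=\mp\infty$ trivially satisfy the corresponding boundary condition of Theorem 6.3 of \cite{martini2020no}. I expect this step to be essentially vacuous. There is no substantive obstacle here: all the work is carried by the previous lemma, which dealt with the genuinely tricky sign analysis of $c(l)$ and the threshold $\bar\rho$; the role of the present lemma is simply to record that SSVI is the canonical example where $\mu=l^*$ makes that lemma immediately applicable.
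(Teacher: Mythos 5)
Your proof is correct and mirrors the paper's route: the paper presents \Cref{LemmaFukSSVI} as an immediate consequence of the preceding lemma, precisely because SSVI forces $\gamma=\sqrt{1-\rho^2}>0$ (making the $\gamma$-condition automatic, since $\tilde F(b,\rho)<0$) and $\mu=l^*$ (making the $\mu$-condition the content of that lemma). Your third step on $|\rho|=1$ is not in the paper and is somewhat hand-waving --- there $\sigma=\sqrt{1-\rho^2}/\varphi=0$, so the $(\gamma,\mu)$ reparametrization is undefined and the SSVI smile degenerates to a piecewise-linear function, a case the paper implicitly excludes and which the Vanishing analysis (which assumes $\sigma>0$) does not actually cover --- but this does not affect the substance of your argument for $|\rho|<1$, which is exactly the paper's.
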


\subsection{The condition on $\sigma$}\label{the-condition-on-sigma}

	As in the previous chapters, we start this section by defining some
(already used) useful functions: \begin{align}
b^*(l,\rho) &:= \sqrt{\frac{h(l,\rho)\bigl(h(l,\rho)g_2'(l,\rho) - 2h'(l,\rho)g_2(l,\rho)\bigr)}{g(l,\rho)\bigl(g(l,\rho)g_2'(l,\rho)-2g'(l,\rho)g_2(l,\rho)\bigr)}},\label{eqbstarSSVI}\\
\sigma^*(l,\rho) &:= -\frac{b^*(l,\rho)g_2(l,\rho)}{2(h(l,\rho)^2-b^*(l,\rho)^2g(l,\rho)^2)}.\label{eqsigmastarSSVI}
\end{align}

We will show that the only positive zero of \(g_2\) is
\begin{equation}\label{eql2SSVI}
l_2(\rho) = \frac{1}{\tan\bigl(\frac{\arccos(-\rho)}3\bigr)}.
\end{equation}

The objective of this subsection is to prove the following:

\begin{proposition}[Fully explicit no arbitrage domain for SSVI]\label{PropSSVI}

If an SSVI is arbitrage-free then $0< b\leq \frac{2}{1+|\rho|}$. 
\begin{itemize}
	\renewcommand{\labelitemii}{\tiny$\square$}
	\item If $b=\frac{2}{1+|\rho|}$, an SSVI is arbitrage-free iff $\sigma\geq\sqrt{1-\rho^2}$.
	\item If $b<\frac{2}{1+|\rho|}$, for $\rho\geq 0$, define $\bar l(0,\rho)$ the only root in $]l_2(\rho),\infty[$ of $b^*(l,\rho)=0$.
	
	Then, under the assumption that $\tilde f$ has a unique critical point (that we sustain numerically in \Cref{numerical-proof-of-the-uniqueness-of-the-critical-point-of-tilde-f-in-ssvi}):
	\begin{itemize}
		\item an SSVI with $\rho\geq 0$ is arbitrage-free iff it can be parametrized as
		\begin{equation}\label{eqSVINewParamSSVI}
			SVI(k) = \sigma b^*(l,\rho)\biggl(\sqrt{1-\rho^2}+\rho\Bigl(\frac{k}{\sigma}+\frac{\rho}{\sqrt{1-\rho^2}}\Bigr) + \sqrt{\Bigl(\frac{k}\sigma+\frac{\rho}{\sqrt{1-\rho^2}}\Bigr)^2+1}\biggr)
		\end{equation}
		where $l\in[\bar l(0,\rho),\infty[$ and $\sigma\geq\sigma^*(l,\rho)$;
		\item an SSVI with $\rho<0$ is arbitrage-free iff it can be parametrized as
		\begin{equation}\label{eqSVINewParamSSVI2}
			SVI(k) = \sigma b^*(l,-\rho)\biggl(\sqrt{1-\rho^2}+\rho\Bigl(\frac{k}{\sigma}+\frac{\rho}{\sqrt{1-\rho^2}}\Bigr) + \sqrt{\Bigl(\frac{k}\sigma+\frac{\rho}{\sqrt{1-\rho^2}}\Bigr)^2+1}\biggr)
		\end{equation}
		where $l\in[\bar l(0,-\rho),\infty[$ and $\sigma\geq\sigma^*(l,-\rho)$.
	\end{itemize}
\end{itemize}

\end{proposition}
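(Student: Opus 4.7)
\emph{Plan of proof.} The plan is to first reduce to $\rho \geq 0$, then reduce $\sigma^*$ to a one-sided supremum via an SSVI-specific symmetry, deploy the $b^*$ re-parametrization of Section~\ref{the-b-approach-reparametrizing-from-the-critical-point-equation} on the generic case $b < \frac{2}{1+|\rho|}$, and handle the boundary $b = \frac{2}{1+|\rho|}$ by explicit asymptotics at infinity. Necessity of $b \leq \frac{2}{1+|\rho|}$ will follow directly from Theorem~\ref{FinalTheo}; since $\gamma = \sqrt{1-\rho^2} > 0$ and SSVI pins $\mu = l^*$, Lemma~\ref{LemmaFukSSVI} makes the Fukasawa conditions automatic, so only the $\sigma$-condition will need work. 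For SSVI the inversion $(b,\rho,\mu) \mapsto (b,-\rho,-\mu)$ maps SSVI to SSVI (since $-\mu(\rho) = \mu(-\rho)$), hence Proposition~\ref{PrepInverseSVI} lets me restrict to $\rho \geq 0$ and the $\rho < 0$ clause will follow verbatim.

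For $\rho \geq 0$ and $b < \frac{2}{1+\rho}$, I will first show $\sigma^* = \sup_{l > l_2} f(l)$. The key observation is that substituting $\gamma = \sqrt{1-\rho^2}$ and $\mu = -\rho/\sqrt{1-\rho^2}$ into the expression for $h(l)-h(-l)$ inside the proof of Lemma~\ref{Lemmafrho0} makes its numerator $\rho(\gamma\sqrt{l^2+1}+1) + \mu(\gamma + (1-\rho^2)\sqrt{l^2+1})$ vanish identically, whence $h(l) = h(-l)$. Combined with $g(l)+g(-l) = \rho/2$ and $g(l)-g(-l) = l/(2\sqrt{l^2+1})$, this yields $G_1(l) - G_1(-l) = -b^2\rho l/(4\sqrt{l^2+1}) \leq 0$ for $l > 0$; together with $g_2(l) \leq g_2(-l) \leq 0$ from Lemma~\ref{Lemmag2rho}, a short sign check delivers $f(l) \geq f(-l)$ for $l > -l_1 \geq l_2$, and the reduction follows as in the discussion after Lemma~\ref{Lemmag2rho}. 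The formula~\eqref{eql2SSVI} for $l_2(\rho)$ comes out of $2N(l)N''(l) = N'(l)^2$ via the substitutions $s = \sqrt{l^2+1}$ and $\rho = -\cos\psi$, which reduce the equation to a depressed cubic whose relevant root on $]l^*,\infty[$ is $s = 1/\sin(\psi/3)$.

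Then I apply the $b^*$ framework of Section~\ref{the-b-approach-reparametrizing-from-the-critical-point-equation}. The first hypothesis is vacuous since $\gamma$ is fixed once $\rho$ is; the emptiness of $\{p=q=0\}$ is a direct computation; the uniqueness of the critical point of $\tilde f$ on $]l_2, \infty[$ is the only non-trivial hypothesis, which I carry as a numerical certificate. Under these, $b \mapsto \bar l(b, \rho)$ is a continuous strictly monotone bijection with $\bar l(0, \rho)$ equal to the unique zero of $p$, and the inversion $b^{*2} = hp/(gq)$ turns Theorem~\ref{FinalTheo} into the claimed parametrization with $\sigma \geq \sigma^*(l, \rho)$. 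For the boundary case $b = \frac{2}{1+\rho}$, an asymptotic analysis at $l = \infty$ is needed: $G_{1-}(\infty) = \tfrac{1}{2} - \tfrac{b(1+\rho)}{4} = 0$, and one order deeper one gets $h(l) - b g(l) \sim \tfrac{1}{2\sqrt{1-\rho^2}\,l}$ and $g_2(l) \sim -\tfrac{1+\rho}{2l}$, so $\tilde f(l) \to \tfrac{1}{\sqrt{1-\rho^2}(1+\rho)}$ and $\sigma^* = \tfrac{b}{2\tilde f(\infty)} = \sqrt{1-\rho^2}$; this identifies $\bar l\bigl(\tfrac{2}{1+\rho},\rho\bigr) = \infty$ and dovetails the boundary clause with the generic parametrization as $l \to \infty$.

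The hard part is the uniqueness of the critical point of $\tilde f$ on $]l_2, \infty[$: unlike the Vanishing and Symmetric cases where a hand-sign analysis of $\tilde f'$ and $\tilde f''$ was feasible, the combination of a non-trivial $\rho$ with the tied parameter $\mu = -\rho/\sqrt{1-\rho^2}$ makes the relevant polynomials intractable analytically, and this is exactly where the numerical proof of the next subsection will come in.
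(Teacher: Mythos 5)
Your proposal is correct and tracks the paper's own proof closely: the same reduction to $\rho\geq0$ via Proposition~\ref{PrepInverseSVI}, the same observation that $\mu=l^*$ kills the $h(l)-h(-l)$ numerator so that $G_1(l)-G_1(-l)=-b^2\rho l/(4\sqrt{l^2+1})$ and the supremum lives on $]l_2,\infty[$, the same trigonometric root formula for $l_2$, the same application of the $b^*$ framework of Section~\ref{the-b-approach-reparametrizing-from-the-critical-point-equation} with a numerical certificate for uniqueness, and the same asymptotic treatment of $b=\frac{2}{1+\rho}$ at $l=\infty$. One small remark in your favour: your expansion $h(l)-bg(l)\sim\frac{1}{2\sqrt{1-\rho^2}\,l}$ and $g_2(l)\sim-\frac{1+\rho}{2l}$ gives $\tilde f(\infty)=\frac{1}{(1+\rho)\sqrt{1-\rho^2}}$, which indeed yields $\sigma^*=\frac{b}{2\tilde f(\infty)}=\sqrt{1-\rho^2}$ as required; the paper's intermediate claim $\tilde f(\infty)=\frac{1}{2}\bigl(\frac{\gamma}{1+\rho}-\mu\bigr)$ has the wrong prefactor (it should read $\frac{1}{1+\rho}\bigl(\frac{\gamma}{1+\rho}-\mu\bigr)$), although the paper's final value of $\sigma^*$ is correct — so your asymptotics are actually the cleaner version. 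A minor wording caveat: the zero equation for $g_2$ is a depressed cubic in $x=l/\sqrt{l^2+1}$, namely $4x^3-3x+\rho=0$, rather than in $s=\sqrt{l^2+1}$; the identity $s=1/\sin(\psi/3)$ then follows from $x=\cos(\psi/3)$, so your final formula is right even if the intermediate characterization as ``a cubic in $s$'' is slightly loose.
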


\subsubsection{Proof of \Cref{PropSSVI}}\label{proof-of}

	We give the proof in the case \(\rho\geq0\). Using
\Cref{Lemmag2rho,Lemmafrho0} we get the following remark:

\begin{remark}\label{RemarkSSVI}

Suppose $\rho\geq0$, then $l_2\leq-l_1$. Fix $l>l_2>0$. For the particular case of SSVI, $h(-l)-h(l)=0$ so the difference between $G_1(l)$ and $G_1(-l)$ is given by $-b^2\frac{\rho l}{4\sqrt{l^2+1}}$, which is negative. Since $g_2(l)\leq g_2(-l)$, then $\tilde f(l)\leq\tilde f(-l)$, so $\inf_{l>l_2}\tilde f(l)\leq\inf_{l< l_1}\tilde f(l)$. It follows then that $\sigma^* = \frac{b}{2\inf_{l>l_2}\tilde f(l)}$.

Similarly for $\rho< 0$ and $l< l_1$, it holds $\tilde f(l)<\tilde f(-l)$ and $\sigma^*=\frac{b}{2\inf_{l< l_1}\tilde f(l)}$.

\end{remark}

\paragraph{Study of $j_2$}\label{study-of-j_2}

	In this section, we work with \(x = \frac{l}{\sqrt{l^2+1}}\), whence
\begin{align*}
\sqrt{1-x^2} N(l(x)) &= 1+ \rho x + \sqrt{1-\rho^2} \sqrt{1-x^2},\\
N'(l(x)) &= x+\rho,\\
N''(l(x)) &= (1-x^2)^{\frac{3}2}.
\end{align*}

\begin{remark}\label{Remarkj2SSVI}

We redefine the useful functions when using the $x$ variable and set: 
$$\Pi(x) := N(l(x)),\ j_2(x) := g_2(l(x)),\ j(x) := g(l(x)),\ J_1(x):=G_1(l(x)),\ \tilde\phi(x) := \tilde f(l(x)).$$
The derivative with respect to $l$ is indicated with $'$, so for example the function $j_2'(x)$ corresponds to $g_2'(l(x))$ rather than $\frac{dj_2}{dx}(x)$.

\end{remark}

	In the study of the function \(g_2\) in \cite{martini2020no}, we recall
that the function has one positive zero \(l_2>l^*\) and it has at least
one point of minimum (that here we prove to be unique and call \(m_2\))
at the right of \(l_2\).

\subparagraph{Computation of $l_2$}\label{computation-of-l_2}

	Let us study the location of the unique positive zero of
\(j_2(x) = (1-x^2)^{\frac{3}2} - \frac{(x+\rho)^2 \sqrt{1-x^2}}{2 (1+\rho x + \bar{\rho} \sqrt{1-x^2})}\),
where \(\bar\rho=\sqrt{1-\rho^2}\). Note that \(j_2(1)=0\); so for
\(x< 1\), \(j_2(x)=0\) iff \begin{equation}\label{zerogx}
2(1-x^2) (1+\rho x + \bar{\rho} \sqrt{1-x^2}) = (x+\rho)^2.
\end{equation} Isolating the radical and squaring yields that the zeros
of \(j_2\) solve the polynomial root equation
\begin{equation}\label{zerogxweak}
4\bar{\rho}^2 (1-x^2)^3 = \bigl( (x+\rho)^2 - 2 (1+\rho x)(1-x^2) \bigr)^2.
\end{equation} The key observation is that \(-\rho\) is a root of
\cref{zerogxweak} and not of \cref{zerogx}. This leads to the following
factorization of the polynomial
\((x+\rho)(4x^5+8\rho x^4+(4\rho^2-3)x^3-5\rho x^2-\rho^2x+\rho^3)\).
Now \(-\rho\) is \emph{again} a root of the rightmost factor which
factors in turn in
\((x+\rho)(4x^4+4\rho x^3-3 x^2 - 2 \rho x + \rho^2)\)\ldots{} and the
miracle continues! Indeed, \(-\rho\) is again a root of this later
factor, leading to the fact that the zeros of \(j_2\) solve
\(\rho = x(3 - 4x^2)\).

	In particular:

\begin{itemize}
\item
  \(\rho=0\), assuming we can exclude \(x=0\), gives
  \(x=\frac{\sqrt{3}}{2}< 1\)
\item
  \(\rho=1\) gives the polynomial equation \(4x^3-3x+1=0\) which reads
  \(4(x+1)\bigl(x-\frac{1}{2}\bigr)^2=0\), whence \(x=\frac{1}{2}\).
\end{itemize}

	Taking the derivative with respect to \(\rho\) gives
\(1=3x'(\rho)(1-4 x(\rho)^2)\) which gives in turn that
\(\rho \to x(\rho)\) is decreasing when \(x>\frac{1}{2}\). We can now
either solve the 3rd degree polynomial root problem, or take \(x\) as a
parameter varying in the range
\(\bigl[\frac{1}{2},\frac{\sqrt{3}}{2}\bigr]\), and backup \(\rho\) by
the formula \(\rho(x) = 3x-4x^3\).

	Going the other route, we note that the polynomial
\(Q_\rho(x):=4x^3-3x+\rho\) satisfies
\(Q_\rho\bigl(\frac{1}{2}\bigr)=-1+\rho<0\),
\(Q_\rho\bigl(\frac{\sqrt{3}}{2}\bigr)=\rho\geq 0\) and also
\(Q_\rho'(x) = 3(4x^2-1)\) which is positive in the range
\(\bigl]\frac{1}{2},\frac{\sqrt{3}}{2}\bigr]\), so it has a unique real
root in this range. Its unique local maximum is located at
\(x=-\frac{1}{2}\) where \(Q_\rho\bigl(-\frac{1}{2}\bigr)=1+\rho>0\),
with, together with the observation that \(Q_\rho(-1)=\rho-1<0\) and
\(Q_\rho(\frac{1}{2})<0\), gives that \(Q_\rho\) has 2 other real roots
located in \(\bigl]-1, -\frac{1}{2}\bigr[\) and
\(\bigl]-\frac{1}{2}, \frac{1}{2}\bigr[\), so that \(x(\rho)\) is the
largest of the roots of \(Q_\rho\). The three solutions have explicit
formula
\[\cos\biggl(\frac{\arccos(-\rho)-2\pi k}3\biggr),\quad k=0,1,2\] and
the greatest is \(x(\rho) = \cos\bigl(\frac{\arccos(-\rho)}3\bigr)\). It
follows that \(l_2(\rho)\) is given by \cref{eql2SSVI}.

\subparagraph{Uniqueness of $m_2$}\label{uniqueness-of-m_2}

	Using the parametrization in \(x\), we get that the zeros of \(j_2'\)
solve: \begin{equation}\label{zeroG2Prime}
(x+\rho)^3 = 2 (1-x^2) \bigl(\rho x +\bar{\rho} \sqrt{1-x^2}+1\bigr) \Bigl( 3x \bigl(\rho x +\bar{\rho} \sqrt{1-x^2}+1\bigr) + x+\rho \Bigr).
\end{equation} In particular for \(\rho=0\) this reads
\(x^2 = 2(1-x^2) \bigl(1+\sqrt{1-x^2}\bigr) \bigl(4+3 \sqrt{1-x^2}\bigr)\).
Isolating the radical and squaring yields the polynomial equation
\(\bigl( x^2+2(1-x^2)(3x^2-7) \bigr)^2 = 4 \times 49 (1-x ^2)^3\); \(0\)
is a root, whereas it is not a solution of \cref{zeroG2Prime}. Letting
\(X=1-x^2\), the polynomial factorizes into \((1-X)^2 (36 X^2-16 X+1)\),
yielding two roots in \(]0,1[\), \(\frac{4+\sqrt{7}}{18}\) and
\(\frac{4-\sqrt{7}}{18}\), with only the latter one solving
\cref{zeroG2Prime}, giving in turn
\(x_{m_2}(\rho=0) = \sqrt{\frac{\sqrt{7}}{18}+\frac{7}{9}}\).

Let us turn now to \(\rho=1\). In this case \cref{zeroG2Prime}
simplifies to \(1=2(1-x)(3x+1)\), or yet \(6x^2-4x-1=0\), yielding
\(x_{m_2}(\rho=1)=\frac{2+\sqrt{10}}{6}\).

Before investigating the general case, we can observe that if we set
\(x=\rho\) in \cref{zeroG2Prime} we get an equation in \(\rho\) which is
\(8 \rho^3 = 4 (1-\rho^2) \times 8 \rho\), or yet
\(\rho^2 = 4 (1-\rho^2)\) which gives
\(x_{m_2}\bigl(\rho=\frac{2}{\sqrt{5}}\bigr)=\frac{2}{\sqrt{5}}\).

	Let us prove now that in the general case \(g_2\) has a unique critical
point.

	Note that \(\partial_\rho N(l) = l+l^*\), so
\(\partial_\rho g_2(l) = -\partial_\rho\frac{N'(l)^2}{2N(l)} = -\frac{N'(l)}{N(l)}\Bigl(1-\frac{N'(l)}{2N(l)}(l+l^*)\Bigr) = -\frac{N'(l)}{N(l)}h(l)< 0\)
because \(h(l)=(G_{1+}(l)+G_{1-}(l))/2>0\). From the formula
\(g_2'(l) = N'''(l)-\frac{N'(l)}{N(l)}g_2(l)\), we find
\(\partial_\rho g_2'(l) = -\Bigl(\partial_\rho\frac{N'(l)}{N(l)}\Bigr)g_2 - \frac{N'(l)}{N(l)}\partial_\rho g_2\)
since \(N'''\) does not depend on \(\rho\). The second term is positive
while the first has the factor
\(\partial_\rho\frac{N'(l)}{N(l)} = \frac{1}{N(l)}\Bigl(1-\frac{N'(l)}{N(l)}(l+l^*)\Bigr) = \frac{1}{N(l)\sqrt{l^2+1}\sqrt{1-\rho^2}} >0\).
Then \(\partial_\rho g_2'(l)>0\).

We have seen that for \(\rho=0\) and \(\rho=1\), the function \(g_2\)
has a unique critical point \(m_2\). Since \(\partial_\rho g_2'(l)>0\),
then for every \(\rho< 1\), the critical points of \(g_2\) must be
greater than \(m_2(\rho=1)=\frac{2\sqrt{2}+\sqrt{5}}{3}\). A critical
point in the \(l\)-variable is still a critical point in the
\(x\)-variable and viceversa, because
\(\frac{df}{dx}(x) = f'(l(x))\frac{dl}{dx}(x) = \frac{f'(l(x))}{(1-x^2)^{\frac{3}{2}}}\)
with the convention \('=\frac{d}{dl}\). Then, showing the convexity of
\(g_2\) in the \(x\)-variable for \(x>x_{m_2}(\rho=1)\) automatically
proves the uniqueness of its critical point in any variable (even if
\(g_2\) is not convex in the \(l\)-variable).

We have
\(\frac{d^2j_2}{dx^2}(x) = \bigl(\frac{dl}{dx}(x)\bigr)^2j_2''(x) + j_2'(x)\frac{d^2l}{dx^2}(x) = (j_2''(x)+3x\sqrt{1-x^2}j_2'(x))(1-x^2)^{-3}\)
and \begin{align*}
j_2''(x)+3x\sqrt{1-x^2}j_2'(x) =& \biggl(\Pi^{iv}(x) - \frac{\Pi''(x)^2}{\Pi(x)} - \frac{\Pi'(x)\Pi'''(x)}{\Pi(x)} + \frac{5\Pi'(x)^2\Pi''(x)}{2\Pi(x)^2} - \frac{\Pi'(x)^4}{\Pi(x)^3}\biggr) +\\&+ 3x\sqrt{1-x^2}\biggl(\Pi'''(x)-\frac{\Pi'(x)\Pi''(x)}{\Pi(x)} + \frac{\Pi'(x)^3}{2\Pi(x)^2}\biggr).
\end{align*} Since \(3x\sqrt{1-x^2} = -\frac{\Pi'''(x)}{\Pi''(x)}\), the
terms \(-\frac{\Pi'(x)\Pi'''(x)}{\Pi(x)}\) and
\(-3x\sqrt{1-x^2}\frac{\Pi'(x)\Pi''(x)}{\Pi(x)}\) simplify. Also,
\(\Pi^{iv}(x) + 3x\sqrt{1-x^2}\Pi'''(x) = 3(1-x^2)^{\frac{5}2}(2x^2-1)\),
which is positive since \(x>\frac{2+\sqrt{10}}{6}>\frac{1}{\sqrt{2}}\).
The term \(- \frac{\Pi''(x)^2}{\Pi(x)}\) becomes positive with the sum
of \(\frac{\Pi'(x)^2\Pi''(x)}{2\Pi(x)^2}\), indeed it becomes
\(-\frac{\Pi''(x)}{\Pi(x)}\Bigl(\Pi''(x) - \frac{\Pi'(x)^2}{2\Pi(x)}\Bigr) = -\frac{\Pi''(x)}{\Pi(x)}j_2(x) > 0\).
Note that the remaining \(4\frac{\Pi'(x)^2\Pi''(x)}{2\Pi(x)^2}\) is
positive. Finally, \(- \frac{\Pi'(x)^4}{\Pi(x)^3}\) compensates with
\(3x\sqrt{1-x^2}\frac{\Pi'(x)^3}{2\Pi(x)^2}\) summing to
\[\frac{\Pi'(x)^3}{\Pi(x)^2}\biggl(\frac{3}2 x\sqrt{1-x^2}-\frac{\Pi'(x)}{\Pi(x)}\biggr) = \frac{\Pi'(x)^3}{2\Pi(x)^3}\bigl(x+3\rho x^2-2\rho+3x\sqrt{1-\rho^2}\sqrt{1-x^2}\bigr).\]
Now \(3\rho x^2-2\rho\) is positive iff \(x>\sqrt{\frac{2}3}\) and this
is true, so also the previous quantity is positive. Then, since the
negative terms of \(\frac{d^2j_2}{dx^2}\) are smaller in magnitude than
its positive terms, \(j_2\) is convex in the \(x\)-variable.

As a consequence, \(j_2\) has a unique critical point for \(x>x_{2}\)
and it lies between \(x_{m_2}(\rho=1)\) and \(x_{m_2}(\rho=0)\).

	In order to study \(x_{m_2}\) in the general case, let us isolate the
radical of \cref{zeroG2Prime} and square. We get the daunting polynomial
root equation \begin{equation*}
\bigl( (\rho+x)^3 - 2 (1-x^2) \bigl( 3 \bar{\rho}^2 x (1-x^2)+(\rho x+1) (3x (\rho x+1)+(\rho+x)) \bigr) \bigr)^2 = 4 \bar{\rho}^2 (1-x^2)^3 \bigl( 6x (\rho x+1) + \rho+x \bigr)^2
\end{equation*} There again, we observe that \(-\rho\) is a root,
whereas it does not solve the initial equation. This leads to an
iterative factorization where eventually \(-\rho\) is a root of order
\(4\). The remaining factor is \begin{equation*}
\rho^2 + 2 \rho (12 x^5-16 x^3 + 5 x) + (36 x^6 - 56 x^4 + 21 x^2).
\end{equation*} For \(x\in[x_{m_2}(\rho=1),x_{m_2}(\rho=0)]\), the only
positive \(\rho\) solution is
\[\rho(x) = x(-12x^4+16x^2-5+2(1-x^2)\sqrt{36x^4-24x^2+1}).\] Since
\(\partial_\rho j_2'<0\), also \(\partial_\rho\frac{dj_2}{dx}<0\) and
the function \(\rho(x)\) is invertible in
\([x_{m_2}(\rho=1),x_{m_2}(\rho=0)] = \Bigl[\frac{2+\sqrt{10}}{6},\sqrt{\frac{\sqrt{7}}{18}+\frac{7}{9}}\Bigr]\)
and its inverse gives the value of \(x_{m_2}\) for fixed \(\rho\).

\paragraph{Study of $J_1$}\label{study-of-j_1}

	From the above remark, \(J_1(x) = \eta^2(x)-b^2j^2(x)\) where
\(\eta(x) = \frac{1}{2}\bigl(1+\sqrt{\frac{1-x^2}{1-\rho^2}}\bigr)\) and
\(j(x) = \frac{x+\rho}{4}\). The simplified formula for \(\eta\) can be
recovered from: \begin{align*}
\eta(x) &= 1-\frac{\Pi'(x)}{2\Pi(x)}\Bigl(\frac{x}{\sqrt{1-x^2}}-\frac{\rho}{\sqrt{1-\rho^2}}\Bigr)\\
&= 1-\frac{(x+\rho)\bigl(x\sqrt{1-\rho^2}-\rho\sqrt{1-x^2}\bigr)}{2\sqrt{1-\rho^2}\bigl(1+\rho x+\sqrt{1-\rho^2}\sqrt{1-x^2}\bigr)}\\
&= \frac{\sqrt{1-x^2}((1-\rho^2)+\rho x+1)+\sqrt{1-\rho^2}((1-x^2)+\rho x+1)}{2\sqrt{1-\rho^2}\bigl(1+\rho x+\sqrt{1-\rho^2}\sqrt{1-x^2}\bigr)}
\end{align*} and collecting at the numerator the quantities
\(\sqrt{1-\rho^2}\sqrt{1-x^2}\) and \(\rho x+1\), one recovers the
product
\(\bigl(\sqrt{1-\rho^2} + \sqrt{1-x^2}\bigr)\bigl(1+\rho x+\sqrt{1-\rho^2}\sqrt{1-x^2}\bigr)\)
where the second factor simplifies with the denominator.

The function \(j\) is of course positive and increasing in \(x\), while
\(\eta\) is positive decreasing in \(x\), since
\(\frac{d\eta}{dx}(x) = -\frac{x}{2\sqrt{1-\rho^2}\sqrt{1-x^2}}<0\).
Then \(J_1\) is a decreasing function and it attains its minimum at
\(1\), equal to
\(\frac{1}{16}\bigl(2-b(1+\rho)\bigr)\bigl(2+b(1+\rho)\bigr)\).

Looking at the condition
\(\sigma\geq\sup_{x>x_2}-\frac{bj_2(x)}{2J_1(x)}\), we then have that
for every \(x>x_2\), it holds
\(-\frac{bj_2(x)}{2J_1(x)} < -\frac{bj_2(m_2)}{2J_1(1)} = -\frac{8bj_2(m_2)}{(4-b^2(1+\rho)^2)}\),
from which one immediately finds a no arbitrage sub-domain for SSVI:

\begin{lemma}[No arbitrage sub-domain for SSVI]\label{lemmaSubDomainSSVI}

Let $j_2$ be given as in \Cref{Remarkj2SSVI}. An SSVI with $0< b(1+|\rho|)< 2$ and $\sigma\geq -\frac{8bj_2(m_2(|\rho|),|\rho|)}{(4-b^2(1+|\rho|)^2)}$ is arbitrage-free.

\end{lemma}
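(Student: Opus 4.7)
The plan is to package the work already done above into an explicit upper bound on the quantity $\sigma^{\ast}$ appearing in \Cref{FinalTheo}. Under the hypothesis $0<b(1+|\rho|)<2$, \Cref{LemmaFukSSVI} says the Fukasawa conditions hold automatically, so by \Cref{FinalTheo} absence of Butterfly arbitrage is equivalent to $\sigma\geq\sigma^{\ast}$. I would begin with the case $\rho\geq 0$. By \Cref{RemarkSSVI}, the supremum defining $\sigma^{\ast}$ is attained on the right of $l_2$, so in the $x$-parametrization of \Cref{Remarkj2SSVI} one has
$$\sigma^{\ast} = \sup_{x\in(x_2,1)}\Bigl(-\frac{b\,j_2(x)}{2\,J_1(x)}\Bigr).$$

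Next I would produce the announced sufficient condition by bounding the denominator from below and the numerator from above separately. The study of $J_1$ in the preceding paragraph shows that $J_1$ is strictly decreasing on $(x_2,1)$, whence
$$J_1(x)\;\geq\; J_1(1)\;=\;\tfrac{1}{16}\bigl(4-b^2(1+\rho)^2\bigr),$$
which is \emph{strictly positive} precisely because $b(1+\rho)<2$. For the numerator, $j_2(x_2)=j_2(1)=0$ together with the fact that $g_2<0$ beyond $l_2$ give $j_2<0$ on $(x_2,1)$; the uniqueness of the minimizer $m_2$ established above then yields $-j_2(x)\leq -j_2(m_2(\rho),\rho)$ throughout the interval. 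Combining the two bounds,
$$\sigma^{\ast}\;\leq\; -\frac{b\,j_2(m_2(\rho),\rho)}{2\,J_1(1)}\;=\;-\frac{8\,b\,j_2(m_2(\rho),\rho)}{4-b^2(1+\rho)^2},$$
so any $\sigma$ exceeding the right-hand side rules out Butterfly arbitrage.

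For $\rho<0$, I would invoke \Cref{PrepInverseSVI}: replacing $\rho$ by $-\rho$ in the SSVI parameters simultaneously flips $\rho$ and $m=-\rho/\varphi$ in the underlying SVI, matching the inverse-SVI symmetry, so the SSVI with parameter $\rho$ is Butterfly-arbitrage-free iff the one with parameter $-\rho$ is. Applying the preceding paragraph with $|\rho|$ in place of $\rho$ then delivers the bound as stated.

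The main obstacle is essentially nil: all the heavy lifting (monotonicity of $J_1$, uniqueness and location of the minimum $m_2$, and the reduction of $\sigma^{\ast}$ to a one-sided supremum) has been done upstream, and the lemma is simply the product of two elementary pointwise inequalities. The only point requiring a moment of care is that the strict inequality $b(1+|\rho|)<2$ is what guarantees $J_1(1)>0$, so that the bound is finite; the boundary case $b(1+|\rho|)=2$ is treated separately in \Cref{PropSSVI}.
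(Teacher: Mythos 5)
Your proof is correct and follows essentially the same route as the paper: use \Cref{LemmaFukSSVI} and \Cref{RemarkSSVI} to reduce to a one-sided supremum, then bound the numerator by its value at the unique minimizer $m_2$ and the denominator by its minimum $J_1(1)=\tfrac{1}{16}(4-b^2(1+\rho)^2)$ using the established monotonicity of $J_1$, and finally handle $\rho<0$ via \Cref{PrepInverseSVI}. The paper states this more tersely but the logical content is identical.
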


\paragraph{Uniqueness of the critical point of $\tilde f$ for
SSVI}\label{uniqueness-of-the-critical-point-of-tilde-f-for-ssvi}

	Remember that we consider the case \(\rho\geq0\). Then, we look at the
infimum of \(\tilde f(l)\) for \(l>l_2\), letting \(b\) be eventually
\(0\). For \(b<\frac{2}{1+\rho}\), it holds
\(\tilde f(l_2^+) = \infty=\tilde f(\infty)\) and \(\tilde f>0\), so
\(\tilde f\) must have a critical point which is a point of minimum. We
want to prove that in such case \(\tilde f\) has a unique critical point
in \(]l_2,\infty[\). When \(b=\frac{2}{1+\rho}\), it can be shown that
the derivative of \(\tilde f\) vanishes at \(\infty\). We prove that
even in this case this is the unique point of minimum of \(\tilde f\).

	To show that \(\tilde f\) has a unique critical point, we can prove that
\(\frac{d\tilde\phi}{dx}\) has a unique zero.

We have seen that \(J_1\) is decreasing as a function of \(x\). A
critical point must satisfied \(\frac{d\tilde\phi}{dx}=0\) or
\(J_1\frac{dj_2}{dx}-\frac{dJ_1}{dx}j_2=0\). Looking at the sign of the
previous functions, it must hold \(\frac{dj_2}{dx}>0\) so the critical
point is on the right of \(x_{m_2}\). At the critical point, it holds
\(j_2^2\frac{d^2\tilde\phi}{dx^2} = J_1\frac{d^2j_2}{dx^2}-\frac{d^2 J_1}{dx^2}j_2 := n\),
so we should show that \(n\) is positive for every \(x>x_{m_2}\).

We have
\(\partial_{b^2}n = -j^2\frac{d^2j_2}{dx^2} + 2(\frac{dj}{dx}^2+j\frac{d^2j}{dx^2})j_2 = -\frac{\Pi'^2}{16}\frac{d^2j_2}{dx^2} + \frac{j_2}{8} < 0\),
so it is enough to show the positivity of \(n_{|_{b=\frac{2}{1+\rho}}}\)
for \(x>x_{m_2}\) or, more generally, for
\(x>x_{m_2}(\rho=1)=\frac{2+\sqrt{10}}{6}\). We show it numerically in
\Cref{numerical-proof-of-the-uniqueness-of-the-critical-point-of-tilde-f-in-ssvi}. We plot in \Cref{FigureSSVIrhos} the function
\(n_{|_{b=\frac{2}{1+\rho}}}\) for \(x>x_{m_2}(\rho=1)\) for different
values of \(\rho\).

\begin{figure}
	\centering
	\includegraphics[width=.7\textwidth]{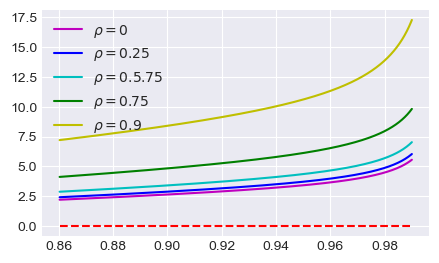}
	\caption{$n'|_{b=\frac{2}{1+\rho}}$ for different $\rho$s.}
	\label{FigureSSVIrhos}
\end{figure}
    
\paragraph{The $b^*$ approach for SSVI}\label{the-b-approach-for-ssvi}

	We have numerically shown that \(\tilde f\) has at most one critical
point in \(]l_2,\infty]\) and this is a point of minimum.

\begin{itemize}
\item
  The limit of \(\tilde f\) at \(\infty\) is finite iff
  \(b=\frac{2}{1+\rho}\) and it equals
  \(\frac{1}{2}\Bigl(\frac{\gamma}{1+\rho}-\mu\Bigr)=\frac{1}{2\sqrt{1-\rho^2}}\)
  for SSVI. So if \(b=\frac{2}{1+\rho}\), the point of minimum of
  \(\tilde f\) is \(\infty\) because there its first derivative is null,
  so that due to uniqueness, the function is strictly decreasing, and in
  such a case, \(\sigma^*=\sqrt{1-\rho^2}\).
\item
  Consider now \(b<\frac{2}{1+\rho}\), \(\tilde f\) has a critical point
  \(\bar l\in]l_2,\infty[\) and it satisfies \(b^2=b^*(\bar l)^2\) or
  \(p(\bar l)=q(\bar l) =0\). In the latter case, we should have
  \(\frac{2h'(\bar l)}{h(\bar l)}=\frac{2g'(\bar l)}{g(\bar l)}\) but
  the right hand side is always positive while we have shown that the
  left hand side is negative. Then we can apply the observation in
  \cref{the-b-approach-reparametrizing-from-the-critical-point-equation} and obtain that the set of the critical points
  of \(\tilde f\) for \(\rho\) fixed is equal to the interval starting
  from the maximum between all the zeros of \(p\), coinciding with
  \(\bar l(0,\rho)\), and ending at
  \(\bar l\bigl(\frac{2}{1+\rho},\rho\bigr) = \infty\). Fix
  \(\rho\geq 0\), \(b<\frac{2}{1+\rho}\), choose
  \(l\in[\bar l(0,\rho),\infty[\). Then
  \(\sigma^*=\frac{b^*(l)}{2\tilde f(l;\rho)}\).
\end{itemize}

\paragraph{Limits of $\bar l$}\label{limits-of-bar-l}

	Because of the uniqueness of the critical points of \(\tilde \phi\),
when \(b\) goes to \(0\), the critical point satisfies
\(\eta j_2'-2\eta'j_2=0\). The numerator of \(\eta j_2'-2\eta'j_2\) is
\begin{multline}\label{eqnump}
\sqrt{1 - \rho^{2}}\bigl(- 4\rho x^{6} + 2(6\rho^{2} - 5)x^5 + 24\rho x^{4} + (31 - 14\rho^{2})x^3 - 13\rho x^{2} + 5(\rho^{2} - 4)x +\rho(\rho^2- 4)\bigr) +\\
+ \sqrt{1 - x^{2}}\bigl(2(2\rho^{2} - 1)x^5 + 4\rho(4 - 3\rho^{2})x^4 + (21 - 22 \rho^{2})x^3 + \rho(8 \rho^{2} - 15)x^2 + 5(3\rho^{2} - 4)x + \rho(3\rho^2 - 4)\bigr).
\end{multline} If \(\rho=0\), possible solutions are \(x=0\) and
\(x=\sqrt{\sqrt{6}-\frac{3}2}\) but since \(x\) must be greater than
\(x_2\), the latter is the searched solution. If \(\rho=1\), the only
possible solution is \(x=1\). In terms of \(l\), these two points
correspond to \(\sqrt{9+4\sqrt{6}}\) and \(\infty\).

	The above expression can also be written as
\[2u\sqrt{1-x^2}\bigl[2x(1-x^2)u-x(x+\rho)^2-\sqrt{1-x^2}(x+\rho+3xu)v\bigr] + (x+\rho)^3v\]
or
\[\varphi(x):=-2u^2\sqrt{1-x^2}\bigl[x\sqrt{1-x^2}\bigl(3\sqrt{1-\rho^2}+\sqrt{1-x^2}\bigr)+x+\rho\bigr] + (x+\rho)^3v\]
where \(u=1+\rho x+\sqrt{1-x^2}\sqrt{1-\rho^2}\) and
\(v=\sqrt{1-x^2}+\sqrt{1-\rho^2}\). Observe that
\((u,v)(x=1)=\bigl(1+\rho,\sqrt{1-\rho^2}\bigr)\) and
\((u,v)(x=\rho)=\bigl(2,2\sqrt{1-\rho^2}\bigr)\) so that

\begin{itemize}
\item
  \(\varphi(1) = (1+\rho)^3\sqrt{1-\rho^2}>0\);
\item
  \(\varphi(\rho) = -48\rho(1-\rho^2)^{\frac{3}2}\leq 0\);
\end{itemize}

and \(\varphi\) has a zero in the range \(]x_2(\rho)\lor \rho, 1[\).

	\begin{lemma}[Computation of $\bar{l}(0,\rho)$]

The only root of $b^*(l) = 0$ is $\sqrt{9+4\sqrt{6}}$ if $\rho=0$, $\infty$ if $\rho=1$ and, in the general case, it lies in $]l_2(\rho)\lor -l^*(\rho),\infty[$. It can be computed as $\bar l(0,\rho) = \frac{\bar x(0,\rho)}{\sqrt{1-\bar x(0,\rho)^2}}$, where $\bar x(0,\rho)$ is the only zero of \cref{eqnump}, which lies in $]x_2(\rho)\lor \rho,1[$.

\end{lemma}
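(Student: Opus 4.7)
The plan is to exploit the $b^*$-approach setup: a critical point $\bar l$ of $\tilde f(\cdot;b,\rho)$ satisfies the characteristic equation $h(\bar l)p(\bar l) - b^2 g(\bar l) q(\bar l) = 0$, and at $b=0$ this reduces to $p(\bar l)=0$ since $h>0$. By the uniqueness of the critical point of $\tilde f$ for $b \in [0,2/(1+\rho)]$ established (numerically) in the previous paragraphs, $\bar l(0,\rho)$ is therefore the unique zero of $p = \eta j_2'-2\eta' j_2$ in $]l_2(\rho),\infty[$. Passing to $x = l/\sqrt{l^2+1}$ turns the problem into locating the unique zero of the expression in \cref{eqnump} in the interval $]x_2(\rho),1[$. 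The boundary cases follow by direct substitution: at $\rho=0$, after factoring out $x$ and isolating the radical, the root of \cref{eqnump} in $]\sqrt 3/2,1[$ comes out to $\bar x = \sqrt{\sqrt 6 - 3/2}$, whence $\bar l(0,0) = \bar x/\sqrt{1-\bar x^2} = \sqrt{9+4\sqrt 6}$; at $\rho=1$, the only admissible root is $x=1$, corresponding to $l=\infty$.

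For the general case $\rho \in \,]0,1[$, I would rely on the rewriting
\[\varphi(x) = -2u^2\sqrt{1-x^2}\Bigl[x\sqrt{1-x^2}\bigl(3\sqrt{1-\rho^2}+\sqrt{1-x^2}\bigr)+x+\rho\Bigr] + (x+\rho)^3 v\]
of the numerator in \cref{eqnump}, with $u = 1+\rho x+\sqrt{(1-x^2)(1-\rho^2)}$ and $v = \sqrt{1-x^2}+\sqrt{1-\rho^2}$. The key observations are $(u,v)(1)=(1+\rho,\sqrt{1-\rho^2})$ and $(u,v)(\rho)=(2,2\sqrt{1-\rho^2})$, which yield the clean evaluations $\varphi(1)=(1+\rho)^3\sqrt{1-\rho^2}>0$ and $\varphi(\rho)=-48\rho(1-\rho^2)^{3/2}<0$. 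By the intermediate value theorem, $\varphi$ has a zero in $]\rho,1[$, and combined with the domain constraint $x>x_2(\rho)$ inherited from $\tilde\phi$ blowing up at $x_2(\rho)$, this zero lies in $]x_2(\rho)\lor\rho,1[$. The conversion $\bar l(0,\rho) = \bar x(0,\rho)/\sqrt{1-\bar x(0,\rho)^2}$ then gives the stated formula.

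The main obstacle is \emph{uniqueness} in the stated range: the intermediate value theorem alone only delivers existence of a zero of $\varphi$. Uniqueness is inherited from the uniqueness of the critical point of $\tilde\phi(\cdot;b,\rho)$ established in the preceding paragraph, which itself rests on the monotonicity $\partial_{b^2} n < 0$ together with the numerical verification that $n\mid_{b=2/(1+\rho)}$ is positive on $]x_{m_2}(\rho=1),1[$ (see \Cref{FigureSSVIrhos}). Modulo this numerical input, the remaining ingredients are the algebraic manipulation yielding the compact form of $\varphi$ and the two boundary evaluations, both of which are routine expansions once one observes that $\sqrt{1-x^2}=0$ at $x=1$ and that $u,v$ admit particularly simple values at $x=\rho$.
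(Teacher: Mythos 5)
Your proof follows the paper's argument essentially verbatim: reduce the characteristic equation at $b=0$ to $p(\bar l)=0$, pass to the $x$-variable, handle $\rho=0$ and $\rho=1$ by direct substitution (your algebra checking $\bar x^2/(1-\bar x^2) = 9+4\sqrt 6$ at $\rho=0$ is correct), rewrite the numerator as $\varphi$ via $u,v$, and apply the intermediate value theorem using $\varphi(1)>0$ and $\varphi(\rho)\le 0$. Your explicit remark that the IVT gives only existence and that uniqueness is inherited from the (partly numerical) uniqueness of the critical point of $\tilde\phi$ is a slightly more careful rendering of what the paper leaves implicit, but the substance and structure of the argument are the same.
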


	Remember from \Cref{PrepInverseSVI} that the smile
\(SVI(a,b,-\rho,-m,\sigma)\) is arbitrage-free iff
the smile \(SVI(a,b,\rho,m,\sigma)\) is arbitrage-free. Furthermore, in the case of SSVI, we have the relation \(SVI(k;a,b,-\rho,-m,\sigma) = SVI(-k;a,b,\rho,m,\sigma)\) where
\(a=b\sigma\sqrt{1-\rho^2}\) and
\(m=-\sigma\frac{\rho}{\sqrt{1-\rho^2}}\). So an SSVI with \(\rho<0\) is
arbitrage-free iff the SSVI with parameter \(-\rho>0\) is
arbitrage-free.

\subsubsection{The Gatheral-Jacquier sufficient
conditions}\label{the-gatheral-jacquier-sufficient-conditions}

	How does the boundary in \Cref{lemmaSubDomainSSVI} compare with the
sufficient conditions found by Gatheral and Jacquier in Theorem 4.2 of
\cite{gatheral2014arbitrage}? The theorem asserts that an SSVI is free
of Butterfly arbitrage if \(\theta\varphi(1+|\rho|)<4\) and
\(\theta\varphi^2(1+|\rho|)\leq4\). In terms of the SVI parameters, we
have \(\theta\varphi = 2b\) and \(\varphi=\frac{\sqrt{1-\rho^2}}\sigma\)
so that the two conditions become \begin{align*}
&b(1+|\rho|) < 2,\\
&\sigma \geq\frac{b}{2}(1+|\rho|)\sqrt{1-|\rho|^2},
\end{align*} where the first is the strict Roger-Lee condition. Note
that when \(\sigma\geq\frac{2}{b}\sqrt{\frac{1-|\rho|}{1+|\rho|}}\) and
the Roger-Lee condition is satisfied, then the second condition is
automatically verified and the SSVI is free of arbitrage. Then, we can
simply look at the second condition, keeping in mind that the Roger-Lee
condition \(b(1+|\rho|) \leq 2\) must hold, being a necessary condition.

We can compare with some plots the Gatheral-Jacquier second sufficient
condition with the sufficient condition of \Cref{lemmaSubDomainSSVI} and
the necessary and sufficient condition \(\sigma\geq\sigma^*\). In the
first couple of graphs in \Cref{FigureSSVIGJLemma}, \(\rho\) is fixed while \(b\) ranges in
\(\bigl]0,\frac{2}{1+|\rho|}\bigr[\); in the second couple in \Cref{FigureSSVIGJLemma2}, \(b\) is
fixed and \(\rho\) ranges in \(\bigl]0,1\land\bigl(\frac{2}{b}-1\bigr)\bigr[\).

\begin{figure}
	\centering
	\includegraphics[width=.9\textwidth]{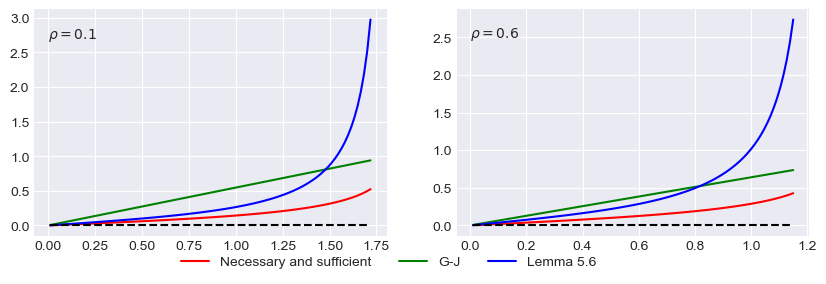}
	\caption{Comparison of Gatheral-Jacquier and \Cref{lemmaSubDomainSSVI} sufficient conditions as functions of $b$.}
	\label{FigureSSVIGJLemma}
\end{figure}

\begin{figure}
	\centering
	\includegraphics[width=.9\textwidth]{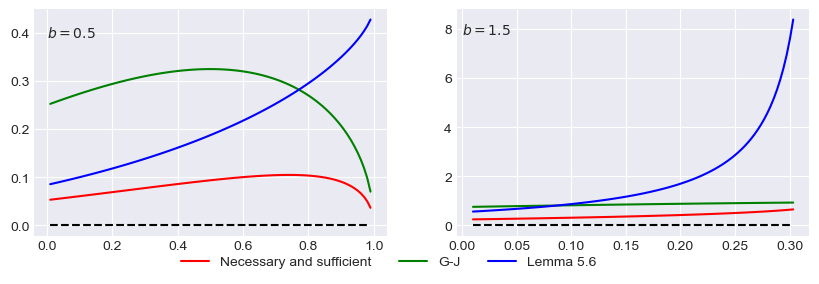}
	\caption{Comparison of Gatheral-Jacquier and \Cref{lemmaSubDomainSSVI} sufficient conditions as functions of $\rho$.}
	\label{FigureSSVIGJLemma2}
\end{figure}
    
	We can see that the sufficient condition of \Cref{lemmaSubDomainSSVI} is
weaker than the Gatheral-Jacquier's one as soon as \(b(1+|\rho|)\)
remains low, but when this quantity goes to \(2\), the former goes to
\(\infty\) since \(J_1(1)\) goes to \(0\).

\subsection{No arbitrage in the Long Term Heston SVI
approximation}\label{no-arbitrage-in-the-long-term-heston-svi-approximation}

In this section, we look at the Long Term Heston SVI introduced by
Gatheral and Jacquier in \cite{gatheral2011convergence}. The authors
prove that, under an appropriate hypothesis on the Heston parameters,
the Heston implied volatility converges to an SVI implied volatility
when the time-to-maturity goes to infinity.

In the Heston model, the underlying process \(S_t\) follows the
stochastic differential equation of the form \begin{align*}
	dS_t &= \sqrt{v_t}S_tdW_t, & S_0>0\\
	dv_t &= \tilde\kappa(\tilde\theta-v_t)dt + \tilde\sigma\sqrt{v_t}dZ_t, & v_0>0\\
	<W,Z>_t &= \tilde\rho dt.
\end{align*} We use the \(\sim\) hat to distinguish Heston parameters
from SVI and SSVI parameters.

Recall the canonical SSVI equation \cref{eqSSVIclassical}. The authors
show that setting \begin{align*}
	\theta &:= \frac{4\tilde\kappa\tilde\theta T}{\tilde\sigma^2(1-\tilde\rho^2)}\Bigl(\sqrt{(2\tilde\kappa-\tilde\rho\tilde\sigma)^2+\tilde\sigma^2(1-\tilde\rho^2)} - (2\tilde\kappa-\tilde\rho\tilde\sigma)\Bigr),\\
	\varphi &:= \frac{\tilde\sigma}{\tilde\kappa\tilde\theta T},
\end{align*} the Heston model converges to an SVI with parameters
\begin{align}\label{LTH}
	a = \frac{\theta}2 (1-\rho^2), & & b=\frac{\theta\varphi}{2}, & & \rho=\tilde\rho, & & m= -\frac{\rho}{\varphi}, & & \sigma=\frac{\sqrt{1-\rho^2}}{\varphi}.
\end{align} Observe that it holds \(\gamma=\sqrt{1-\rho^2}\) and
\(\mu=-\frac{\rho}{\sqrt{1-\rho^2}}=l^*\). In this way it is evident
that the Long Term Heston corresponds to a sub-SVI with three parameters
and, in particular, it corresponds to an SSVI!

We already know from \Cref{LemmaFukSSVI} that an SSVI always satisfies
the Fukasawa conditions. The only requirement needed for a no arbitrage
parametrization is \(\sigma>\sigma^*(b,\rho)\). Then, a Long Term Heston
is free of arbitrage iff it satisfies \cref{PropSSVI}. Since \(\sigma\)
depends linearly to \(T\), indeed
\(\sigma=\frac{\tilde k\tilde\theta\sqrt{1-\rho^2}}{\tilde\sigma}T\), it
increases with \(T\). On the other hand, \(b\) and \(\rho\) are constant
with respect to \(T\) so \(\sigma^*(b,\rho)\) is also constant and
\(\sigma\) will eventually become greater than the arbitrage bound for
\(T\) going to \(\infty\). Then, choosing \(T\) large enough will grant
no arbitrage in the SSVI Long Term Heston parametrization.

This was intuitively true, since the SSVI Long Term Heston approximates
the Heston smile, which is itself arbitrage-free; now the convergence in
the proof of Gatheral and Jacquier is pointwise, so the no arbitrage
property at a fixed \(T\) was still to be proven.

Let us try now to compute a lower bound for \(T\) which grants no
arbitrage.

Suppose \(\rho\geq0\), then from \Cref{RemarkSSVI}, \(f\) reaches its
supremum on the right of \(l_2\), the second zero of \(g_2\). The
requirement \(\sigma\geq -\frac{bg_2(l)}{2(h^2(l)-b^2g^2(l))}\) for
every \(l>l_2\) can be rewritten substituting \(\sigma\) with its
expressions in terms of the Heston parameters, as
\begin{align}\label{eqTHeston}
	T\geq A \sup_{l>l_2}-\frac{g_2(l)}{h^2(l)-b^2g^2(l)}
\end{align}
with
\begin{align*}
	A &:= \frac{b\tilde\sigma}{2\tilde\kappa\tilde\theta\sqrt{1-\rho^2}}=\bigl(\sqrt{(2\tilde\kappa-\rho\tilde\sigma)^2+\tilde\sigma^2(1-\rho^2)} - (2\tilde\kappa-\rho\tilde\sigma)\bigr)/\bigl(\tilde\kappa\tilde\theta(1-\rho^2)^\frac{3}{2}\bigr),\\
	b &= \frac{2}{\tilde\sigma(1-\rho^2)}\Bigl(\sqrt{(2\tilde\kappa-\rho\tilde\sigma)^2+\tilde\sigma^2(1-\rho^2)} - (2\tilde\kappa-\rho\tilde\sigma)\Bigr).
\end{align*}
Note than the function of which we look for the supremum does only
depend on \(\rho\) and on $b$.

The numerator attains its minimum at the locus of the unique minimum of
\(g_2\) on the right of \(l_2\), that we denoted \(m_2(\rho)\). We have
proven that for SSVI, \(h\) is a decreasing function so it is always
greater than \(h(\infty)=\frac{1}{2}\), while \(g\) is increasing with
limit \(g(\infty) = \frac{(1+\rho)}{4}\). In this way, for
\(T\geq -\frac{8b\tilde\sigma g_2(m_2(\rho))}{\tilde\kappa\tilde\theta\sqrt{1-\rho^2}(4-b^2(1+\rho)^2)}\),
the inequality \cref{eqTHeston} is satisfied.

In the case \(\rho<0\), from \Cref{PrepInverseSVI} it follows that the
previous discussion still holds substituting \(\rho\) with \(-\rho\).

We summarize our findings in the following:

\begin{proposition}[No arbitrage sub-domain for the Long Term Heston SVI]\label{propLTH}
	
	The Long Term Heston SVI approximation defined by \cref{LTH} is an SSVI. There is no Butterfly arbitrage in the Long Term Heston SVI approximation as soon as
	$$T \geq -\frac{8b\tilde\sigma g_2(m_2(|\rho|),|\rho|)}{\tilde\kappa\tilde\theta\sqrt{1-\rho^2}(4-b^2(1+|\rho|)^2)}$$
	where $m_2(\rho)$ is the only positive point of minimum of $g_2(\cdot,\rho)$ and
	$$b = \frac{2}{\tilde\sigma(1-\rho^2)}\Bigl(\sqrt{(2\tilde\kappa-\rho\tilde\sigma)^2+\tilde\sigma^2(1-\rho^2)} - (2\tilde\kappa-\rho\tilde\sigma)\Bigr).$$
	
\end{proposition}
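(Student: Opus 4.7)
The plan is to combine three ingredients: (i) identify the Long Term Heston SVI as an SSVI; (ii) invoke the SSVI no-arbitrage characterization to turn the problem into an inequality on $\sigma$; (iii) substitute the explicit $T$-dependence of $\sigma$ and bound the right-hand side by a closed-form expression to produce the advertised threshold.

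First I would verify the SSVI identification. Using $\gamma = a/(b\sigma)$ and $\mu = m/\sigma$ with the Long Term Heston parameters in \cref{LTH}, a direct computation gives $\gamma = \sqrt{1-\rho^2}$ and $\mu = -\rho/\sqrt{1-\rho^2} = l^*$, which are exactly the defining constraints of the SSVI family. Consequently, \Cref{LemmaFukSSVI} applies and the Fukasawa conditions are automatically satisfied, so the only remaining requirement is $\sigma \geq \sigma^*(b,\rho)$.

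Next I would treat the case $\rho \geq 0$ first. By \Cref{PropSSVI} (and \Cref{RemarkSSVI}) the supremum defining $\sigma^*$ is attained on $]l_2,\infty[$, so the no-arbitrage condition reads
\[
\sigma \geq \sup_{l>l_2} -\frac{b\,g_2(l)}{2\bigl(h(l)^2 - b^2 g(l)^2\bigr)}.
\]
Substituting $\sigma = \tilde\kappa\tilde\theta\sqrt{1-\rho^2}\,T/\tilde\sigma$ gives \eqref{eqTHeston}. To make the bound explicit I would separately control the numerator and denominator over $l>l_2$: the numerator satisfies $-g_2(l) \leq -g_2(m_2(\rho))$ by the uniqueness of the minimum $m_2$ of $g_2$ on $]l_2,\infty[$ established in the SSVI section; for the denominator, I would use that $h$ is decreasing with $h(\infty)=\frac{1}{2}$ and $g$ is increasing with $g(\infty)=\frac{1+\rho}{4}$, which yields $h(l)^2 - b^2 g(l)^2 \geq \frac{4-b^2(1+\rho)^2}{16}$, strictly positive exactly under the strict Roger-Lee condition $b(1+\rho)<2$. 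Combining these two estimates produces the sufficient threshold in the statement.

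Finally, the case $\rho<0$ reduces to $\rho>0$ via \Cref{PrepInverseSVI}: inverting the smile flips the sign of $\rho$ and $m$ without affecting the Butterfly-arbitrage property, so the same bound applies with $\rho$ replaced by $|\rho|$. The main (mild) obstacle is verifying that the simultaneous bounds on numerator and denominator do not overshoot badly, but since $h(\infty)$ and $g(\infty)$ are genuine extrema and $m_2$ is the true minimum of $g_2$, the resulting condition is a valid sufficient condition; sharpness is not claimed here, only an explicit $T$-threshold guaranteeing no Butterfly arbitrage, which is what the proposition asserts.
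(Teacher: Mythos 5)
Your proposal is correct and follows essentially the same path as the paper's own proof: identify the Long Term Heston parameters as SSVI via $\gamma=\sqrt{1-\rho^2}$, $\mu=l^*$, invoke \Cref{LemmaFukSSVI} for the Fukasawa conditions, reduce to the $\sigma$-threshold, bound the numerator of the supremand by $-g_2(m_2)$ and the denominator by $\bigl(4-b^2(1+\rho)^2\bigr)/16$ using the monotone limits of $h$ and $g$, and finally pass to $\rho<0$ via \Cref{PrepInverseSVI}. The paper makes the same substitution $\sigma = \tilde\kappa\tilde\theta\sqrt{1-\rho^2}\,T/\tilde\sigma$ and the same two elementary estimates, so there is nothing to add or flag.
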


Observe that this Proposition corresponds to \Cref{lemmaSubDomainSSVI}
using the Long Term Heston notations.

\newpage
\appendix\section{Numerical proof of the uniqueness of the
critical point of $\tilde f$ in
SSVI}\label{numerical-proof-of-the-uniqueness-of-the-critical-point-of-tilde-f-in-ssvi}

	As discussed in
\cref{uniqueness-of-the-critical-point-of-tilde-f-for-ssvi}, to prove
the uniqueness of the critical point of \(\tilde f\) for SSVI, it is
enough to show that the function
\(n = J_1\frac{d^2j_2}{dx^2}-\frac{d^2 J_1}{dx^2}j_2\) is positive for
every \(x>x_{m_2}(\rho=1)=\frac{2+\sqrt{10}}{6}\) setting
\(b=\frac{2}{1+\rho}\).

The check consists into evaluating the target function at \(\rho\)
spanning from \(0\) to \(0.999\) and \(x\) spanning from
\(\frac{2+\sqrt{10}}{6}\) to \(0.999\), choosing for each variable
\(1000\) points between the extrema. The algorithm is the following:

\begin{lstlisting}[language=Python, basicstyle=\footnotesize]
def N_fun(x, rho): return (1.+rho*x)/np.sqrt(1.-x**2)+np.sqrt(1.-rho**2)

def N1_fun(x, rho): return x+rho

def N2_fun(x): return (1.-x**2)**(3./2.)

def N3_fun(x): return -3.*x*(1-x**2)**2

def N4_fun(x): return 3.*(5.*x**2-1.)*(1.-x**2)**(5./2.)


def h_fun(x, rho): return (1. + np.sqrt((1.-x**2)/(1.-rho**2)))/2.

def hder_fun(x, rho): return -x/(2.*np.sqrt(1.-rho**2)*np.sqrt(1.-x**2))

def g_fun(x, rho): return N1_fun(x,rho)/4.

def gder_fun(): return 1./4.

def hderder_fun(x, rho): return -1./(2.*np.sqrt(1.-rho**2)*(1.-x**2)**(3./2.))
    
def g2_fun(x, rho): return N2_fun(x)-N1_fun(x,rho)**2/(2.*N_fun(x,rho))

def g2derder_fun(x, rho):

    return (x*((rho + x)**3 + 2*(x**2 - 1)*(rho + 3*x*(rho*x + np.sqrt(1 - rho**2)*\\
    np.sqrt(1 - x**2) + 1) + x)*(rho*x + np.sqrt(1 - rho**2)*np.sqrt(1 - x**2) + 1))*\\
    (rho*x + np.sqrt(1 - rho**2)*np.sqrt(1 - x**2) + 1) - 2*np.sqrt(1 - x**2)*\\
    (rho*np.sqrt(1 - x**2) - x*np.sqrt(1 - rho**2))*((rho + x)**3 + 2*(x**2 - 1)*\\
    (rho + 3*x*(rho*x + np.sqrt(1 - rho**2)*np.sqrt(1 - x**2) + 1) + x)*\\
    (rho*x + np.sqrt(1 - rho**2)*np.sqrt(1 - x**2) + 1)) + np.sqrt(1 - x**2)*\\
    (rho*x + np.sqrt(1 - rho**2)*np.sqrt(1 - x**2) + 1)*(np.sqrt(1 - x**2)*(4*x*\\
    (rho + 3*x*(rho*x + np.sqrt(1 - rho**2)*np.sqrt(1 - x**2) + 1) + x)*\\
    (rho*x + np.sqrt(1 - rho**2)*np.sqrt(1 - x**2) + 1) + 3*(rho + x)**2) + 2*(x**2 - 1)*\\
    (rho*np.sqrt(1 - x**2) - x*np.sqrt(1 - rho**2))*(rho + 3*x*(rho*x + np.sqrt(1 - rho**2)*\\
    np.sqrt(1 - x**2) + 1) + x) + 2*(x**2 - 1)*(3*x*(rho*np.sqrt(1 - x**2) -\\
    x*np.sqrt(1 - rho**2)) + np.sqrt(1 - x**2)*(3*rho*x + 3*np.sqrt(1 - rho**2)*\\
    np.sqrt(1 - x**2) + 4))*(rho*x + np.sqrt(1 - rho**2)*np.sqrt(1 - x**2) + 1)))/\\
    (2*np.sqrt(1 - x**2)**3*(rho*x + np.sqrt(1 - rho**2)*np.sqrt(1 - x**2) + 1)**3)


def n_fun(x, b, rho):
    
    h = h_fun(x,rho)
    g = g_fun(x,rho)
    
    return -2*(hder_fun(x,rho)**2+h*hderder_fun(x,rho)-b**2*gder_fun()**2)*g2_fun(x,rho) +\\
    (h**2-b**2*g**2)*g2derder_fun(x,rho)

def n_fun2(x, rho):
    
    return n_fun(x,2./(1.+rho),rho)
    
    
def check_unicity():
    
    rho_check = np.linspace(0.,0.999,1000)
    x_check = np.linspace((2.+np.sqrt(10.))/6.,0.999,1000)

    rho_v, x_v = np.meshgrid(rho_check,x_check)

    x =  np.sum([n_fun2(x_v,rho_v)<0.])
    
    if x == 0.: return 'There is unicity'
    else: return 'No unicity'
\end{lstlisting}

	The result is that at the chosen points, the function is positive,
indeed the command

\begin{lstlisting}[language=Python, basicstyle=\footnotesize]
check_unicity()
\end{lstlisting}

returns:

            \begin{Verbatim}[commandchars=\\\{\}]
 'There is unicity'
\end{Verbatim}


\newpage \bibliography{ImpliedVol}

\begin{thebibliography}{1}

\bibitem{fukasawa2012normalizing}
M.~Fukasawa.
\newblock {The normalizing transformation of the implied volatility smile}.
\newblock {\em Mathematical Finance}, 22(4):753--762, 2012.

\bibitem{gatheral2011convergence}
J.~Gatheral and A.~Jacquier.
\newblock {Convergence of Heston to SVI}.
\newblock {\em Quantitative Finance}, 11(8):1129--1132, 2011.

\bibitem{gatheral2014arbitrage}
J.~Gatheral and A.~Jacquier.
\newblock {Arbitrage-free SVI volatility surfaces}.
\newblock {\em Quantitative Finance}, 14(1):59--71, 2014.

\bibitem{guo2016generalized}
G.~Guo, A.~Jacquier, C.~Martini, and L.~Neufcourt.
\newblock {Generalized arbitrage-free SVI volatility surfaces}.
\newblock {\em SIAM Journal on Financial Mathematics}, 7(1):619--641, 2016.

\bibitem{hendriks2017extended}
S.~Hendriks and C.~Martini.
\newblock {The Extended SSVI Volatility Surface}.
\newblock {\em Journal of Computational Finance}, 22(5), 2019.

\bibitem{martini2020no}
Claude Martini and Arianna Mingone.
\newblock No arbitrage svi.
\newblock {\em arXiv preprint arXiv:2005.03340}, 2020.

\bibitem{tehranchi2020black}
M.~R. Tehranchi.
\newblock {A Black-Scholes inequality: applications and generalisations}.
\newblock {\em Finance and Stochastics}, 24(1):1--38, 2020.

\end{thebibliography}
\bibliographystyle{plain}

\end{document}